\tikzset{>=spaced stealth'}
\newcommand{\notelr}[1]{}
\newcommand{\notesl}[1]{}
\newcommand{\dimx}{{d}}
\newcommand{\nconstraints}{{m}}
\newcommand{\listslength}{{\Phi}}
\newcommand{\sizebound}{{\Delta}}
\newcommand{\size}{\mathrm{size}}
\newtheorem{theorem}{Theorem}[section]
\newtheorem{claim}[theorem]{Claim}
\newtheorem{definition}[theorem]{Definition}
\newtheorem{lemma}[theorem]{Lemma}
\renewcommand{\epsilon}{\varepsilon}
\newcommand{\R}{\mathbb{R}}
\newcommand{\RR}{\mathbb{R}_{\ge 0}}
\newcommand{\Z}{\mathbb{Z}}
\newcommand{\ZZ}{\mathbb{Z}_{\ge 0}}
\newcommand{\N}{\mathbb{N}}
\newcommand{\capa}{{\mathrm{cap}}}
\newcommand{\rmT}{{\mathrm{T}}}
\newcommand{\ceil}[1]{\left\lceil #1\right\rceil}
\newcommand{\floor}[1]{\left\lfloor #1\right\rfloor}
\newcommand{\poly}{\mathrm{poly}}
\newcommand{\opt}{\mathrm{opt}}
\newcommand{\E}{\mathbb{E}}
\mathchardef\mhyphen="2D
\newcommand{\base}{\mathrm{base}}
\newcommand{\bfl}{\mathbf{l}}
\newcommand{\conv}{\mathtt{conv}}
\newcommand{\cost}{\mathrm{cost}}
\newcommand{\pack}{\mathrm{pack}}
\newcommand{\ellv}{\bm{\ell}}
\newcommand{\OPT}{\mathrm{OPT}}
\newcommand{\LP}{\mathbf{\mathrm{LP}}}
 \newcommand*{\T}{^{\mkern-1.5mu\mathsf{T}}}
\newcommand{\lars}[1]{\textcolor{red}{Lars: #1}}
\newcommand{\bun}[1]{\textcolor{teal}{Bun: #1}}
\title{Randomized Rounding over Dynamic Programs}
\author{\'Etienne Bamas\thanks{EPFL. Part of this work was done while the author was a Post-Doctoral Fellow at the ETH AI Center. Email: \texttt{etienne.bamas@epfl.ch}} \and Shi Li\thanks{School of Computer Science, Nanjing University. The work is supported by the State Key Laboratory for Novel Software Technology, and the New Cornerstone Science Foundation. Email: \texttt{shili@nju.edu.cn}} \and Lars Rohwedder\thanks{University of Southern Denmark. Email: \texttt{rohwedder@sdu.dk}}}
\date{}
\begin{document}

\maketitle

\begin{abstract}
We show that under mild assumptions for a problem whose solutions
admit a dynamic programming-like recurrence relation, we
can still find a solution under additional
packing constraints, which need to be satisfied approximately. The number
of additional constraints can be very large, for example,
polynomial in the problem size.

Technically, we reinterpret the dynamic programming subproblems
and their solutions as a network design problem. Inspired
by techniques from, for example, the Directed Steiner Tree problem,
we construct a strong LP relaxation, on which
we then apply randomized rounding.


Our approximation guarantees on the packing constraints have roughly the form of a
$(n^{\epsilon} \mathrm{polylog}\ n)$-approximation in time $n^{O(1/\epsilon)}$, for any $\epsilon > 0$. By setting $\epsilon=\log \log n/\log n$, we obtain a polylogarithmic approximation in quasi-polynomial time, or by setting $\epsilon$ as a constant, an $n^\epsilon$-approximation in polynomial time.

While there are necessary assumptions on the form of the DP,
it is general enough to capture many textbook dynamic programs from Shortest Path to Longest Common Subsequence. Our algorithm
then implies that we can impose additional constraints on the solutions
to these problems.
This allows us to model various problems from the literature in approximation algorithms, many of which were not thought to be connected to dynamic programming. In fact,
our result can even be applied indirectly to some problems that
involve covering instead of packing constraints, for example, the Directed Steiner Tree problem, or those that do not directly
follow a recurrence relation, for example, variants of the Matching problem.

Specifically, we recover state-of-the-art approximation algorithms for Directed Steiner Tree and Santa Claus, and generalizations of them. We obtain new results for a variety of challenging optimization problems, such as Robust Shortest Path, Robust Bipartite Matching, Colorful Orienteering, Integer Generalized Flows, and more.

\end{abstract}

\section{Introduction}

\emph{Dynamic Programming} (DP) is a fundamental algorithmic principle that applies to problems that can be broken into smaller subproblems recursively, such that only a bounded number of \emph{different} subproblems can occur throughout the entire recursion tree. Then one
can solve the subproblems iteratively, which is the core idea of dynamic programming. Many textbook problems are solved in this way. Another important approach in algorithm design is the concept of \emph{Randomized Rounding}, which is widely used to transform a fractional solution to a linear programming (LP) relaxation into an actual solution. In this paper, we combine these two concepts into one framework. 
With this framework at hand, a great number of notorious problems in
approximation algorithms
can be solved by writing recurrences, which are sometimes
as easy as textbook dynamic programs.

We call the type of DP problems we consider \textit{Additive-DPs}, which we define in the following.
Since we will later apply randomized rounding, we linearize
our problems, namely, 
we assume that all solutions are encoded as non-negative
integer vectors $x\in\ZZ^\dimx$ and we want to minimize (or maximize) some linear objective $c\T x$ for some given $c\in\R^\dimx$.
There is a bounded family of subproblems $\mathcal I$, including the root problem $I^\circ$ itself.
Between the subproblems, there is a partial order $\prec$.

Each $I \in \mathcal I$ is associated with some set of feasible solutions $S(I) \subseteq \ZZ^\dimx$.
There is a set of subproblems $\mathcal {I}^\base\subseteq \mathcal I$ that we
call base problems.
For every base problem $I\in \mathcal{I}^\base$, we have either $S(I) = \{x^{(I)}\}$ for some given vector $x^{(I)} \in \ZZ^\dimx$ or $S(I) = \emptyset$. The latter indicates infeasibility. For every non-base problem $I \in \mathcal I \setminus \mathcal I^\base$
the solution set $S(I)$ is not given explicitly, but instead defined recursively as follows.
We can branch on some decision $C\in \{1,2,\dotsc, k_I\}$ for some $k_I \in \Z_{>0}$. Based on $I$ and $C$ we can decompose $I$ into a fixed
part $x^{(I, C)} \in \ZZ^\dimx$ and a recursive part corresponding to
$\ell(I,C)\ge 1$ subproblems $\Lambda_1(I, C),\dotsc,\Lambda_{\ell(I, C)}(I,C) \in \mathcal I$,
where $\Lambda_i(I,C)\prec I$ for all $i\in [\ell(I,C)]$.
The solution set for $I$ is
\begin{equation*}
    S(I) := \bigcup_{C = 1}^{k_I} \{x^{(I, C)}\} \oplus S(\Lambda_1(I, 1))\oplus \cdots \oplus S(\Lambda_{\ell(I,C)}(I, C)) \ ,
\end{equation*}
where $\oplus$ is the sumset operator, that is, $A\oplus B = \{a + b : a\in A, b\in B\}$.

A standard dynamic program to find
$x\in S(I^\circ)$ minimizing $c\T x$ would work as follows:
Starting with the base cases and then
following the order of $\prec$ it would compute for each subproblem
$I \in \mathcal I$ a solution $z(I)\in S(I)$ minimizing $c\T z(I)$.
Here, solutions to non-base cases are derived using
the recurrence above and
the fact that we can assume that the decomposed solutions
are minimizers of their respective subproblems.

A solution may directly or indirectly contain several solutions
to the same subproblem.
In a dynamic programming algorithm as above, one would always
obtain consistent solutions to these same subproblems. We
emphasize that in our semantics, we do not require this to be the
case. If for example $S(I)$ contains $S(I') \oplus S(I')$ then clearly $2x\in S(I)$ for all $x\in S(I')$, but we also have $x+x'\in S(I)$ for $x,x'\in S(I')$ with
$x\neq x'$. This is motivated by the fact that we will later add more constraints on top of the Additive-DP problem, in which case it is not clear anymore that the optimum solution should always take the same solution for many occurrences of the same subproblem. 

Consider now that, in addition to the recursive solution structure,
we need to satisfy
highly non-decomposable constraints. Namely, assume that we have packing constraints $Ax\le (1,\dotsc,1)\T$
for a given matrix $A\in [0,1]^{\nconstraints\times \dimx}$. We assume there is an optimal solution that satisfies these constraints. We are looking for a feasible solution to the Additive-DP problem, which is an $\alpha$-approximation with respect to the extra packing constraints. More precisely, we want a solution $x$ that satisfies  $Ax \le (\alpha,\dotsc,\alpha)\T$.

So, the Additive-DP problem is formally defined as follows:\footnote{Note that this definition no longer contains the option of infeasible base cases. Although they can be useful for simple modeling of problems, they can easily be removed in preprocessing and therefore we omit them for easier notation.}
\begin{tcolorbox}[colback=gray!10!white,colframe=gray!50!black,title=The Additive-DP Problem]
\textbf{Input:} Problems $\cal I$, root problem $I^\circ \in \cal I$, base problems $\mathcal{I}^\base \subseteq \mathcal{I} \setminus \{I^\circ\}$, relation $\prec$, \\[3pt]
\hspace*{34pt} $x^{(I)} \in \ZZ^d$ for every $I \in \mathcal{I}^\base$, \\[3pt]
\hspace*{34pt} $k_I \in \Z_{>0}$ for every $I \in \mathcal{I} \setminus \mathcal{I}^\base$,\\[3pt]
\hspace*{34pt} $x^{(I, C)} \in \ZZ^d, \ell(I, C) \in \Z_{\ge 1}$ for every $I \in \mathcal{I} \setminus \mathcal{I}^\base, C \in [k_I]$,\\[3pt]
\hspace*{34pt} $\Lambda_{i}(I, C)$ for every $I \in \mathcal{I} \setminus \mathcal{I}^\base, C \in [k_I], i \in [\ell(I, C)]$,\\[3pt]
\hspace*{34pt} $A \in [0, 1]^{\nconstraints \times \dimx}, c \in \R^{\dimx}$.\\[6pt]
\textbf{Output:} the solution $x \in S(I^\circ)$ satisfying $Ax \leq {\bf1}$ with the minimum $c^\rmT x$
\end{tcolorbox}


The main restriction of our model is that solutions are computed
using one large branching and the combination of subproblems
is via vector addition. In general dynamic programs,
one could imagine also more complicated circuits that compute
new solutions from previous ones.
Still, many textbook examples of dynamic programming, such as
the Knapsack problem, Shortest Path, or Longest Common Subsequence, can be cast in the framework above.
For instance, Shortest Path can be phrased as Additive-DP as follows. 

Without loss of generality, we assume the input graph $G = (V, E)$ is a DAG. A general graph can be converted to a DAG by creating $n$ layers of vertices, without changing the problem. Our goal is to find the shortest path from $s$ to $t$ in $G$, w.r.t.~the edge costs $c \in \R_{\geq 0}^E$.  We have one subproblem $I(v)$ for each vertex $v\in V$, with $I^\circ := I(s)$ being the root problem. The solutions are vectors in $\{0, 1\}^E$, with $S(I(v)):=\{\mathbb{1}_p: p\text{ is a path from $v$ to $t$}\}$ for each $v \in V$, where $\mathbb{1}_p$ is the indicator vector for the edges in $p$.  The relation $\prec$ is defined by the topological order of the vertices in $G$. The solutions can be defined recursively as follows. For the subproblem $I(t)$, we have $S(I(t)) = \{\bf0\}$. For every vertex $v \neq t$, we have 
 \begin{align*}
     S(I(v)) = \bigcup_{(v, u) \in \delta^+(v)} S(I(u)) \oplus \{\mathbb{1}_{(v, u)}\}, \text{ where $\mathbb{1}_{(v, u)}$ is the indicator vector for the edge $(v, u)$.}
 \end{align*}
Finding the shortest $s$-$t$ path is equivalent to finding $x\in S(I(s))$ minimizing $c\T x$. On top of this, we can define packing constraints on the $s$-$t$ path we choose, leading to the Robust $s$-$t$ Path Problem, which we discuss soon.  \medskip

If the number $\nconstraints$ of constraints is a constant or at most logarithmic, then one can hope to encode
the additional constraints in the subproblems of the DP and still
solve it (approximately) using pure dynamic programming. This approach can be seen for example in the recent work by Armbruster, Grandoni, Tinguely, and Wiese~\cite{armbr2026dp}, which is also on a general model for dynamic programs to which constraints are added. Their focus is on a logarithmic number of added constraints.
If $m$ is polynomially large, however, such approaches would lead
to an exponential blow-up of the number of subproblems; hence
dynamic programming is no longer feasible.

Inspired by techniques from network design problems, we present
reductions that simplify the structure of the dynamic programs. We then construct a non-trivial LP relaxation for optimizing over
the simplified DP structure while maintaining additional packing constraints. Before describing our result formally, we need to introduce the notion of a solution size, which will affect the running time and which we denote by $\size_I(x)$ for every $I \in \mathcal{I}$ and $x \in S(I)$. 
Intuitively, we can define a recursion tree for $x \in S(I)$ showing how $x$ is obtained, and $\size_I(x)$ is the minimum possible number of vertices over any such tree.
For a base problem $I$, we have $\size_I(x^{(I)}) = 1$. For any non-base problem $I$ and $x \in S(I)$, $\size_I(x)$ is defined as the minimum of $1 + \size_{\Lambda_1(I, C)}(z^{(1)}) + \size_{\Lambda_2(I, C)}(z^{(2)}) + \cdots + \size_{\Lambda_{\ell(I, C)}(I, C)}(z^{(\ell(I, C))})$, over all $C \in [k_I], z^{(1)} \in S(\Lambda_1(I, C)), z^{(2)} \in S(\Lambda_2(I, C)), \cdots, z^{(\ell(I, C))} \in S(\Lambda_{\ell(I, C)}(I, C))$ 
satisfying $x^{(I, C)} + z^{(1)}+z^{(2)}+\cdots+z^{(\ell(I, C))} = x$. 

\begin{theorem}
\label{thm:main}
 Suppose we are given an Additive-DP instance. Let 
 $\listslength := |\mathcal{I}_\base| + \sum_{I \in \mathcal{I}\setminus \mathcal{I}^\base, C \in [k_I]} (\ell(I, C)+1)$. Suppose we are given a promised upper bound $\sizebound$  on $\size_{I^\circ}(x^*)$ for the optimum solution $x^* \in S(I^\circ)$. 
 
Let $\epsilon > 0$. For some $\alpha = O\left(\frac{\sizebound^{\epsilon}}{\epsilon^2}\cdot \log \nconstraints\right) $, 
we can in time $(\sizebound\listslength)^{O(1/\epsilon)} \poly(\listslength,\sizebound,\nconstraints,\dimx)$ find a solution $x\in S(I)$ with $Ax \le \alpha \cdot {\bf1}$ and $c\T x \le \opt$, where $\opt$ is the optimum cost of the Additive-DP instance. When $c = \bf0$ and thus our goal is to find any $x\in S(I^\circ)$ satisfying the packing constraints, we can make $\alpha = O\left(\frac{\sizebound^{\epsilon}}{\epsilon}\cdot \log \nconstraints\right)$.
\end{theorem}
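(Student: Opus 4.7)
The plan is to reinterpret every solution $x\in S(I^\circ)$ as arising from an \emph{expansion tree}: a rooted tree whose internal nodes are pairs $(I,C)$ with children $\Lambda_1(I,C),\ldots,\Lambda_{\ell(I,C)}(I,C)$ and whose leaves are base problems, so that $x$ equals the sum of $x^{(I,C)}$ over the internal nodes plus $x^{(I)}$ over the leaves; the bound $\size_{I^\circ}(x^*)\le\sizebound$ is precisely a bound on the number of nodes. The Additive-DP problem thus becomes a ``tree-design'' problem: find a minimum-cost expansion tree of size at most $\sizebound$ whose aggregated vector satisfies $Ax\le\mathbf{1}$. A first LP relaxation would have one variable per pair $(I,C)$ and per base problem, with flow-conservation constraints linking the expected number of times each subproblem is demanded by its parents to its supply, plus the linearized packing constraints. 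Specialized to a \steiner instance this is the standard flow LP, whose integrality gap is $\Omega(\sqrt{\sizebound})$, so we must strengthen it.

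Following the template of Zelikovsky and Charikar--Chekuri--Cheung--Dai--Goel--Guha--Li for \steiner in quasi-polynomial time, I would organize the expansion tree into $\lceil 1/\epsilon\rceil$ \emph{layers} via a balancing argument: since the optimum tree has at most $\sizebound$ nodes, it can be sliced horizontally so that each layer consists of rooted subtrees of size at most $\sizebound^\epsilon$. The strong LP then enumerates, for every subproblem and every ``chunk'' (a rooted partial expansion used in one layer), a dedicated variable; gluing constraints ensure consistency across layers, and the packing constraints are enforced globally. The total number of distinct chunks and variables is $(\sizebound\listslength)^{O(1/\epsilon)}$, which accounts for the stated running time. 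The feasibility of the LP with fractional cost $\le\opt$ follows by plugging in the optimum expansion tree, sliced into the $O(1/\epsilon)$ layers.

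The rounding is iterative and layer by layer: initialize the set of open subproblems to $\{I^\circ\}$, and at each layer sample one chunk per open subproblem according to the conditional LP marginals, commit the sampled chunks, and replace the open subproblems by the new open leaves of the chunks. Expected cost and expected packing load coincide with their LP values; standard pessimistic-estimator derandomization for the linear objective then yields $c\T x\le\opt$ in every realization. The main obstacle is controlling the packing violation across layers. Within one layer, the chunks are (conditionally) independent with expected per-constraint load at most $O(1/\epsilon)$ and individual contribution at most $\sizebound^\epsilon$ (the size of a chunk); a Chernoff/Hoeffding bound with scale $\sizebound^\epsilon$ and a union bound over the $\nconstraints$ constraints give a per-layer overflow of $O(\sizebound^\epsilon \log \nconstraints / \epsilon)$, and summing across the $O(1/\epsilon)$ layers yields $\alpha=O(\sizebound^\epsilon \log \nconstraints / \epsilon^2)$. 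When $c=\mathbf{0}$, the pessimistic-estimator step is not needed, saving one factor of $1/\epsilon$ and giving $\alpha=O(\sizebound^\epsilon \log \nconstraints / \epsilon)$. The delicate point is that samples in later layers depend on earlier outcomes; this is resolved by applying the Chernoff analysis to a Doob martingale whose increments are bounded by $\sizebound^\epsilon$ and whose expected total load is controlled by the LP budget.
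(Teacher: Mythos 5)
Your high-level reinterpretation of Additive-DP solutions as recursion/expansion trees matches the paper's first step, but the core of your argument rests on a structural claim that is false and on a counting claim that does not hold. You assert that because the optimum expansion tree has at most $\sizebound$ nodes, it ``can be sliced horizontally so that each layer consists of rooted subtrees of size at most $\sizebound^\epsilon$'' with only $\lceil 1/\epsilon\rceil$ layers. A path-like recursion tree of size $\sizebound$ is a counterexample: if every chunk has at most $\sizebound^\epsilon$ nodes, a root-to-leaf path must cross $\sizebound^{1-\epsilon}$ chunks, not $O(1/\epsilon)$. This is exactly why the paper does not slice the original tree at all; it first performs a lossless height reduction via a separator-based decomposition into pieces with at most three ``portals'', re-encoding each piece as a meta-label $(\ell,s,D)$ carrying its root label, size, and portal multiset, so that \emph{every} valid tree of the new instance is balanced (height $O(\log\sizebound)$), and only then collapses $\epsilon H$ levels at a time into $1/\epsilon$ super-layers of branching $\sizebound^{O(\epsilon)}$. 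Moreover, even granting your layering, enumerating one LP variable per ``chunk'' (an arbitrary rooted partial expansion of size up to $\sizebound^\epsilon$) gives on the order of $\listslength^{\Theta(\sizebound^\epsilon)}$ variables, not $(\sizebound\listslength)^{O(1/\epsilon)}$; the paper keeps the LP at size $|L|^{O(1/\epsilon)}\poly$ by fixing the tree to be a perfect binary tree and replacing explicit chunk enumeration with a polynomial-size extended formulation (local ``triple'' variables) of the convex hull of super-layer labelings, with path-indexed variables only over super-layer vertices.

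The second genuine gap is the cost guarantee. You propose sampling layer by layer and then invoking ``standard pessimistic-estimator derandomization for the linear objective'' to get $c\T x\le\opt$ in every realization while keeping Chernoff-type concentration for the $\nconstraints$ packing constraints. These two goals conflict: conditioning the choices to keep the cost estimator non-increasing destroys the independence your concentration argument needs, and folding the cost into a combined pessimistic estimator only controls a weighted trade-off, not the hard constraint $c\T x\le\opt$. The paper resolves this with a dedicated semi-random rounding: fractional choices are paired within each group and updated by $\pm$ sign flips, where the sign vector can be globally negated whenever it would increase the cost; this keeps the cost non-increasing with probability $1$ while retaining enough randomness for a martingale-style concentration bound, and it is precisely this mechanism (applied over the $1/\epsilon$ super-layers) that accounts for the extra $1/\epsilon$ factor separating $\alpha=O(\sizebound^\epsilon\log \nconstraints/\epsilon^2)$ with costs from $O(\sizebound^\epsilon\log \nconstraints/\epsilon)$ without. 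Your per-layer ``overflow summed over layers'' accounting also does not match the no-cost analysis, which uses a recursive exponential-moment bound rather than a union bound over layers, but that is secondary to the two gaps above.
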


In many applications, all the parameters are  polynomially bounded in the input size $n$. Then our result implies a
polylogarithmic approximation in quasi-polynomial time by setting $\epsilon = \mathrm{loglog}(n)/\log(n)$, or a $O(n^{\epsilon})$-approximation
in $n^{O(1/\epsilon)}$-time by setting $\epsilon$ as a constant.
While our main theorem is on packing constraints and DP structure,
it can be applied indirectly to many other structures as well,
including certain covering problems, matching structures and more.
We will give an overview over the applications next.

\subsection{Applications to packing problems}

As a first application of Theorem \ref{thm:main} we give an algorithm to the Robust $s$-$t$ Path Problem. Indeed, this follows immediately from the subproblems and recurrence relation for Shortest Path that we outlined earlier. The result generalizes the work of \cite{li2024polylogarithmic}, which only gives the quasi-polynomial time polylogarithmic approximation for this problem, but no non-trivial approximation in polynomial time.
\begin{tcolorbox}[colback=gray!10!white,colframe=gray!50!black,title= The Robust $s$-$t$ Path Problem]
\textbf{Input:} A directed graph $G = (V, E)$ with edge costs $c \in \R_{\geq 0}^E$, start and destination vertices $s,t\in V$. \\[3pt]
\hspace*{32pt} $k$ length vectors $a^1,\dotsc,a^k \in [0, 1]^E$ on edges $E$. \\[3pt]
\textbf{Output:} A path $p$ in $G$ from $s$ to $t$ with $\sum_{e \in p}a^j_e\le 1$ for all $j \in [k]$, minimizing $\sum_{e \in p} c_e$.
\end{tcolorbox}
\begin{theorem}
    Let $\epsilon >0$. For some $\alpha = O(\frac{n^\epsilon}{\epsilon^2} \cdot \log k)$, we can in time $n^{O(1/\epsilon)}\cdot \poly(k)$ compute a path $p$ with $\sum_{e \in p}c_e \leq \opt$ and $\sum_{e \in p}a^j_e \leq \alpha$ for every $j \in [k]$, where $\opt$ is the optimum cost of a path that satisfies all the length requirements. The algorithm succeeds with high probability. 
\end{theorem}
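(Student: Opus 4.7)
The plan is to cast Robust $s$-$t$ Path directly as an Additive-DP instance and invoke Theorem~\ref{thm:main}. First, if $G$ is not already a DAG, I would convert it into a layered DAG $G'$ with $n$ copies of $V$ in layers $0, 1, \ldots, n-1$; each edge $(v,u) \in E$ appears between consecutive layers and inherits its cost and length entries, and I would add a zero-cost, zero-length self-loop on $t$ at every layer so that any simple $s$-$t$ path in $G$ of any length is represented by a walk of exactly $n-1$ steps from $(s, 0)$ to $(t, n-1)$. Then I would use exactly the Additive-DP modeling sketched in the excerpt for Shortest Path: one subproblem $I((v, i))$ per layered vertex; $I^\circ := I((s, 0))$; a single base case $I((t, n-1))$ with $S = \{\mathbf{0}\}$; a recurrence that branches over outgoing arcs with exactly one recursive child per branch; and ambient dimension $\dimx = |E|$ using indicator vectors over the \emph{original} edges, so distinct layered copies of an edge collapse and self-loops map to $\mathbf{0}$. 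The packing constraints are assembled by stacking $a^1, \ldots, a^k$ as rows of $A \in [0, 1]^{k \times |E|}$, giving $\nconstraints = k$.

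Next, I would tally the parameters required by Theorem~\ref{thm:main}. Any $x \in S(I^\circ)$ corresponds to an $s$-$t$ walk in $G'$ of exactly $n-1$ steps, so $\size_{I^\circ}(x) \le n$ and we may take $\sizebound = n$. The number of subproblems is $|V'| = O(n^2)$; each non-base subproblem has $k_I$ equal to its out-degree in $G'$ and exactly $\ell(I, C) = 1$ recursive child per branch; so $\listslength = |\mathcal{I}^\base| + \sum_{I, C}(\ell(I, C)+1) = O(|E'|) = O(n|E|) = O(n^3)$.

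Invoking Theorem~\ref{thm:main} with the given $\epsilon > 0$ then yields an $x \in S(I^\circ)$ with $c\T x \le \opt$ and $Ax \le \alpha \cdot \mathbf{1}$, where $\alpha = O\bigl(\sizebound^\epsilon/\epsilon^2 \cdot \log \nconstraints\bigr) = O\bigl(n^\epsilon/\epsilon^2 \cdot \log k\bigr)$. The vector $x$ encodes the multiplicity of each original edge in an $s$-$t$ walk $W$ in $G$; since $c, a^1, \ldots, a^k \ge \mathbf{0}$, any simple $s$-$t$ subpath $p \subseteq W$ satisfies $\sum_{e \in p}c_e \le c\T x \le \opt$ and $\sum_{e \in p} a^j_e \le \alpha$ for each $j$, and such $p$ can be extracted from $W$ in polynomial time by standard shortcutting. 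The running time $(\sizebound\listslength)^{O(1/\epsilon)}\poly(\listslength, \sizebound, \nconstraints, \dimx)$ evaluates to $n^{O(1/\epsilon)}\poly(n, k) = n^{O(1/\epsilon)}\poly(k)$, and the high-probability success inherits from the randomized rounding internal to Theorem~\ref{thm:main}.

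There is no genuine technical obstacle here, since the subproblem decomposition for Shortest Path is essentially spelled out in the excerpt and the parameters come out as expected. The one point that warrants care is the DAG reduction: I must ensure $\sizebound$ stays linear in $n$ rather than inflating to $|V'| = O(n^2)$, and the choice of a single base case at $(t, n-1)$ together with self-loops on $t$ is precisely what keeps every root-to-leaf decomposition exactly $n-1$ recursive steps deep.
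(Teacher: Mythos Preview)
Your proposal is correct and follows essentially the same approach as the paper: model Shortest Path as an Additive-DP with one subproblem per vertex after the standard layered-DAG reduction, add the $k$ length vectors as packing rows, and invoke Theorem~\ref{thm:main}. You spell out a few details the paper leaves implicit---the self-loop padding on $t$ so every recursion tree has depth exactly $n$, the explicit accounting of $\sizebound$ and $\listslength$, and the final shortcutting of the returned walk to a simple path---but these are routine and do not constitute a different argument.
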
 \medskip

Using our main theorem, we can also obtain interesting approximation results in cases where no non-trivial approximation was known before. Two examples are the Integer Generalized Flow Problem and Longest Common Subsequence with a bounded number of occurrences of each letter of the alphabet.

\begin{tcolorbox}[colback=gray!10!white,colframe=gray!50!black,title=The Minimum-Cost Integer Generalized Flow Problem]
\textbf{Input:} A directed graph $G = (V, E)$, a source $s\in V$ with excess $F\in \N$, a gain $g_e \in \ZZ$, cost $c_e\in \mathbb R_{\geq 0}$, and capacity $\capa_e\in \ZZ$ for each edge $e\in E$.

\textbf{Output:} An integer ``generalized flow'' $f: E \mapsto \ZZ$ satisfying
    \begin{equation*}
        \sum_{e\in\delta^+(v)} f(e) - \sum_{e\in\delta^-(v)} g_e \cdot f(e) = \begin{cases}
            F &\text{ if } v = s \\
            0 &\text{ if } v \in V\setminus \{s\} \ ,\\
        \end{cases}
    \end{equation*}
    \begin{equation*}
        f(e)\le \capa_e\ ,
    \end{equation*}
and minimizing 
\begin{equation*}
    \sum_{e\in E} c_e\cdot  f(e) \ .
\end{equation*}
\end{tcolorbox}
Generalized flow has been extensively studied in the literature in the fractional variants, see e.g.~\cite{tardos1998simple,olver2020simpler,jiang2025gen}, where one can obtain the optimal solution in polynomial time. Integral versions have also been studied in the past \cite{langley1973continuous} but no non-trivial approximation algorithm is known to the best of our knowledge. In our case, an $\alpha$-approximate Integer Generalized Flow will refer to a flow as above, but which violates the capacities by an $\alpha$ factor (i.e. $f(e)\le \alpha \cdot \capa_e$ for all $e\in E$).
\begin{theorem}\label{thm:gen-flow}
    Assume there exists a feasible integer generalized flow $f$ of cost $\opt$, then for any $\epsilon>0$, we can in time $(\|f\|_1\cdot n)^{O(1/\epsilon)}$ compute an $O\Big(\frac{(\|f\|_1\cdot n)^\epsilon}{\epsilon^2}\cdot \log n\Big)$-approximate integer generalized flow of cost at most $\opt$.
\end{theorem}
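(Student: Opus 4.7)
The plan is to model the Minimum-Cost Integer Generalized Flow Problem as an Additive-DP and invoke Theorem~\ref{thm:main}. I introduce one subproblem $I(v,k)$ for each $v \in V$ and each $k \in \{0,1,\dotsc,\|f\|_1\}$, representing the task ``route $k$ units of integer flow out of $v$.'' The root is $I^\circ := I(s,F)$, solutions live in $\ZZ^E$, and each $I(v,0)$ is a base problem with $x^{(I(v,0))} = \mathbf{0}$. For a non-base $I(v,k)$, each decision $C$ picks an outgoing edge $e = (v, w) \in \delta^+(v)$ and an integer amount $f_e \in \{1,\dotsc,k\}$; the fixed part is $f_e\cdot \mathbb{1}_e \in \ZZ^E$, and the recursive children are $\Lambda_1 = I(w,\, g_e f_e)$ and $\Lambda_2 = I(v,\, k - f_e)$. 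Because $g_e$ may exceed one and the graph may contain cycles, the coordinate $k$ by itself does not furnish a partial order; I augment subproblems with a ``recursion budget'' coordinate $r \le \sizebound$ and let $\prec$ be strict decrease of $r$, which makes the order well-founded.

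Next, I verify that $S(I^\circ)$ equals the set of feasible integer generalized flows with excess $F$ at $s$. The ``$\subseteq$'' direction is immediate from the recurrence: every recursive call $I(v, g_e f_e)$ spawned at $v$ via an incoming edge $e$ of flow $f_e$ routes exactly $g_e f_e$ units out of $v$, so the conservation constraint at every non-source vertex holds automatically. For ``$\supseteq$'', given any feasible $f^*$, a DP tree realizing it is built top-down: at each vertex $v$, the integer chunks of gain-weighted incoming flow are matched with the outgoing chunks of $f^*$ by integrality of the transportation polytope (supplies and demands are integers that balance by conservation at $v$), and each matched pair becomes a recursion edge. Here I rely on the Additive-DP convention noted in the paper that different occurrences of the same subproblem $I(v,\cdot)$ may take different solutions.

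For the application of Theorem~\ref{thm:main}, I translate the capacity constraints $f_e \le \capa_e$ into packing constraints $Af \le \mathbf{1}$ via the diagonal matrix $A_{e,e} = 1/\capa_e \in (0,1]$ (edges with $\capa_e = 0$ are removed in preprocessing), giving $\nconstraints = |E| = O(n^2)$ and $\log \nconstraints = O(\log n)$; the objective $c^\rmT f$ is already linear. The number of subproblems is $O(n\cdot \|f^*\|_1 \cdot \sizebound)$ and each has at most $O(|E|\cdot \|f^*\|_1)$ decisions, so $\listslength = \poly(\|f^*\|_1, n, \sizebound)$. The decomposition above contributes at most one recursion node per (vertex, incoming chunk, outgoing edge) triple, so $\size_{I^\circ}(f^*) = O(\|f^*\|_1\cdot n)$ and I may take $\sizebound = O(\|f^*\|_1\cdot n)$ as the promised size bound (the value of $\|f^*\|_1$ itself is guessed within a factor of two at a logarithmic overhead). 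Plugging in yields, in time $(\|f^*\|_1\cdot n)^{O(1/\epsilon)}$, a solution $x \in S(I^\circ)$ with $c^\rmT x \le \opt$ and $Ax \le \alpha\mathbf{1}$ for $\alpha = O\!\left((\|f^*\|_1 n)^\epsilon \log n / \epsilon^2\right)$, which is exactly the claimed bound.

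The main obstacle I anticipate is the well-foundedness of the recurrence in a graph with cycles and gains strictly greater than one: along a recursive path the second coordinate of a subproblem can increase, so $k$ alone does not order subproblems. The recursion-budget coordinate is the cleanest fix, and it only inflates $\listslength$ polynomially in $\sizebound$, which is absorbed by the runtime of Theorem~\ref{thm:main}. A secondary subtlety is the integer flow-decomposition used to show $f^* \in S(I^\circ)$; the required integer matching of incoming and outgoing chunks at each vertex is handed to us by the integrality of the transportation polytope, which applies because supplies, demands, and conservation are all integer.
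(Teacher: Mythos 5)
Your reduction is essentially the one the paper uses: it also models Generalized Flow as an Additive-DP whose subproblems are indexed by a vertex, an amount of excess still to be routed out of it, and a budget coordinate (the paper's $I(s',F',w)$, where $w$ bounds the $\ell_1$-norm of the flow and plays exactly the role of your ``recursion budget'' in making $\prec$ well-founded), encodes the capacities as packing constraints $x_e\le \capa_e$ (scaled into $A\in[0,1]^{\nconstraints\times\dimx}$), and invokes Theorem~\ref{thm:main} with $\sizebound,\listslength=\poly(\|f\|_1,n)$ and $\nconstraints=|E|$, so $\log\nconstraints=O(\log n)$. The only cosmetic difference is that the paper routes a single unit per decision (children $I(s',F'-1,w')$ and $I(v,g(e),w-w'-1)$) while you route a chunk $f_e$ at once; your parameter accounting and the resulting time and approximation bounds match.

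The step that does not hold as stated is your claim that $S(I^\circ)$ \emph{equals} the set of feasible integer generalized flows with excess $F$, justified by transportation-polytope matchings ``built top-down.'' On cyclic supports the top-down construction is circular, and the equality is in fact false: take $F=1$, an edge $(s,a)$ with gain $0$ and flow $1$, and a cycle $a\to c\to a$ with both gains $1$ and flow $1$ on each edge. This is a feasible integer generalized flow, with every support edge reachable from $s$, yet no derivation tree produces the cycle edges: every recursion node must trace back to the root chain at $s$, the unit entering $a$ arrives with gain $0$, and the two cycle chunks could only spawn each other. For the same reason an arbitrary integer transportation matching at a vertex can strand such circulation flow, so the tree you build need not have solution vector $f^*$. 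The repair is to prove less, which is all Theorem~\ref{thm:main} needs: exhibit \emph{some} $x\in S(I^\circ)$ with $Ax\le\mathbf{1}$ and $c\T x\le\opt$. Peel $f^*$ one unit (or chunk) at a time exactly along your recursion, maintaining the invariant that at every vertex the remaining outflow minus the gain-weighted remaining inflow equals the total pending chunk demand there; hence a chain with positive demand always finds an outgoing edge with residual flow, the process never gets stuck, and it terminates since each step removes flow. What remains is a nonnegative generalized circulation with zero excess everywhere, so the peeled part $x$ is derivable, still has excess $F$ at $s$, obeys the capacities, and, because $c\ge 0$, has $c\T x\le\opt$. (The paper's write-up is also terse on this point; its explicit definition of the subproblem solution sets together with the reachability convention and the unit-at-a-time recurrence is what makes this peeling argument immediate there.)
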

To prove this, we introduce a subproblem for each $s'\in V$, each excess $F'$, and each bound on the total flow,
where the solutions are the generalized flows with excess flow $F'$ in $s'$ that satisfy the bound on the total flow. The recurrence is straight-forward. We defer the details to Section~\ref{sec:app-packing}. \medskip

Next, we apply our theorem to the Bounded-Repetition Longest Common Subsequence Problem. This problem has been studied recently in~\cite{asahiro2020exact}, where it is shown to be APX-hard with respect to the approximation on the length of the common subsequence. We study it with respect to approximating the number of occurrences of each letter, which is the stronger setting, since it can easily be translated into an approximation algorithm for the length. To the best of our knowledge, no non-trivial approximation algorithm is known for either problem. There exists non-trivial approximation algorithms for related problems~\cite{asahiro2024polynomial}, but in most cases the approximation ratio is polynomial in the input size.

\begin{tcolorbox}[colback=gray!10!white,colframe=gray!50!black,title=The Bounded-Repetition Longest Common Subsequence Problem]
    \textbf{Input:} Two strings $a[1]a[2]\dotsc a[m]$ and
$b[1]b[2]\dotsc b[n]$ over alphabet $\Sigma$ and a bound $C\in\ZZ$.

\textbf{Output:} 
The longest common subsequence of both strings that contains
each character at most $C$ times.
\end{tcolorbox}
\begin{theorem}
    Assume that $n\ge m$
    and that there exists a common subsequence of $a$ and $b$ which does not contain any character more than $C$ times and is of length $\opt$. For any $\epsilon>0$, one can in time $n^{O(1/\epsilon)}$ compute a common subsequence of length at least $\opt$ which does not contain any character more than $C \cdot \Big(\frac{n^\epsilon}{\epsilon} \log |\Sigma|\Big)$ times.
\end{theorem}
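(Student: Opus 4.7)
The plan is to reduce to a feasibility Additive-DP by guessing $L = \opt$; since $\opt \le \min(m,n) \le n$, we iterate $L$ over $\{1,\dots,n\}$ and keep the largest $L$ for which the algorithm succeeds. For a fixed $L$, we introduce one subproblem $I(i,j,\ell)$ for every $(i,j,\ell) \in \{0,\dots,m\}\times\{0,\dots,n\}\times\{0,\dots,L\}$, whose intended solution set consists of the indicator vectors of common subsequences of $a[1..i]$ and $b[1..j]$ of length exactly $\ell$. The root is $I^\circ = I(m,n,L)$, and the ambient dimension is $\dimx = |\{(i',j') : a[i']=b[j']\}| = O(n^2)$.

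Base cases are $I(i,j,0)$ with $x^{(I)} = \mathbf{0}$; any $I(i,j,\ell)$ with $\ell > 0$ and $\min(i,j)=0$ is infeasible and is pruned in preprocessing (a standard LCS table decides feasibility). At a non-base $I(i,j,\ell)$ we expose three decisions, each recursing into a single subproblem: skip $a[i]$, going to $I(i-1,j,\ell)$ with fixed vector $\mathbf{0}$; skip $b[j]$, going to $I(i,j-1,\ell)$ with fixed vector $\mathbf{0}$; and, when $a[i]=b[j]$, match, going to $I(i-1,j-1,\ell-1)$ with fixed vector $\mathbb{1}_{(i,j)}$. For each letter $\sigma \in \Sigma$ we add one normalized packing constraint with coefficient $1/C$ on every coordinate $(i',j')$ with $a[i']=b[j']=\sigma$, so $Ax \le \mathbf{1}$ is exactly ``each letter appears at most $C$ times''.

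Because every decision is non-branching, the recursion tree of any $x \in S(I^\circ)$ degenerates to a single path along which $i+j+\ell$ strictly decreases; hence $\sizebound = O(n)$ suffices as the promised size bound. Moreover $\listslength = O(n^3)$ and $\nconstraints = |\Sigma|$. Since each subproblem $I(i',j',\cdot)$ is visited at most once along that path, the output vector is $\{0,1\}$-valued, and the matched positions appear in strictly increasing order of both indices by construction; the vector therefore encodes a bona fide common subsequence of length exactly $L$.

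With $L = \opt$, the instance is feasible by hypothesis, so Theorem~\ref{thm:main} applied with $c = \mathbf{0}$ returns such an $x$ in time $(\sizebound \listslength)^{O(1/\epsilon)}\poly(\cdot) = n^{O(1/\epsilon)}$, with $Ax \le \alpha \cdot \mathbf{1}$ for $\alpha = O(\sizebound^{\epsilon}/\epsilon \cdot \log \nconstraints) = O(n^\epsilon/\epsilon \cdot \log|\Sigma|)$, i.e.\ no letter appears more than $C\alpha$ times. The main subtlety, and the reason we bake the target length $L$ into the subproblem index rather than maximize length directly via $c = -\mathbf{1}$, is that the latter would place us in the first (cost-respecting) regime of Theorem~\ref{thm:main} and yield the weaker $\alpha = O(n^\epsilon/\epsilon^2 \cdot \log|\Sigma|)$; only the pure-feasibility regime delivers the claimed $O(n^\epsilon/\epsilon \cdot \log|\Sigma|)$ bound.
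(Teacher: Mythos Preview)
Your proof is correct and follows essentially the same approach as the paper: both encode the target length into the DP state, use the textbook LCS recurrence with the three branches (skip $a[i]$, skip $b[j]$, match), impose one normalized packing constraint per alphabet letter, and invoke Theorem~\ref{thm:main} in the no-cost regime to get the $O(n^\epsilon/\epsilon \cdot \log|\Sigma|)$ bound. Your write-up is in fact more explicit than the paper's about the size bound $\sizebound = O(n)$ (via the non-branching path structure) and about why the feasibility regime rather than $c=-\mathbf{1}$ is needed for the stated $1/\epsilon$ dependence.
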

The theorem follows immediately from reformulating the textbook DP
to our model. We defer the details to Section~\ref{sec:app-packing}.

\subsection{Applications to covering problems}
Our theorem is not limited to the use of additional packing constraints. 
Often, other constraints such as covering constraints can
be emulated using the DP structure and packing constraints.
As an example of this, we apply our result to a variant of Directed Steiner Tree,
which contains a covering constraint for the terminals.

\begin{tcolorbox}[colback=gray!10!white,colframe=gray!50!black,title=The Robust Directed Steiner Tree Cover Problem]
\textbf{Input:} A $n$-vertex directed graph $G=(V,E)$, a root $r \in V$, edge costs $c \in \R_{\geq 0}^E$, a bound $B \in \R_{\geq 0}$ on the cost, a set of terminals $K\subseteq V$, vectors $a^1, a^2, \cdots, a^k \in [0, 1]^E$.\\
\textbf{Output:} An out-arborescence $T$ with root $r$, $\sum_{e \in T}c_e \leq B$, $\sum_{e \in T}a^j_e \leq 1$ for every $j \in [k]$ that contains the maximum number of terminals in $K$. 
\end{tcolorbox}

\begin{theorem}
	\label{thm:DST}
	Let $\epsilon > 0$. For some $\alpha = O(\frac{n^\epsilon}{\epsilon^2} \log (k + |K|))$, we can in time $n^{O(1/\epsilon)}\poly(k)$ compute an out-arborescence $T$ satisfying $\sum_{e \in T}c_e \leq B$, $\sum_{e \in T}a^j_e \leq \alpha$ for every $j \in [k]$ that covers at least $\opt/ \alpha$ terminals, where $\opt$ is the number of terminals covered by the optimum solution that does not violate any constraint. The algorithm succeeds with high probability.
\end{theorem}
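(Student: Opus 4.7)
The plan is to reduce \textsc{Robust Directed Steiner Tree Cover} to an Additive-DP with packing constraints and apply Theorem~\ref{thm:main}. After binary searching on $\opt \in \{1, \dots, |K|\}$, we build an instance whose ambient vector space has one coordinate $x_e$ per edge $e \in E$ (counting how many times $e$ is used across the recursion) and one coordinate $y_t$ per terminal $t \in K$. The subproblems are $I(v, N, s)$ indexed by a current vertex $v \in V$, a target token count $N \in \{1, \ldots, |K|\}$, and a stem budget $s \in \{0, \ldots, n\}$, with root $I^\circ = I(r, \opt, n)$ and the partial order $\prec$ decreasing lexicographically on $(N, s)$. The branching options at $I(v, N, s)$ are: \emph{Extend}, if $s \ge 1$, for each $(v, v') \in E$ recurse into $I(v', N, s-1)$ with fixed part $\mathbb{1}_{(v, v')}$; \emph{Branch}, if $N \ge 2$, for each split $N = N_1 + N_2$ with $N_i \ge 1$ recurse into $I(v, N_1, n)$ and $I(v, N_2, n)$ with zero fixed part; \emph{Terminal}, if $N = 1$ and $v \in K$, recurse into a base problem $B_v$ whose sole solution is the indicator vector $y_v$. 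The packing constraints are $\sum_e a^j_e x_e \le 1$ for $j \in [k]$ together with $y_t \le 1$ for every $t \in K$, giving $\nconstraints = k + |K|$; the objective is $c\T x$.

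The key claim is that the Additive-DP optimum is at most $B$. We prove it by encoding the given optimal arborescence $T^*$ (covering $\opt$ terminals with $\sum_{e \in T^*}c_e \le B$) recursively: at a subproblem $I(v, N, n)$ corresponding to a sub-arborescence of $T^*$ rooted at $v$ covering $N$ of the target terminals, walk down $T^*$ from $v$ until hitting the first vertex $u$ whose subtree branches into at least two children carrying these terminals (the essential root), encoding the $v$-to-$u$ path as Extends; then partition the children of $u$ into two non-empty groups with terminal counts $N_1 + N_2 = N$ and apply Branch to recurse into $I(u, N_1, n)$ and $I(u, N_2, n)$. This places each edge of $T^*$ into exactly one Extend, so $c\T x^* = \sum_{e \in T^*}c_e \le B$, $(a^j)\T x^* \le 1$, and $y_t^* = 1$ for each of the $\opt$ covered terminals. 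The recursion tree has size $\sizebound = O(n)$, and the instance has $\listslength = \poly(n)$ and $\dimx = \poly(n)$.

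Applying Theorem~\ref{thm:main} then yields an integer $x$ with $c\T x \le B$, $(a^j)\T x \le \alpha$, and $y_t \le \alpha$ for every $t$, where $\alpha = O\bigl(\tfrac{n^\epsilon}{\epsilon^2}\log(k + |K|)\bigr)$, in time $n^{O(1/\epsilon)}\poly(k)$. We extract the output arborescence $T$ by taking the support graph $G' = (V, \{e : x_e \ge 1\})$ and performing BFS from $r$; every terminal $t$ with $y_t \ge 1$ is reachable in $G'$ because the recursion chain from $I^\circ$ down to its base $B_t$ contributes $x_e \ge 1$ to every stem edge on the chain. Thus $\sum_{e \in T}c_e \le c\T x \le B$ and $\sum_{e \in T}a^j_e \le \alpha$, while $|\{t : y_t \ge 1\}| \ge \sum_t y_t / \alpha = \opt / \alpha$ since $I(r, \opt, n)$ produces exactly $\opt$ token leaves in every valid solution. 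The principal technical obstacle is avoiding edge double-counting in the DP encoding of $T^*$: a naive recursion would inflate $c\T x^*$ above $B$ whenever edges are shared among sub-arborescences. The essential-root decomposition is what assigns each edge of $T^*$ to a unique Extend; the remaining ingredients---routing the cost bound through the objective (rather than as a packing constraint, which would be violated by a factor of $\alpha$), enforcing distinct terminal coverage through $y_t \le 1$, and recursing along long paths via the stem parameter $s$---slot directly into the Additive-DP framework.
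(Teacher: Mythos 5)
Your overall strategy is the same as the paper's: reduce to an Additive-DP whose packing constraints include, besides the $k$ length constraints, an ``at most one per terminal'' constraint, observe that every solution to the root problem carries total terminal-mass exactly $\opt$ so that an $\alpha$-violation still forces $\opt/\alpha$ distinct terminals, and extract an arborescence from the support of the rounded vector by reachability from $r$; the parameters $\sizebound=O(n)$, $\nconstraints=k+|K|$ then give the stated $\alpha$ and running time. The encodings differ only cosmetically: the paper uses subproblems $I(v,o,w)$ whose solutions are multigraphs reachable from $v$ with $o$ units of terminal in-degree and at most $w$ edges (so coverage is counted by in-degree of terminals, with no extra coordinates, and reachability is built into the solution sets), whereas you add explicit token coordinates $y_t$, use a stem-budget parameter that resets at Branch nodes instead of the paper's size-split recurrence, and argue reachability via the Extend edges along each recursion chain. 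Both encodings are valid instantiations of Theorem~\ref{thm:main} and yield the same guarantee.

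One step of your feasibility argument is stated too narrowly. When you encode $T^*$, you walk from $v$ to the first vertex $u$ whose subtree has \emph{two children} carrying target terminals and then ``partition the children of $u$'' with $N_1+N_2=N$. This fails whenever a target terminal is an internal vertex of the relevant part of $T^*$ (including $u$ itself or a vertex on the stem path): the children's subtrees then contain only $N-1$ targets, no such partition exists, and the walk as described would skip the terminal, leaving tokens that cannot be discharged, so the encoded object is not in $S(I(r,\opt,n))$ at all. The fix is immediate and your DP already supports it: also stop the downward walk upon reaching a target terminal $u$ and Branch at $u$ into $I(u,1,n)$ (which immediately invokes Terminal and emits $y_u$) and $I(u,N-1,n)$ (which continues down the same stem). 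With this modification each edge of $T^*$ on a path to a claimed terminal is still charged to at most one Extend, so $c\T x^*\le B$ and the packing constraints hold, and the rest of your argument (conservation of tokens, $y_t\le\alpha$ after rounding, BFS extraction with $\sum_{e\in T}c_e\le c\T x\le B$ and $\sum_{e\in T}a^j_e\le\alpha$) goes through as written.
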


Ghuge and Nagarajan \cite{ghuge2022quasi} studied the Submodular Tree Orienteering problem. Compared to our problem, they consider the more general objective of maximizing $f(\{v \in T\})$ for a given submodular function $f:2^V \to \R_{\geq 0}$, but there are no packing constraints of the form $\sum_{e \in T}a^j_e \leq 1$. Moreover, they only prove a poly-logarithmic approximation in quasi-polynomial time.



We only sketch our approach here and defer details to Section~\ref{sec:app-covering}. The basic idea is that there is a subproblem $I(v, o, w)$ for each $v\in V$, $o\in \{0,1,\dotsc,\opt\}$ and $w\in \{0,1,\dotsc,n\}$, which intuitively represents trees rooted at $v$ with $o$ terminals and at most $w$ edges. However, as we cannot restrict the trees given by sub-problems to be edge-disjoint or have in-degree at most~$1$, we allow the solution to be any multigraph where all edges are reachable from $v$. This is not an issue as an inclusion-wise minimal solution is always an out-arborescence in $G$. It is easy to write a recurrence relation for solutions of these problems. 
 
 To prevent that we use a terminal many times, we add packing constraints that require each terminal to have at most one incoming edge. The packing constraints $\sum_{e \in T}a^j_e \leq 1, \forall j \in [k]$ carry over naturally to the Additive-DP problem.  Using Theorem~\ref{thm:main}, we can find an $\alpha = O\left(\frac{n^{\epsilon}}{\epsilon^2}\log(k + |K|)\right)$-approximate solution with cost at most $B$. This solution must cover $\opt / \alpha$ different terminals. \medskip

By repeating the process many times, we recover the state-of-the-art approximation for the Directed Steiner Tree (DST) problem given in \cite{charikar1999approximation,rothvoss2011directed,grandoni2019log2} up to a logarithmic factor: we obtain an $O\left(\frac{n^\epsilon}{\epsilon}\log^2 |K| \right)$-approximation in $n^{O(1/\epsilon)}$-time.\footnote{By losing a factor of $1+\epsilon$, we can assume costs are polynomially bounded integers. Then, we can encode the cost information into DP states, and apply Theorem~\ref{thm:main} without costs. This improves the factor of $\frac{1}{\epsilon^2}$ to $\frac{1}{\epsilon}$. However, our result is still worse than the state-of-the-art approximation of $O(\frac{n^\epsilon}{\epsilon}\log |K|)$ by a $\log |K|$ factor, as we need to take a union bound over $|K|$ packing constraints.}

Extensions of DST where soft bounds are imposed on the out-degree of vertices have been studied in~\cite{guo2022approximating}, where
a quasi-polynomial time polylogarithmic approximation algorithm is presented.
Using our framework it is straight-forward to capture degree bounds in the length functions, which matches the previous work up to logarithmic factors in quasi-polynomial time. We also give a polynomial time $O(n^\epsilon)$-approximation.
\medskip

In contrast to the Tree Orienteering problem, the traditional version of the Orienteering Problem considers $s$-$t$ walks instead of directed trees.
For a variant of this we can also give an approximation algorithm using similar ideas of emulating covering constraints. 
\begin{tcolorbox}[colback=gray!10!white,colframe=gray!50!black,title= The Colorful Orienteering Problem]
\textbf{Input:} A directed graph $G = (V, E)$ with edge costs $c \in \R_{\geq 0}^E$ and colors $\kappa \in [C]^E$ for some integer $C \geq 1$, a start $s\in V$, destination $t\in V$, and budget $B\in \R_{\geq 0}$. \\
\textbf{Output:} A walk $p$ (a path which is not necessarily simple) from $s$ to $t$ with $\sum_{e \in p}c_e\le B$ that
maximizes the number of visited colors, i.e.,	$|\{\kappa_e: e \in p\}|$, where the set is viewed as a normal set.
\end{tcolorbox}
\begin{theorem}
	\label{thm:colorful-orienteering}
    One can in time $n^{O(1/\epsilon)}$ compute an $O\left(\frac{n^\epsilon}{\epsilon^2}\log C\right)$-approximation for Colorful Orienteering for every $\epsilon > 0$. The algorithm succeeds with high probability.
\end{theorem}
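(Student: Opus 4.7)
The plan is to invoke Theorem~\ref{thm:main} via an Additive-DP for $s$-$t$ walks that mirrors the one used for Robust Directed Steiner Tree Cover, with colors playing the role of terminals. First I would argue that the optimum walk can be assumed to use at most $N = O(nC) = \poly(n)$ edges: replace each subwalk between two consecutive first-visits of distinct colors by a shortest path in $G$; this preserves the set of colors visited, only decreases the cost, and produces at most $C+1$ segments of length at most $n-1$.

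Next I would set up subproblems $I(u, v, j, o)$ indexed by $u, v \in V$, $j \in \{0, \ldots, \lceil \log_2 N \rceil\}$, and $o \in \{0, \ldots, C\}$, intuitively representing walks from $u$ to $v$ of at most $2^j$ edges that ``claim'' exactly $o$ colors. Solution vectors live in $\ZZ^{E \cup [C]}$, with edge coordinates counting traversals and color coordinates counting claims. The base cases $I(u, v, 0, o)$ consist of the empty walk (for $u = v$ and $o = 0$), a single edge without a claim (solution $\mathbb{1}_e$, for $o = 0$), or a single edge with a claim on its color (solution $\mathbb{1}_e + \mathbb{1}_{\kappa_e}$, for $o = 1$). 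For $j \geq 1$, the branching at $I(u, v, j, o)$ offers either (i) a ``shorter walk'' option via $I(u, v, j-1, o)$, or (ii) a ``split at $m$'' option combining $I(u, m, j-1, o_1)$ and $I(m, v, j-1, o_2)$ for each $m \in V$ and $o_1 + o_2 = o$. The Additive-DP objective is $c^\rmT x = \sum_{e \in E} c_e x_e$, and the $\nconstraints = C$ packing constraints are simply $x_\kappa \leq 1$ for each color $\kappa$.

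I would then guess the optimum color count $\opt$ and take $I(s, t, \lceil \log_2 N \rceil, \opt)$ as the root. For the correct guess, the Additive-DP optimum is at most $B$, so Theorem~\ref{thm:main} returns $x \in S(I^\circ)$ with $c^\rmT x \leq B$ and $x_\kappa \leq \alpha$ for all $\kappa$, where $\alpha = O(\sizebound^\epsilon \log C / \epsilon^2) = O(n^\epsilon \log C / \epsilon^2)$ after reparameterizing $\epsilon$ (since $\sizebound \leq N \log N = \poly(n)$). By induction on the recursion, the state $o = \opt$ forces $\sum_\kappa x_\kappa = \opt$, and $x$ decodes into a valid $s$-$t$ walk of cost at most $B$. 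The number of distinct colors visited is at least $|\{\kappa : x_\kappa \geq 1\}| \geq \sum_\kappa x_\kappa / \alpha = \opt / \alpha$, giving the promised approximation. Trying all $\opt \in \{1, \ldots, C\}$ and returning the best takes total time $n^{O(1/\epsilon)}$.

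The main obstacles are (i) the walk-length preprocessing, handled by the shortest-path replacement argument above, and (ii) checking that $\sizebound$, $\listslength$, and $\dimx$ are all $\poly(n)$ so that Theorem~\ref{thm:main}'s running time and approximation factor match the claim. Both steps are expected to be straightforward, with the overall setup a near-direct transport of the Robust Directed Steiner Tree Cover argument from arborescences to walks.
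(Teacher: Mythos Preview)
Your proposal is correct; the core idea—encode color ``claims'' in extra coordinates, track the total claim count $o$ in the DP state, impose the packing constraints $x_\kappa\le 1$, and read off $\opt/\alpha$ distinct colors from $\sum_\kappa x_\kappa=\opt$ together with $x_\kappa\le\alpha$—is exactly the paper's mechanism. Your walk-length bound $N=O(nC)$ (using $C\le|E|\le n^2$ w.l.o.g., so $N=\poly(n)$) is fine and your induction that any $x\in S(I(u,v,j,o))$ decodes to an honest $u$--$v$ walk with the claimed properties goes through.

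Where you genuinely diverge is in how you encode walks as an Additive-DP. The paper first pads $G$ into an $O(n^2)$-layered DAG so that every walk becomes a simple path, and then runs the one-edge-at-a-time recurrence on states $I(v,j)$ (current vertex and claim count). You instead stay in the original graph and use a dyadic length-splitting recurrence on states $I(u,v,j,o)$, closer in spirit to the DST reduction. Both are valid. The paper's route yields a smaller state space and a simpler recurrence once the layering is done; your route avoids the layering preprocessing at the cost of carrying both endpoints and a length index. Either way, $\sizebound$, $\listslength$, and $\dimx$ are $\poly(n)$ and Theorem~\ref{thm:main} delivers the stated guarantee. One cosmetic point: your claim that the construction ``mirrors the one used for Robust Directed Steiner Tree Cover'' is accurate for \emph{your} DP, but the paper actually takes the simpler path-specific route here rather than reusing the DST-style decomposition.
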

The proof also follows from the fact that the Shortest Path Problem can be phrased as an Additive-DP problem, as explained earlier and by encoding covering constraints similar to the approach in Directed Steiner Tree. We defer the details to Section~\ref{sec:app-covering}. In~\cite{chekuri2005recursive} a logarithmic approximation in quasi-polynomial time is presented for the Colorful Orienteering Problem and a more general submodular variant. It was an open problem to get a non-trivial approximation in polynomial time~\cite{chekuri2024orienteering}.\notelr{I added this reference. Chandra mentioned this open problem to me explicitly during FOCS'24}

Similar to Directed Steiner Tree, our approach could also handle $k$ additional length functions on the edges of the path, and the violation of the packing constraints and the approximation ratio of the color coverage would become
$O\left(\frac{n^\epsilon}{\epsilon^2} \log (k + C)\right)$.

\subsection{Applications via augmentation}
For some problems, we cannot directly write
a recurrence that fits into the dynamic programming
framework, but we can still make indirect use of our theorem. The first example we give is the Robust Bipartite Perfect Matching Problem.

\begin{tcolorbox}[colback=gray!10!white,colframe=gray!50!black,title= The Robust Bipartite Perfect Matching Problem]
\textbf{Input:} A bipartite graph $G = (A\cup B, E)$ and $k$ length vectors $a^1,\dotsc,a^k \in [0, 1]^E$. \\[3pt]
\textbf{Output:} A perfect matching $M\subseteq E$ with $\sum_{e\in M} a^j_e\le 1$ for each $j\in [k]$.
\end{tcolorbox}

Intuitively, the reason why we can still apply our theorem
to this problem is that one can iteratively augment the matching via applications of the Robust $s$-$t$ Path algorithm.
Formally, one needs to be careful to have a small number
of iterations (by augmenting the matching by many edges at
every step), because we will introduce an error at each iteration.

For the Robust Bipartite Perfect Matching Problem, it was only known how to handle additional packing constraints in the general form if one would allow to output not a perfect matching but a version which loses a constant fraction of the edges. In that case, \cite{chekuri2011multi} can obtain an $O(\log n)$-approximation in polynomial time. Our result implies the following.
\begin{theorem}
\label{thm:matching}
Given an instance of the Robust Bipartite Perfect Matching Problem with $n$ vertices,
    for every $\epsilon > 0$, one can in time $n^{O(1/\epsilon)}\cdot\poly(k)$ compute a perfect matching that violates each length constraint by at most $O(\frac{n^\epsilon}{\epsilon^2} \cdot \log^3 (nk))$, assuming there exists a perfect matching satisfying all length constraints. The algorithm succeeds with high probability. 
\end{theorem}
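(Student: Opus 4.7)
I would prove Theorem~\ref{thm:matching} by growing a matching one augmenting path at a time, using the Robust $s$-$t$ Path algorithm stated just above. The obstacle to invoking Theorem~\ref{thm:main} on a ``flow-like'' formulation directly is that the matching constraint (each vertex incident to at most one edge) cannot be encoded as a packing constraint without being blown up by the $\alpha$ factor, which would destroy integrality of the matching. So I would instead maintain a matching $M$ initialized to $\emptyset$; at each iteration build the residual digraph $D_M$ in which non-matching edges are oriented $A\to B$, matching edges are oriented $B\to A$, a super source $s^*$ has arcs to all unmatched $A$-vertices, and a super sink $t^*$ has arcs from all unmatched $B$-vertices. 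On each arc corresponding to a non-matching edge $e$ put length $a^j_e$ (for every $j$), and put $0$ on arcs of matching edges. A call to Robust $s$-$t$ Path then returns a simple $s^*$-$t^*$ path $P$, and I set $M\leftarrow M\triangle P$. Because $G$ is bipartite, $P$ corresponds to a simple augmenting path in $G$, so $|M|$ grows by one.

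The quantitative heart of the argument is to bound the length added per iteration by roughly $\alpha/d$, where $d := n/2 - |M|$ is the current deficit and $\alpha = O(n^{\epsilon}/\epsilon^2 \cdot \log k)$ is the Robust $s$-$t$ Path guarantee. Fix any perfect matching $M^*$ satisfying all length constraints. Then $M\triangle M^*$ decomposes into $d$ vertex-disjoint augmenting paths plus alternating cycles, and the non-matching edges of these $d$ paths lie in $M^*\setminus M$, so their total $a^j$-length is at most $1$ for every $j$. Placing weight $1/d$ on each of the $d$ paths yields a fractional $s^*$-$t^*$ flow of value $1$ whose $a^j$-length is at most $1/d$, certifying that the Robust $s$-$t$ Path LP is feasible at threshold $1/d$. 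Calling Robust $s$-$t$ Path on the appropriately rescaled instance therefore returns an integer path $P$ with $\sum_{e\in P\setminus M} a^j_e \le \alpha/d$, and since XORing with $P$ only removes $M$-edges and adds new non-matching ones, the total $a^j$-length of $M$ grows by at most $\alpha/d$ in this iteration. Summing over the deficits $d = n/2, n/2-1, \ldots, 1$ telescopes to $\sum_{d=1}^{n/2} \alpha/d = O(\alpha\log n)$, matching the stated $O(n^{\epsilon}\log^3(nk)/\epsilon^2)$ bound up to polylogarithmic factors; a union bound over the $O(n)$ calls gives the high-probability guarantee, and the total running time is $O(n)\cdot n^{O(1/\epsilon)}\poly(k) = n^{O(1/\epsilon)}\poly(k)$.

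The main technical obstacle I anticipate is handling ``heavy'' entries $a^j_e > 1/d$: naively rescaling $a^j$ by $d$ would push them outside the $[0,1]$ range required by Robust $s$-$t$ Path. I would address this by truncating rescaled lengths at $1$ inside the LP and treating heavy edges separately: for any coordinate $j$, at most $d$ edges of $M^*$ can have $a^j_e > 1/d$ (since their total $a^j$-mass is at most $1$), so the heavy contribution can be absorbed by the approximation factor at the cost of additional logarithmic factors. These factors, together with the union bound over the $k$ coordinates inside each Robust $s$-$t$ Path call and over the $O(n)$ iterations, are where the $\log^3(nk)$ in the stated bound ultimately comes from.
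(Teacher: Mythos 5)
There is a genuine gap at the heart of your telescoping argument. Your per-iteration bound rests on the claim that, because placing weight $1/d$ on the $d$ disjoint augmenting paths of $M\triangle M^*$ gives a fractional $s^*$-$t^*$ flow of value $1$ with $a^j$-length at most $1/d$, the Robust $s$-$t$ Path algorithm applied to the instance rescaled by $d$ must return an integral path of length at most $\alpha/d$ in every coordinate. But the Robust $s$-$t$ Path guarantee is stated under the assumption that an \emph{integral} path satisfying the (rescaled) length constraints exists; feasibility of the naive fractional flow relaxation certifies neither that assumption nor feasibility of the strong lifted LP that the algorithm actually rounds (the paper's Appendix on the integrality gap shows exactly this kind of discrepancy, with gap $\Omega(\sqrt{n})$ for the naive relaxation). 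Worse, the required integral path may simply not exist: if, say, $k\approx d$ and each of the $d$ disjoint augmenting paths carries length close to $1$ concentrated in its ``own'' coordinate, then no single augmenting path has all $k$ lengths $O(1/d)$ -- averaging only yields a path with all lengths $O(k/d)$, which would put a factor $k$ rather than $\mathrm{polylog}(k)$ into the bound. What one can guarantee per iteration is only a path with each length at most $1$ (hence violation $\le\alpha$ per augmentation), and over $n/2$ one-at-a-time augmentations this gives $\Omega(n\alpha)$, not $O(\alpha\log n)$. Your ``heavy edge'' truncation addresses the cosmetic issue of rescaled lengths exceeding $1$, not this structural obstruction.

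This is precisely why the paper augments by \emph{many} paths per iteration rather than one. It chains $F=|U|-v(f_t)$ copies of the residual graph $G_t$ (sink of copy $i$ joined to source of copy $i+1$) so that the concatenation of the $F$ edge-disjoint augmenting paths coming from $f^*-f_t$ is a single $s_1$-$t_F$ path that satisfies all the original packing constraints exactly (their total is at most $1$), plus extra packing constraints enforcing that each edge of $G_t$ is used in at most one copy. One call to the Robust $s$-$t$ Path algorithm then yields a long path violating everything by at most $\alpha$; restricting $G_t$ to its support and taking a maximum integral flow there augments the matching by at least $F/\alpha$ edges while adding only an additive $O(\alpha)$ to each constraint. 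The deficit thus shrinks geometrically, so only $O(\alpha\log(nk))$ iterations are needed, the total violation is $O(\alpha^2\,\mathrm{polylog}(nk))$, and rescaling $\epsilon$ absorbs the $n^{2\epsilon}$ into $n^{\epsilon}$. If you want to salvage a one-path-at-a-time scheme, you would need a different existence argument for a single augmenting path that is simultaneously short in all $k$ coordinates, and as noted above no such path need exist.
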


Finally, we conclude this application section by one implication of our result for the Santa Claus problem.

\begin{tcolorbox}[colback=gray!10!white,colframe=gray!50!black,title=The Budgeted Santa Claus Problem]
\textbf{Input:} A set $P$ of $m$ players, a set $R$ of $n\ge m$ indivisible resources, resource $j$ gives value $v_{ij}$ to player $i$ at a cost of $c_{ij}$. There is a given budget $B$ for the total cost. \\
\textbf{Output:} An integral assignment $\sigma:R\mapsto P$ of resources to players such that $$ \min_{i\in P}\sum_{j:\sigma(j)=i}v_{ij}$$ is maximized and
the cost $\sum_{i,j:\sigma(j)=i}c_{ij}$ is at most $B$.
\end{tcolorbox} 
For the Budgeted Santa Claus 
problem, we can with a similar, yet more
sophisticated way reduce the problem to an augmentation problem,
which can be solved via the previous result on the Integer Generalized Flow Problem. The details are deferred to Section~\ref{sec:app-aug}.
Although this reduction is heavily based on previous works~\cite{chakrabarty2009allocating,bamas2025submodular}, we believe it is
technically less involved than those works.

\begin{theorem}\label{thm:santa}
    Suppose for the Budgeted Santa Claus Problem there exists a solution that gives each player a
    value of at least $V$ and the total assignment cost is at most $B$.
    One can in time $n^{O(1/\epsilon)}$ compute a bicriteria approximation, where
    the cost is at most $B\cdot n^{\epsilon}\mathrm{polylog}(n)$ and the value of each player is at least $V/ (n^{\epsilon}\mathrm{polylog}(n))$, for any $\epsilon > 0$.
\end{theorem}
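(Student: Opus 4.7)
The plan is to reduce the Budgeted Santa Claus Problem to an augmentation procedure solvable via our algorithm for the Integer Generalized Flow Problem (Theorem~\ref{thm:gen-flow}). Following the standard Santa Claus framework of~\cite{chakrabarty2009allocating,bamas2025submodular}, the first step is a preprocessing that reduces to the ``thin resources'' regime: ``fat'' resources, each worth at least $V/\mathrm{polylog}(n)$ to some player, are pre-assigned via a budget-respecting bipartite matching, losing only polylogarithmic factors in value and cost; after this, every remaining resource contributes at most $V/\mathrm{polylog}(n)$ to any player, so a satisfied player must receive a bundle of many thin resources whose values sum to roughly~$V$.

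We then maintain a partial assignment satisfying a subset $S \subseteq P$ of players at the target degraded level $V/(n^\epsilon\mathrm{polylog}(n))$, and augment it via an alternating-layer structure in the spirit of Haxell-type local search. In an augmentation, a newly targeted player picks up a new bundle; this may steal resources from a currently satisfied player, who must compensate by grabbing yet another bundle elsewhere, and so on. The crucial step is to encode this alternating reassignment as an Integer Generalized Flow instance: place a node for each resource (with unit capacity) and each player, add an arc from resource $j$ to player $i$ with gain $v_{ij}$ (so one unit of flow delivers $v_{ij}$ units of value) and cost $c_{ij}$, and add residual arcs between players parameterized by the current assignment that model reassignment, with gain factors derived from the displaced bundle. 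A source excess of $V$ drives the flow to deliver the required value along some, possibly long, alternating chain, and its cost upper bounds the extra budget spent.

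Applying Theorem~\ref{thm:gen-flow} to this instance then yields an integral augmentation of cost at most $B$ in which each resource's capacity is violated by at most $O(n^\epsilon\mathrm{polylog}(n))$, i.e., a bicriteria step that loses that factor in value and cost. Performing $O(\log m)$ rounds of this---each batch at least doubling $|S|$, and designed so that resources participating in one batch are essentially disjoint from those of the next---prevents the violations from compounding geometrically, yielding the claimed bicriteria approximation. The main obstacle will be the careful design of the gain factors and residual arcs so that an integer solution of the generalized flow corresponds \emph{exactly} to a valid reassignment of thin bundles, and ensuring that the polylogarithmic losses from preprocessing, from the alternating-structure emulation, and from the iterations all compose into a single $n^\epsilon\mathrm{polylog}(n)$ bicriteria factor rather than blowing up.
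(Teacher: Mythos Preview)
Your proposal follows the same high-level framework as the paper---iterative augmentation via Integer Generalized Flow---but the concrete construction you sketch has real gaps.

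First, your generalized-flow encoding puts arcs from resource $j$ to player $i$ with gain $v_{ij}$. In the Integer Generalized Flow Problem as defined here, gains must be integers, and after your thin-resource preprocessing the values $v_{ij}$ are arbitrary reals in $(0, V/\mathrm{polylog}(n)]$. The paper avoids this by first discretizing all values to $\{0,1\} \cup \{2^{-\ell}\}$ and restricting each player to a single value class, then introducing auxiliary vertices $a(i,b)$ so that the only non-unit gain is the integer $1/b$ on the edge $(i,a(i,b))$. More importantly, the paper gives gain $0$ to the edge from a resource $j$ to its current owner $\sigma(j)$ whenever $v_{\sigma(j),j}<1$: a small stolen resource is simply not compensated. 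This deliberately ``counter-intuitive'' trick is what makes the augmentation close up as a bounded integer flow; your ``residual arcs between players \ldots\ with gain factors derived from the displaced bundle'' supplies nothing comparable, and without it the alternating chains need not terminate as a valid integer generalized flow.

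Second, your iteration count is too optimistic. You claim $O(\log m)$ rounds via doubling of $|S|$, but Theorem~\ref{thm:gen-flow} only returns an $\alpha$-approximate flow, $\alpha = n^\epsilon\mathrm{polylog}(n)$, and an augmentation can steal resources from currently happy players. The paper shows that the net number of unhappy players shrinks only by a factor $(1-1/(2\alpha))$ per round, giving $\gamma = O(\alpha\log n)$ rounds and hence a final ratio $\beta = O(\alpha\gamma^2) = O(\alpha^3\,\mathrm{polylog}(n))$, not $O(\alpha\,\mathrm{polylog}(n))$. Your claim that resources participating in different batches are ``essentially disjoint'' is precisely what would be needed for the faster convergence, but it is unsupported and in general false: the same resource can be reassigned in every round.
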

This reproduces the state-of-the-art in~\cite{chakrabarty2009allocating} for the problem and
extends it to handle also costs.
Costs in the Santa Claus problem have also been considered in
the recent work~\cite{rohwedder2025cost}, but there in a much
more restricted variant where values cannot be arbitrary.

\subsection{Overview of our algorithm}
To devise an algorithm for Additive-DPs with additional packing constraints, we first reformulate the problem as an equivalent network design problem that we call the Flexible Tree Labeling Problem. 
In this problem we are given a set of labels and
the goal is to output a tree along with a labeling of its vertices.
The structure of the tree can be chosen arbitrarily, but
there are conditions on which labels of a parent can be combined with which labels of children. Furthermore, on the root and the leafs of the tree only specific labels can be used.
Over all used labels of the leafs packing constraints are imposed. The problem formulation contains some details that we leave out for simplicity and that are deferred to Section~\ref{sec:reduction_DPTree}.
In principle, the tree structure represents the recursion tree as in
a solution to the DP and the labels roughly correspond to branching decisions. With a careful definition the problem is exactly equivalent to Additive-DPs.

To prove Theorem~\ref{thm:main} we rely on two main techniques. The first one is to simplify the structure of the Flexible Tree Labeling Problem. Namely, we reduce it to the Perfect Binary Tree Labeling Problem, which is very similar except that the tree can no longer be chosen and must always be the perfect binary tree of a given height.
To achieve this, we observe that a tree can be recursively decomposed via vertex separators, such that after logarithmical depth the trees become singletons. Our Perfect Binary Tree Labeling instance mimics this recursive decomposition.
This uses similar ideas to~\cite{guo2022approximating}, where such an approach was used to solve degree-bounded Directed Steiner Tree variants.

We remark how the parameters transfer from the original Additive-DP instance to the Perfect Binary Tree labeling instance. The number $n$ of leaves in the perfect binary tree is $\poly(\sizebound)$, and size of the label set $L$ is $\poly(\sizebound, \listslength)$.\notelr{Etienne: placed this text here so that the explanation that follows has the correct parameters}

The second important step is to design a linear programming relaxation of Perfect Binary Tree Labeling. We now assume that the optimal tree has logarithmic height and every vertex except those in the last layer has two children. Assuming this, we can write a linear program in a recursive way which captures the problem. The recursion depth of this linear program is the same as the height of the tree, resulting in a size of $|L|^{O(\log n)}\poly(n, \nconstraints,\dimx)$, where $\nconstraints$ is the number of packing constraints, and $\dimx$ the dimension of the $x$ vectors in the original Additive-DP instance. Randomized rounding on this linear program yields the polylogarithmic approximation using standard Chernoff bounds. 
However, it remains to explain how to get the trade-off of an $O(n^\epsilon \cdot \poly\log \nconstraints)$-approximation in time $|L|^{O(1/\epsilon)} \poly(n,\nconstraints,\dimx)$, for example, in the regime where $\epsilon$ is constant. Since $n = \poly(\Delta)$, this gives roughly the claimed guarantee. Imagine the following process on the tree. We start at the root $r$ and collapse all the paths of length $\epsilon \log n$ from $r$ to some vertex $v$ into single edges $(r,v)$.
Then we repeat this process starting at each vertex $v$ connected directly to $r$ via such an edge, until reaching the leafs of the tree. Clearly this process transforms the binary tree of depth $\log n$ into a tree of depth $1/\epsilon$ and degree $n^\epsilon$. On this transformed tree, we can proceed the same way as before, writing a linear program of size $|L|^{O(1/\epsilon)}\poly(n,\nconstraints,\dimx)$ and use randomized rounding. However, because we have nodes of degree $n^\epsilon$, this creates additional correlations during the rounding. More precisely, we argue that the random variables of interest can be written as a sum of independent variables of value lying in the interval $[0,n^\epsilon]$. To apply the Chernoff bound we need to scale down variables by $n^\epsilon$; this is why $n^\epsilon$ appears in the approximation ratio.
\notelr{Etienne: slightly changed the parameters above to fit more closely our Theorem.}

Without costs, the recursive rounding algorithm gives a good concentration; this follows from the work of \cite{guo2022approximating}. With costs, the algorithm would maintain costs only in expectation.
However, to guarantee with probability~1 that our cost is at most the
cost of the LP solution, we combine the randomized rounding with ideas from~\cite{rohwedder2025cost}, namely we perform the randomized
rounding in several stages, where we can in each stage negate all choices. Then in each stage if the cost increases, we can negate the change leading to decreasing cost instead. If applied carefully,
this only increases the approximation ratio by a factor of $O(\frac1\epsilon)$, compared to the independent rounding procedure.

We remark that for most of the applications, if the costs $c$ are polynomially bounded integers, or we are allowed to ensure this by using a standard discretization procedure with a $(1+\epsilon)$-loss in approximation, then we can encode the cost information into the states of the DP. This will result in an Additive-DP instance without costs, and thus a simpler rounding algorithm and slightly better approximation guarantee.


\subsection{Related works}
Several notorious problems in approximation algorithm have a
state-of-the-art algorithm with a similar guarantee to our main theorem:
For both the Santa Claus problem and the Directed Steiner Tree problem, the best-known guarantees are an $O(n^\epsilon \text{polylog}(n))$-approximation in time $n^{O(1/\epsilon)}$ for any $\epsilon >0$ (see e.g. \cite{charikar1999approximation,grandoni2019log2,chakrabarty2009allocating}).
In some sense, our work is a very broad generalization of these two results.

For several other related problems, only the quasi-polynomial time
algorithm is known, but no non-trivial result in polynomial time. 
This includes the Submodular Orienteering problem studied in \cite{chekuri2005recursive}, where the elegant Recursive Greedy algorithm guarantees an $O(\log n)$-approximation in quasi-polynomial time.
Similar to our approach, the algorithm exploits the recursive structure of the problem.
This algorithm works roughly 
as follows: we branch on a decision $C$, then we recurse into the first subproblem and compute a solution that maximizes a given submodular objective. We then recurse in the next subproblem, but replace the original submodular function by the marginal gains based on the already selected solution from the first instance. We continue in this manner. Because of the added dependencies, this recursive algorithm
cannot be combined naively with dynamic programming. The depth of the recursion is in this case inherently logarithmic and the number of branches is polynomial, leading to only a quasi-polynomial time result.
In the special case of Colorful Orienteering, our algorithm can be applied and generalizes the $O(\log n)$-approximation to an $O(n^\epsilon \text{polylog}(n))$-approximation for any $\epsilon>0$, in time $n^{O(1/\epsilon)}$. A similar
status is in the Tree Orienteering~\cite{ghuge2022quasi} variant of Directed
Steiner Tree mentioned before
and in~\cite{li2024polylogarithmic}, which gives a quasi-polynomial time polylogarithmic approximation for the robust $s$-$t$ path problem, which we generalize to an $O(n^\epsilon \text{polylog}(n))$-approximation for any $\epsilon>0$, in time $n^{O(1/\epsilon)}$.

Another notable line of work related to our result revolves around the concept of dependent randomized rounding \cite{gandhi2006dependent,ageev2004pipage,chekuri2010dependent,chekuri2011multi}. Here, we are given a solution to an
LP relaxation, typically one that precisely captures the convex
hull of objects of interests, for example, matroid bases or b-matchings.
The goal is to randomly round it to one of these objects while
maintaining concentration bounds comparable to what one
would achieve in independent randomized rounding where the structure of the objects is disregarded.
In a typical applications one might obtain a fractional solution to the relaxation that satisfies additional packing constraints and then via the dependent rounding scheme, one can find a solution that still
satisfies these packing constraints approximately. Such an
application of dependent rounding is very similar to our use-cases.
Dependent rounding has strong limitations. It has mainly
been applied on matroids or with b-matchings (or assignment problems). In the case of
b-matchings, however, one has to make further restrictions. Namely,
the concentration holds only for linear functions over edges
that share an endpoint, see~\cite{chekuri2010dependent, gandhi2006dependent}.
It is difficult 
to apply the dependent rounding techniques to obtain a result
similar to Theorem~\ref{thm:matching}, since dependent rounding
is based on the naive LP relaxation and this relaxation has a large integrality
gap. For completeness we give an example of such an integrality gap in Appendix~\ref{sec:integrality-gap}. Thus, our techniques overcome a significant limitation of dependent rounding.


Another notable line of work related to our result is the field of robust optimization \cite{beyer2007robust}, where one often encodes uncertainty as different ``scenarios'', where each scenario corresponds to a linear objective. A particularly relevant paper is \cite{grandoni2014new}, which shows how to handle a constant number of additional budget constraints on top of well-known structures such as spanning tree or bipartite matching.

\subsection{Overview of the paper}
The rest of the paper is organized as follows. In Section~\ref{sec:reduction_DPTree} we give the reduction from Additive-DP to Perfect Binary Tree Labeling. In Section~\ref{sec:perfect-TL} we give our approximation algorithm for the Perfect Binary Tree Labeling problem, without costs. In Section~\ref{sec:costs}, we extend the previous algorithm to handle costs. Finally Section~\ref{sec:app-packing}, Section~\ref{sec:app-covering}, and Section~\ref{sec:app-aug}  are devoted to the applications of our main theorem.

\section{Reduction from Additive-DP to Perfect Binary Tree Labeling}
\label{sec:reduction_DPTree}

In this section, we rephrase the Additive-DP Problem as the Flexible Tree Labeling Problem, formally introduced Section~\ref{sec:flex} and then establish via reductions a simplified structure that resembles a perfect binary tree. This ultimately leads to the Perfect Binary Tree Labeling Problem, which is formally defined in Section~\ref{sec:perfect-TL}.  We shall use the following notations in this section. Given a rooted tree $T$ and a vertex $v$ in $T$, we use $\Lambda_T(v)$ to denote the set of children of $v$ in $T$, $\Lambda^*_T(v)$ to denote the set of descendants of $v$, including $v$ itself, and $T[v]$ to denote sub-tree of $T$ rooted at $v$. 

\subsection{Rephrasing Additive-DP as the Flexible Tree Labeling Problem}
\label{sec:flex}
Most of the time in the reduction, we work with the Flexible Tree Labeling (FTL) problem defined as follows. We are given a set $L$ of labels with a partial order $\prec$. There is a root label $\ell^\circ$, a set $L^\base \subseteq L \setminus \{\ell^\circ\}$ of base labels. Label $\ell^\circ$ is the unique maximal label, and all labels in $L^\base$ are minimal labels with respect to $\prec$.  For each base label $\ell \in L^\base$, we are given a non-zero vector $x^{(\ell)} \in \Z_{\geq 0}^\dimx$.   We are also given many allowed pairs of the form $(\ell \in L \setminus L^\base, L')$, where $L'$ is a non-empty multi-set of labels satisfying $\ell' \prec \ell$ for every  $\ell' \in L'$.

\begin{definition}
    A valid label tree to the instance, denoted as $T = (V_T, E_T, r, \ellv := (\ell_v)_{v \in V_T})$ is a rooted tree with vertices $V_T$, edges $E_T$, root $r \in V_T$, where every vertex $v \in V_T$ has a label $\ell_v \in L$. The following conditions must be satisfied:
    \begin{itemize}
    	\item $\ell_r = \ell^\circ$. 
        \item For every leaf $v \in V_T$, we have $\ell_v \in L^\base$.
    	\item For every non-leaf $v \in V_T$, $(\ell_v, \{\ell_u: u \in \Lambda_T(v)\})$ is an allowed pair. 
    \end{itemize}
    
    The solution vector of a valid label tree $T$ is defined as 
    \begin{align*}
        x^{(T)}:=\sum_{\text{leaf }v \in V_T}x^{(\ell_v)}.
    \end{align*}
\end{definition}

We are further given a vector $c \in \R^\dimx$ and a matrix $A \in [0, 1]^{\nconstraints \times \dimx}$.  Our goal is to find a valid label tree $T$ with the minimum $c\T x^{(T)}$ subject to the packing constraint $A x^{(T)} \leq {\bf1}$.

\begin{tcolorbox}[colback=gray!10!white,colframe=gray!50!black,title=The {\sc Flexible Tree Labeling} (FTL) Problem]
\textbf{Input:} Set $L$ of labels with partial order $\prec$, $\ell^\circ \in L, L^\base \subseteq L\setminus \{\ell^\circ\}$, $\left(x^{(\ell)} \in \Z_{\geq 0}^\dimx\right)_{\ell \in L^\base}$, allowed pairs of form $(\ell, L')$, $c \in \R^\dimx$, and $A \in [0, 1]^{\nconstraints \times \dimx}$. \\[3pt]
\textbf{Output:} A valid label tree $T = (V_T, E_T, r, \ellv:=(\ell_v)_{v \in V_T})$ satisfying $A x^{(T)} \leq {\bf1}$, so as to minimize $c^\mathrm{T} x^{(T)}$.
\end{tcolorbox}

Before converting the Additive-DP instance to an FTL instance, we need to address a small technical issue. In an Additive-DP instance, for each non-base problem $I$ and a choice $C \in [k_I]$, we may have a vector $x^{(I, C)}$. It is convenient for us to make $x^{(I, C)} = {\bf0}$. This can be done as follows: if $x^{(I, C)} \neq {\bf0}$, we introduce a new base problem $I'$ for the $(I, C)$ pair, define $x^{(I')} = x^{(I, C)}$, change $x^{(I, C)}$ to $0$, and add $I'$ to the children of $I$ for the choice $C$.

With the modification, we can then directly convert the Additive-DP instance to an equivalent FTP instance. Each sub-problem $I \in \mathcal{I}$ is a label in $L$. The base labels, $x$-vectors of base labels and the $\prec$ partial order are carried over directly from the Additive-DP instance. There is a unique label in $\ell^\circ$ correspondent to the root problem $I$. For a decomposition of a non-base instance $I$ into $\Lambda_1(I, C), \Lambda_2(I, C), \dotsc, \Lambda_{\ell(I, C)}(I, C)$ using a decision $C$, we let $(I, \{\Lambda_1(I, C), \Lambda_2(I, C), \dotsc, \Lambda_{\ell(I, C)}(I, C)\})$ be an allowed pair. \smallskip

Once we setup the initial FTL instance, we gradually refine it until we reach a Perfect Binary Tree Labeling instance.  In the initial instance, we have $|L^\base| + \sum_{\text{allowed pairs } (\ell, L')} |L'| \leq \listslength$, and the size of the optimum tree is at most $2\sizebound$, by our definition of $\sizebound$ and $\size_{I^\circ}(\cdot)$.

\paragraph{Making set size at most 2 in allowed pairs.} First, we can make the following guarantee: every allowed pair $(\ell, L')$ has $|L'| \in \{1, 2\}$. If some pair $(\ell, L')$ has $|L'| \geq 3$, we can apply the following operation.  We create a \emph{full binary tree} of labels with root label being $\ell$, the leaf labels being $L'$, and other labels being newly created; thus, we create $|L'| - 2$ new labels. We remove $(\ell, L')$ from the set of allowed pairs. For every non-leaf label $\ell'$, and its two child labels $\ell_1$ and $\ell_2$ in the tree, we let $(\ell', \{\ell_1, \ell_2\})$ be an allowed pair. The relation $\prec$ can be easily extended to include the new labels.  It is easy to see that this operation does not change the problem. 

After this operation, the number of labels now becomes at most $\listslength$. The size of the optimum tree becomes at most $4\sizebound$. We shall use $L$ to denote the label set of the new instance; so $|L| \leq \listslength$. abusing notations slightly, we still use $\sizebound$ to denote the upper bound on the size of the optimum tree.  As each allowed pair $(\ell, L')$ has $|L'| \leq 2$, there are at most $O(|L|^3)$ allowed pairs. We can disregard this bound from now on. 

\subsection{Making the height of a valid label tree small}
In this section, we create a new equivalent instance, where every valid label tree has height $O(\log \sizebound)$. 

\begin{definition}
Given a valid label tree $T = (V_T, E_T, r, (\ell_v)_{v \in V_T})$ for the original instance, a \emph{piece} of $T$ is a connected sub-graph $T'$ of $T$ (so $T'$ is also a rooted tree) satisfying the following condition:    
\begin{itemize}
    \item For every $v \in T'$, either $\Lambda_T(v) \subseteq T'$ or $\Lambda_T(v) \cap T' = \emptyset$.
\end{itemize}
We say some $v \in T'$ is a \emph{portal} in $T'$ if $v$ is a leaf in $T'$ but not $T$. 
\end{definition}
We shall consider a recursive procedure for decomposing a given valid label tree $T = (V_T, E_T, r, (\ell_v)_{v \in V_T})$ into a hierarchy of pieces. The procedure is defined in Algorithm~\ref{alg:decompose}; we call tree-decompose$(T)$ to decompose~$T$. 
\begin{algorithm}
    \caption{tree-decompose$(T')$\Comment{$T'$ is a piece of $T$}}
    \label{alg:decompose}
    \begin{algorithmic}[1]
        \If{$T'$ is a singleton} \Return \EndIf
        \If{$T'$ contains 1 level of edges} 
            \For{every leaf $v$ of $T'$} tree-decompose$(\{v\})$ \EndFor
            \State \Return
        \EndIf
        \State $D \gets $ set of portals in $T'$
        \If{$|D| \leq 2$} 
            \State $v \gets $ a non-root non-leaf vertex in $V_{T'}$ such that $T'[v]$ has size in $\Big[\floor{\frac{|V_{T'}|}3},\ceil{\frac{2|V_{T'}|}3}\Big]$
        \Else
            \State $v \gets $ a vertex in $V_{T'}$ such that $|T'[v] \cap D| = 2$ 
        \EndIf
        \State tree-decompose$\big(T' \setminus (\Lambda^*_{T'}(v) \setminus v)\big)$
        \State tree-decompose$(T'[v])$
    \end{algorithmic}
\end{algorithm}

The procedure naturally gives us a decomposition tree of pieces. Each node in the decomposition tree is a piece $T'$ in a recursion of tree-decompose. A piece $T'$ is a parent of another piece $T''$ if tree-decompose$(T')$ $T'$ calls tree-decompose$(T'')$ directly. 

\begin{lemma}\label{lem:dec-levels}
    The decomposition tree has $O(\log \sizebound)$ levels. Moreover, every piece $T'$ in the tree has at most $3$ portals. 
\end{lemma}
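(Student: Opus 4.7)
My plan is to prove the two parts of the lemma separately, starting with the portal bound and then the depth bound.

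For the portal bound, I would induct on the depth of a piece in the decomposition tree. The root piece $T$ has no portals. For the inductive step, suppose $T'$ is split at vertex $v$ with portal set $D$, and set $k := |D \cap T'[v]|$. The upper child $T' \setminus (\Lambda^*_{T'}(v) \setminus v)$ inherits portals $(D \setminus T'[v]) \cup \{v\}$, where $v$ is a new portal since it has children in $T'$ and is therefore not a leaf of $T$. This gives $|D| - k + 1$ portals. The lower child $T'[v]$ has exactly $k$ portals from $D$, since its root $v$ is not a leaf of $T'[v]$. If the algorithm enters the $|D| \le 2$ branch, both children have at most $|D| + 1 \le 3$ portals; if it enters the $|D| \ge 3$ branch, then by the inductive hypothesis $|D| = 3$ and by the choice of $v$ we have $k = 2$, so each child has exactly $2$ portals.

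For the depth bound, I would consider any root-to-leaf path $T = T'_0, T'_1, \ldots, T'_h$ in the decomposition tree. The crucial observation is that if $|D(T'_i)| = 3$, then the algorithm enters the $|D| \ge 3$ branch, producing two children each with exactly $2$ portals; hence $|D(T'_{i+1})| \le 2$, and two consecutive levels cannot both use the $|D| = 3$ branch. Whenever the $|D| \le 2$ branch is used, the balanced splitter $v$ ensures $|V_{T'_{i+1}}| \le \lceil 2|V_{T'_i}|/3 \rceil$. Combining this with the trivial bound $|V_{T'_{i+1}}| \le |V_{T'_i}|$ for the $|D|=3$ levels, we obtain $|V_{T'_i}| \le (2/3)^{\lfloor i/2 \rfloor} \cdot |V_T| + O(1)$. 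Since $|V_T| \le 4\sizebound$, the path length is $O(\log \sizebound)$, and since the singleton and one-level-of-edges cases terminate immediately, this bounds the overall depth of the decomposition tree.

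The main obstacle I anticipate is handling the $|D| = 3$ subcase, specifically verifying that a vertex $v \in V_{T'}$ with $|T'[v] \cap D| = 2$ always exists. If the three portals were to occupy three distinct subtrees of their common ancestor (a ``1-1-1 split''), no such $v$ would exist. Ruling this configuration out requires a structural invariant on how portals are placed, reflecting the fact that new portals are introduced one at a time via the splitting vertex of the parent piece. Overcoming this will likely require either strengthening the inductive invariant to record that the portals in a piece with $|D|=3$ always admit a 2-vs-1 separation, or being more prescriptive in Case $|D| \le 2$ about how $v$ is chosen (for instance, picking $v$ along a root-to-portal path) to prevent bad configurations from being created in the first place.
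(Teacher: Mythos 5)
Your proof follows essentially the same route as the paper's: the same induction showing that each split creates at most one new portal (the splitting vertex $v$ in the upper piece) so that pieces with at most $2$ portals produce children with at most $3$, while a $3$-portal piece is split by the choice in Step~9 into two pieces with exactly $2$ portals each, combined with the observation that two portal-reducing recursions cannot be adjacent, so at least every other level is size-reducing and shrinks the piece by a constant factor, giving $O(\log\sizebound)$ depth. The obstacle you anticipate is vacuous: the decomposition is only applied after the preprocessing that makes every allowed pair have a child set of size at most two, so every valid label tree is binary, and for any three portal leaves the deepest vertex whose subtree contains at least two of them must contain exactly two of them; hence the vertex $v$ with $|T'[v]\cap D|=2$ in Step~9 always exists, and no strengthened invariant or modified choice of $v$ in the size-reducing case is needed.
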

\begin{proof}    
    If $T'$ has at most $2$ portals, then each of its two child pieces has at most $3$ portals, as splitting $T'$ into two pieces increase the total number of portals by 1. If $T'$ contains $3$ portals, then by our choice of $v$ in Step~9, $T'[v]$ contains two portals of $T'$. Then each of its two child pieces has exactly $2$ portals.  This finishes the proof of the second part of the lemma.
    
    We say a recursion of tree-decompose is a size-reducing recursion if $v$ is defined in Step 7, and a portal-reducing recursion if $v$ is define in Step 9.  So, there cannot be two portal-reducing recursions, one being the parent of the other. A size-reducing recursion will reduce the size of the piece by a constant factor. Moreover, if $T'$ contains one level of edges, the recursion will terminate in one more level. Therefore, the recursion can run for at most $O(\log \sizebound)$ levels, finishing the proof of the lemma. 
\end{proof}
The decomposition tree will guide our construction of the tree labeling instance. For any piece $T'$ in the decomposition, we remember its root label, its size, and its portal labels in a meta-label. Replacing the pieces  with their meta-labels will give us a valid label tree in our new tree labeling instance, see Figure~\ref{fig:decompose} for an illustration. \medskip

\begin{figure}
    \centering
    \begin{tikzpicture}
        \node(u1)[draw] at (4, 4) {
    \begin{tikzpicture}[every node/.style={inner sep=2pt}]
         \node(v1)[circle, draw] at (5, 5) {1};
        \node(v2)[circle, draw] at (4, 4) {2};
        \node(v3)[circle, draw] at (6, 4) {3};   
        \node(v4)[circle, draw] at (5, 3) {4};   
        \node(v5)[circle, draw] at (4, 2) {5};   
        \node(v6)[circle, draw] at (6, 2) {6};   
        \draw[->] (v1) -- (v2);
        \draw[->] (v1) -- (v3);
        \draw[->] (v3) -- (v4);
        \draw[->] (v4) -- (v5);
        \draw[->] (v4) -- (v6);
        \node at (5, 1.5) {$(1, 6, \emptyset)$};
        \end{tikzpicture}
        };
        \node(u2)[draw] at (0, 3) {
    \begin{tikzpicture}[every node/.style={inner sep=2pt}]
         \node(v1)[circle, draw] at (5, 5) {1};
        \node(v2)[circle, draw] at (4, 4) {2};
        \node(v3)[circle, draw, dashed] at (6, 4) {3};   
        \draw[->] (v1) -- (v2);
        \draw[->] (v1) -- (v3);
        \node at (5, 3.5) {$(1, 3, \{3\})$};
        \end{tikzpicture}
        };
        \draw[->] (u1) -- (u2);
                \node(u3)[draw] at (8, 3) {
    \begin{tikzpicture}[every node/.style={inner sep=2pt}]
        \node(v3)[circle, draw] at (6, 4) {3};   
        \node(v4)[circle, draw] at (5, 3) {4};   
        \node(v5)[circle, draw] at (4, 2) {5};   
        \node(v6)[circle, draw] at (6, 2) {6};   
        \draw[->] (v3) -- (v4);
        \draw[->] (v4) -- (v5);
        \draw[->] (v4) -- (v6);
        \node at (5, 1.5) {$(3, 4, \emptyset)$};
        \end{tikzpicture}
        };
        \draw[->] (u1) -- (u3);
                \node(u4)[draw] at (-1.5, 0.5) {
    \begin{tikzpicture}[every node/.style={inner sep=2pt}]
        \node(v2)[circle, draw] at (4, 4) {2};
        \node at (4, 3.5) {$(1, 1, \emptyset)$};
        \end{tikzpicture}
        };
        \draw[->] (u2) -- (u4);
                        \node(u5)[draw] at (1.5, 0.5) {
    \begin{tikzpicture}[every node/.style={inner sep=2pt}]
        \node(v3)[circle, draw, dashed] at (6, 4) {3};   
        \node at (6, 3.5) {$(3, 1, \{3\})$};
        \end{tikzpicture}
        };
        \draw[->] (u2) -- (u5);
                        \node(u6)[draw] at (4, 0) {
    \begin{tikzpicture}[every node/.style={inner sep=2pt}]
        \node(v3)[circle, draw] at (6, 4) {3};   
        \node(v4)[circle, draw, dashed] at (5, 3) {4};   
        \draw[->] (v3) -- (v4);
        \node at (5.5, 2.5) {$(3, 2, \{4\})$};
        \end{tikzpicture}
        };
        \draw[->] (u3) -- (u6);
                \node(u7)[draw] at (9, -0.5) {
    \begin{tikzpicture}[every node/.style={inner sep=2pt}]
        \node(v4)[circle, draw] at (5, 3) {4};   
        \node(v5)[circle, draw] at (4, 2) {5};   
        \node(v6)[circle, draw] at (6, 2) {6};   
        \draw[->] (v4) -- (v5);
        \draw[->] (v4) -- (v6);
        \node at (5, 1.5) {$(4, 3, \emptyset)$};
        \end{tikzpicture}
        };
        \draw[->] (u3) -- (u7);
                                \node(u8)[draw] at (2, -2) {
    \begin{tikzpicture}[every node/.style={inner sep=2pt}]
        \node(v4)[circle, draw, dashed] at (5, 3) {4};   
        \node at (5, 2.5) {$(4, 1, \{4\})$};
        \end{tikzpicture}
        };
        \draw[->] (u6) -- (u8);
                \node(u9)[draw] at (6, -2) {
    \begin{tikzpicture}[every node/.style={inner sep=2pt}]
        \node(v5)[circle, draw] at (4, 2) {5};   
        \node at (4, 1.5) {$(5, 1, \emptyset)$};
        \end{tikzpicture}
        };
        \draw[->] (u7) -- (u9);
                \node(u10)[draw] at (12, -2) {
    \begin{tikzpicture}[every node/.style={inner sep=2pt}]
        \node(v6)[circle, draw] at (6, 2) {6};   
        \node at (6, 1.5) {$(6, 1, \emptyset)$};
        \end{tikzpicture}
        };
        \draw[->] (u7) -- (u10);
    \end{tikzpicture}
    \caption{Example of decomposition tree. Inside each rectangle is a piece on which tree-decompose is called. Portals are dashed. Labels are the numbers, with $2,5,6$ being base labels. The tripel below each piece is the corresponding label of the new FTL instance.}
    \label{fig:decompose}
\end{figure}
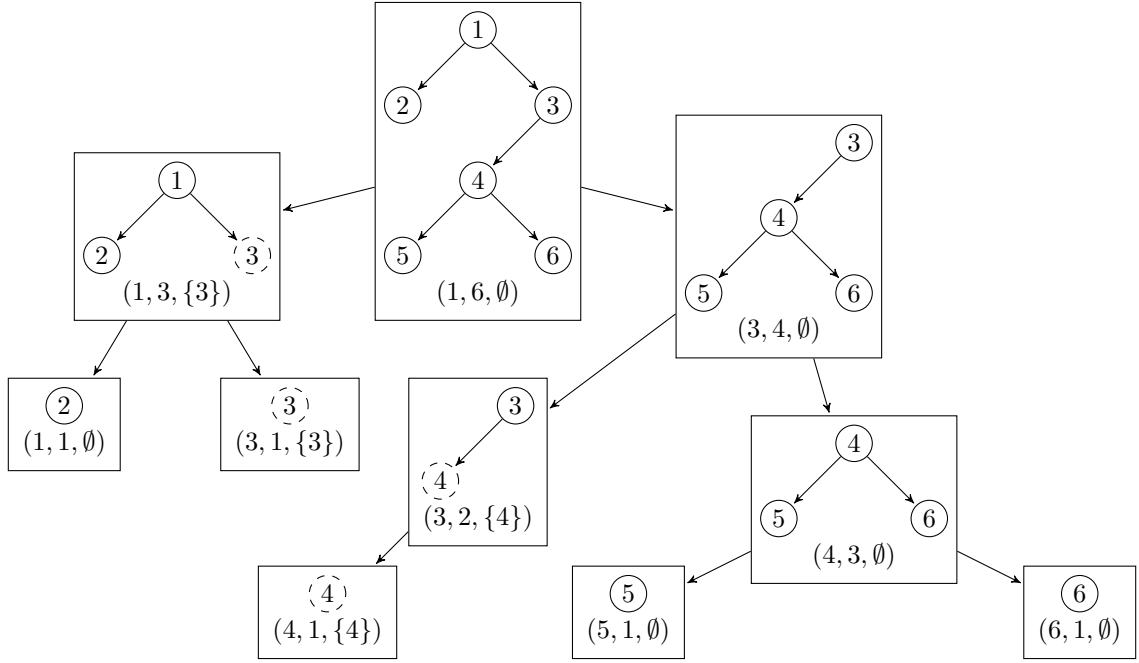

With the intuition, we now formally define the new FTL instance. The labels, root and base labels, the partial order $\prec$, and the $x$ vectors are defined as follows:
\begin{itemize}
    \item A label is of the form $(\ell, s, D)$, where $\ell \in L$, $s \in [\sizebound]$, and $D$ is a multi-set of labels in $L \setminus L^\base$ of size at most $3$, such that $\ell' \prec \ell$ for every $\ell' \in D$. There are two exceptions: $(\ell, 1, \{\ell\})$ for every $\ell \in L \setminus L^\base$ is also a label, and we have a root label $\ell^\bullet$.
    \item We have $(\ell', s', D') \prec (\ell, s, D)$ if $\ell' \prec \ell$, or $\ell' = \ell$ and $s' < s$.
    \item For every $\ell \in L^\base$, $(\ell, 1, \emptyset)$ is a base label with $x^{(\ell, 1, \emptyset)} = x^{(\ell)}$. For every $\ell \in L \setminus L^\base$, $(\ell, 1, \{\ell\})$ is a base label with $x^{(\ell, 1, \{\ell\})} = {\bf0}$. 
\end{itemize}
The intuitive meaning of a label $(\ell, s, D)$ is that the solution vectors
that can be obtained by trees rooted in it in the new instance correspond to the solution vectors that can be obtained from trees in the original instance rooted in label $\ell$, with $s$ nodes, and where for each $\ell'\in D$ we may 
have one leaf with label $\ell'$ (although normally this would not be allowed in a label tree, since it is not a leaf label).
We construct the set of allowed pairs as follows:
\begin{itemize}
    \item We allow $(\ell^\bullet, \{(\ell^\circ, s, \emptyset)\})$ for every $s \in [\sizebound]$.
    \item We allow $\big((\ell, s, D), \{(\ell, s', D'), (\ell'', s'', D'')\} \big)$ in the new instance if 
    \begin{itemize}
        \item $\ell, \ell' \in L$ with $\ell' \prec \ell$,
        \item $s\geq 3, s', s'' \in [2, s-1], s' + s'' = s + 1$, and
        \item $\ell'' \in D'$ and $D = (D' \setminus \{\ell''\}) \uplus D''$.
    \end{itemize}\notelr{rearranged since last two seem like corner cases}
    \item For some $\ell, \ell' \in L$ with $(\ell, \{\ell'\})$ allowed in the original instance, we allow $\big((\ell, 2, \{\ell'\} \setminus L^\base), \{(\ell', 1, \{\ell'\} \setminus L^\base)\}\big)$ in the new instance.
    \item For every $\ell, \ell', \ell'' \in L$ with $(\ell, \{\ell', \ell''\})$ allowed in the original instance, we allow $\big((\ell, 3, \{\ell', \ell''\} \setminus L^\base), \{(\ell', 1, \{\ell'\} \setminus L^\base), (\ell'', 1, \{\ell''\} \setminus L^\base)\} \big)$ in the new instance. 
\end{itemize}

\begin{lemma}
    For every valid label tree $T$ for the original instance of size at most $\sizebound$, there is a valid label tree for the new instance of height $O(\log \sizebound)$ with the same solution vector. 
\end{lemma}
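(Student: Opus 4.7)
The plan is to let the procedure $\textrm{tree-decompose}(T)$ produce a decomposition tree $\mathcal{D}$ of pieces, and to read a valid label tree $T^\star$ for the new instance directly off $\mathcal{D}$. To each piece $T'$ appearing in $\mathcal{D}$, I assign the label $\big(\ell_{T'},\,|V_{T'}|,\,D_{T'}\big)$, where $\ell_{T'}$ is the label (in the original instance) of the root of $T'$ and $D_{T'}$ is the multi-set of labels of portals of $T'$. Since portals are, by definition, leaves of $T'$ that are not leaves of $T$, every element of $D_{T'}$ lies in $L \setminus L^\base$, and $|D_{T'}| \le 3$ by Lemma~\ref{lem:dec-levels}; so this is indeed a valid label of the new instance. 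A fresh root carrying $\ell^\bullet$ is then placed above the top-level piece, which is legal because $T$ itself has no portals and its root bears $\ell^\circ$, making $(\ell^\bullet, \{(\ell^\circ, |V_T|, \emptyset)\})$ an allowed pair.

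The core verification is that every parent/child relationship in $\mathcal{D}$ matches one of the allowed pairs in the new instance. First, when a piece $T'$ (with more than one level of edges) is split at a non-root non-leaf vertex $v$ into $T'_1 = T' \setminus (\Lambda^*_{T'}(v) \setminus \{v\})$ and $T'_2 = T'[v]$, the roots of the two subpieces carry the labels $\ell_{T'}$ and $\ell_v$, the sizes satisfy $|V_{T'_1}|+|V_{T'_2}| = |V_{T'}|+1$ (since $v$ is the unique shared vertex), and the newly introduced portal $v$ contributes $\ell_v \in D_{T'_1}$ with $D_{T'} = (D_{T'_1} \setminus \{\ell_v\}) \uplus D_{T'_2}$; this matches the second family of allowed pairs. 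Second, when $T'$ has a single level of edges, $\mathcal{D}$ attaches one singleton child $\{v'\}$ per leaf $v'$ of $T'$, and by direct inspection $D_{T'}$ and each $D_{\{v'\}}$ equal the sets $\{\ell_{v'}\} \setminus L^\base$ (or an analogous two-label set) appearing in the third and fourth families of allowed pairs. Finally, every leaf of $\mathcal{D}$ is a singleton $\{v\}$, which I map to $(\ell_v, 1, \emptyset)$ if $\ell_v \in L^\base$ and to $(\ell_v, 1, \{\ell_v\})$ otherwise; both are base labels of the new instance.

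Two global properties then remain. The height bound is immediate from Lemma~\ref{lem:dec-levels}, which caps the depth of $\mathcal{D}$ at $O(\log \sizebound)$, with at most one extra layer for the $\ell^\bullet$ root. For the solution vector, the key observation is that any split vertex $v$ is necessarily a non-leaf of $T$, hence $\ell_v \notin L^\base$, so any duplicated singletons $\{v\}$ contribute only the zero vector $x^{(\ell_v,1,\{\ell_v\})}=\mathbf{0}$. Meanwhile, each leaf $v$ of $T$ is never chosen as a split vertex, so it sits in exactly one of $T'_1, T'_2$ at every split and eventually produces a unique singleton leaf of $T^\star$ contributing $x^{(\ell_v)}$. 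Summing gives $x^{(T^\star)} = \sum_{\text{leaf } v \in V_T} x^{(\ell_v)} = x^{(T)}$.

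The step I expect to require the most care is the portal bookkeeping at a split: checking that the identity $D_{T'} = (D_{T'_1} \setminus \{\ell_v\}) \uplus D_{T'_2}$ holds as multi-sets of labels, not merely as sets, so that the construction remains consistent even when several portals happen to share the same label. Once this is in place, together with the parallel (and simpler) check for the single-level-of-edges case, the remainder of the argument is a routine unfolding of the definitions.
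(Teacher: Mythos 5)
Your proposal is correct and follows essentially the same route as the paper: label each piece of the tree-decompose decomposition by (root label, size, portal labels), attach the $\ell^\bullet$ root, check the allowed-pair families at splits and at one-level pieces, and invoke Lemma~\ref{lem:dec-levels} for the height bound. The only (harmless) difference is in the solution-vector accounting, where you argue that non-leaf singletons contribute $\mathbf{0}$ regardless of multiplicity while each leaf of $T$ yields exactly one singleton, whereas the paper asserts a one-to-one correspondence between non-root vertices of $T$ and singletons; both yield $x^{(T^\star)} = x^{(T)}$.
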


\begin{proof}
    Consider the decomposition tree constructed by the procedure tree-decomposition$(T)$. Each node in the tree corresponds to a piece of $T$. Then, we replace each piece $T'$ with the label $(\ell, s, D)$, where
    \begin{itemize}
        \item $\ell$ is the label of the root of $T'$,
        \item $s$ is the size of $T'$, and
        \item $D$ is the set of labels of the portals of $T'$.
    \end{itemize}
    We take the resulting tree, and attach it to a root with label $\ell^\bullet$. We show this gives a valid label tree for the new instance, whose solution vector is equal to $x^{(T)}$. It has height $O(\log \Delta)$ by Lemma~\ref{lem:dec-levels}.
    
    First, the root piece has label $(\ell \in \ell^\circ, s = |V_T|, \emptyset)$, and $(\ell^\bullet,\{(\ell^\circ, s, \emptyset)\})$ is an allowed pair.  All leaf-pieces are of the form $(\ell, 1, \{\ell\} \setminus L^\base)$, which is a base label of the new instance.
    
    We then verify that the labels of an internal piece $T'$ and its child pieces form an allowed pair. Let $r'$ be the root of $T'$. If $T'$ contains 1 level of edges, then $T'$ is either an edge $(r', v')$, or two edges $(r', v'),(r', v'')$.  For the former case, the label for $T'$ is $(\ell_{r'}, 2, \{\ell_{v'}\} \setminus L^\base)$, and the label for the singleton $v'$ is $(\ell_{v'}, 1, \{\ell_{v'}\} \setminus L^\base)$. They form an allowed pair. For the latter case, the label for $T'$ is $(\ell_{r'}, 3, \{\ell_{v'}, \ell_{v''}\} \setminus L^\base)$, and the label for the singletons $v'$ and $v''$ are respectively $(\ell_{v'}, 1, \{\ell_{v'}\} \setminus L^\base)$ and $(\ell_{v''}, 1, \{\ell_{v''}\} \setminus L^\base)$. They also form an allowed pair. 

    Now assume $T'$ contains at least 2 levels of edges. Then tree-decompose$(T')$ chooses a non-leaf vertex $v \neq r'$ in $T'$. 
    $T' \setminus (\Lambda^*_v \setminus v)$ and $T'[v]$ are the two child-pieces of $T'$. 
    We have $\ell_v \prec \ell_{r'}$ and the total size of the two pieces is the size of $T'$ plus 1. $v$ is a portal in $T_1$.  The set of portals in $T$ is the set of portals in $T' \setminus (\Lambda^*_v \setminus v)$ excluding $v$, union the set of portals in $T'[v]$. Therefore, the labels for the three pieces form an allowed pair. 

    Finally we show that the resulting label tree has solution vector equaling to $x^{(T)}$. Each non-root $v$ in $T$ corresponds to a singleton in the decomposition tree. For a leaf $v$ with a label $\ell:= \ell_v \in L^\base$, the singleton in the decomposition tree has label $(\ell, 1, \emptyset)$ and we have $x^{(\ell, 1, \emptyset)} = x^{(\ell)}$. The $x$ vector for a singleton $v$ for a non-leaf $v$ is $0$. This finishes the proof. 
\end{proof}

\begin{lemma}
    Let $B = (V_B, {E_B}, r_B, ({\bfl_p})_{p \in V_B})$ be a valid label tree for the new instance. There is a valid label tree $T$ for the original instance with size at most $\sizebound$ and $x^{(T)} = x^{(B)}$. 
\end{lemma}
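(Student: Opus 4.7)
The plan is to invert the construction of the previous lemma: given the valid label tree $B$ for the new instance, I will recursively ``unfold'' each vertex $p \in V_B \setminus \{r_B\}$ into a piece $T_p$ of the final tree, and set $T$ to be the piece obtained from the unique child of $r_B$. The key invariant to maintain is that if $p$ has label $(\ell, s, D)$, then $T_p$ is a rooted labeled tree whose root carries label $\ell$, whose total size is $s$, and whose distinguished set of \emph{portal} leaves carries labels equal (as a multiset) to $D$; moreover every non-root, non-portal vertex of $T_p$ will satisfy the allowed-pair condition of the original FTL instance.

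I would carry out the induction bottom-up on $B$. In the base case, either $\ell_p = (\ell, 1, \emptyset)$ with $\ell \in L^\base$ and $T_p$ is a singleton leaf with label $\ell$ and no portals, or $\ell_p = (\ell, 1, \{\ell\})$ with $\ell \in L\setminus L^\base$ and $T_p$ is a singleton that is itself the portal. For the inductive step, I read off the allowed pair used at $p$. For the single-edge rule $\bigl((\ell, 2, \{\ell'\} \setminus L^\base), \{(\ell', 1, \{\ell'\} \setminus L^\base)\}\bigr)$ and the two-edge rule, I create a root labeled $\ell$ and attach the root(s) of the child piece(s) as its children; the portal-sets carry through correctly by construction. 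For the size-reducing rule with children $q_1, q_2$ labeled $(\ell, s', D')$ and $(\ell'', s'', D'')$ with $\ell'' \in D'$, I pick a portal of $T_{q_1}$ carrying label $\ell''$ (which exists by the invariant) and identify it with the root of $T_{q_2}$. This identification merges two vertices into one, hence $|V_{T_p}| = s' + s'' - 1 = s$, and the new portal multiset is $(D' \setminus \{\ell''\}) \uplus D'' = D$, as required.

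Next I apply the invariant at the unique child $q$ of $r_B$, whose label must have the form $(\ell^\circ, s_0, \emptyset)$ with $s_0 \in [\sizebound]$ by the only allowed pair at $r_B$. Then $T := T_q$ is a rooted tree of size at most $\sizebound$, rooted at a vertex with label $\ell^\circ$, with \emph{no portals}; every leaf therefore carries a label in $L^\base$, and every internal vertex satisfies the allowed-pair condition, so $T$ is a valid label tree of the original instance. Finally, the leaves of $T$ are precisely the singletons produced by base-label vertices of $B$ of the form $(\ell, 1, \emptyset)$ with $\ell \in L^\base$, and each contributes $x^{(\ell)} = x^{(\ell, 1, \emptyset)}$; summing gives $x^{(T)} = x^{(B)}$.

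The delicate step I expect to need the most care is the size-reducing rule: one must verify that the identified vertex truly realizes an allowed pair of the \emph{original} instance at the merged node (its children after gluing are exactly the children of the root of $T_{q_2}$, whose allowed-pair was already certified by induction on $q_2$), and one must track the portal multiset with exact multiplicities rather than just as a set, since the same label may appear twice in $D'$ or $D''$. Everything else is a routine bookkeeping argument powered by the recursive definition of the new allowed pairs.
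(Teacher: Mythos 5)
Your proposal is correct and follows essentially the same route as the paper's proof: a bottom-up induction over $B$ maintaining the invariant that the piece built for a vertex labeled $(\ell,s,D)$ has root label $\ell$, size $s$, portal (marked) leaves with label multiset $D$, base labels on all other leaves, and allowed pairs at internal vertices, with the size-reducing rule handled by identifying a portal leaf of label $\ell''$ with the root of the sibling piece. The only cosmetic difference is that the paper also records explicitly that the $(\ell,1,\{\ell\})$ leaves of $B$ contribute the zero vector, so that $x^{(B)}$ indeed equals the sum over the $(\ell,1,\emptyset)$ leaves, which is the step you summarize as ``summing gives $x^{(T)} = x^{(B)}$.''
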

\begin{proof}
    We recover the label tree $T$ for the original instance from the bottom to top of $B$. First, we remove the root,  whose label is $\ell^\bullet$, from $B$. We use $r_B$ to denote the new root of $B$. For every $p \in V_B$, we create a rooted tree $T^p$ with labels on vertices; labels are in $L$. Some leaves in $T^p$ are \emph{marked}, and the others are \emph{unmarked}. Let $\bfl_p = (\ell_p, s_p, D_p)$. We guarantee the following conditions for $T^p$:
    \begin{itemize}
        \item The root of $T^p$ has label $\ell_p$.
        \item The size of $T^p$ is $s_p$.
        \item The set of labels of marked vertices in $T^p$ is $D_p$.
        \item If a leaf is not marked, then its label is in $L^\base$.
        \item For any non-leaf node $v$ in $T^p$, its label and the label set of its children form an allowed pair in the original instance.
    \end{itemize}

    Now we describe how to construct $T^p$'s for $p$ from bottom to top of $B$. 
    
    Assume $\bfl_p = (\ell_p = \ell, s_p = 1, D_p)$. Then $p$ must be a leaf of $B$, and $D_p = \{\ell\} \setminus L^\base$.  Then, $T^p$ contains a singleton with label $\ell$, and the vertex is marked if and only if $\ell \notin L^\base$. Clearly all the 5 conditions are satisfied.

    Then assume $p$ is the parent of 1 leaf $p'$ in $B$. Then, $T^{p'}$ is a singleton, $s_p = 2$ and $D_p = D_{p'}$. $T^{p}$ is constructed by creating a root vertex with child $T^{p'}$; the root has label $\ell_p$. All the 5 conditions are satisfied. In particular, the last condition is satisfied as $(\ell_p, \{\ell_{p'}\})$ is an allowed pair in the original instance. 

    Similarly, if $p$ is the parent of 2 leaves $p'$ and $p''$ in $B$, $T^p$ is constructed by creating a root vertex with label $\ell_p$, and two child singletons $T^{p'}$ and $T^{p''}$. 

    It remains to consider the case where $p$ has two children $p'$ and $p''$ in $B$, $\ell_{p''} \prec \ell_{p} = \ell_{p'}$, $s_p\geq 3, s_{p'}, s_{p''} \in [2, s_p-1]$, $s_{p'} + s_{p''} = s_p + 1$, $\ell_{p''} \in D'$ and $D_p = (D_{p'} \setminus \{\ell_{p''}\}) \uplus D_{p''}$. To construct $T^p$, we take $T^{p'}$, and any leaf $v$ in $T^{p'}$ with label $\ell_{p''}$, which exists by that $\ell_{p''} \in D_{p'}$, identify the root of $T^{p''}$ with $v$; as $v$ is not a leaf anymore, it is unmarked.  All the 5 properties are satisfied: the third property holds as $D_p = (D_{p'} \setminus \ell_{p''}) \cup D_{p''}$, the fourth and fifth properties follow from the respective properties for $T^{p'}$ and $T^{p''}$.     

    Notice that the final tree $T^{r_B}$ created has no marked leaves. Therefore it is a valid label tree of size $s_{r_B} \leq \sizebound$. There is a one-to-one correspondence between non-root vertices of $T^{r_B}$ and the leaf vertices of $B$. For a leaf vertex $v$ of $T^{r_B}$, we have $\ell_v \in L^\base$, and the correspondent vertex in $T^{r_B}$ has label $(\ell_v, 1, \emptyset)$ and $x^{(\ell_v, 1, \emptyset)} = x^{\ell_v}$. For non-leaf non-root vertex $v$ of $T^{r_B}$, we have $\ell_v \notin L^\base$, and the correspondent vertex in $T^{r_B}$ has label $(\ell_v, 1, \{\ell_v\})$ and $x^{(\ell_v, 1, \{\ell_v\})} = 0$. Therefore, we have $x^{(B)} = x^{(T)}$. 
\end{proof}

\subsection{Final cleaning}
Again we use $L, \ell^\circ, L^\base$ denote the labels, root and base labels of the current FTL instance. It almost satisfies the properties of a Perfect Binary Tree Labeling
instance that we shall define in the next section, except that we need to address the issue that a valid label tree might not be a perfect binary tree. 

To address the issue, we can introduce some dummy labels and add dummy vertices to a valid label tree.  Let $H = O(\log \sizebound)$ be an upper bound on the height of the tree, i.e, the maximum number of edges in a root-to-leaf path in the tree. The new labels will be of the form $(h, \ell)$ where $h$ is an integer in $[0, H]$ and $\ell \in L \cup \{\bot\}$. Intuitively, for a label $(h,\ell)$,
$h$ describes how many levels are left and it is used like a counter. If the tree would end before level $H$,
we instead add dummy labels. Similarly, if a node has only one child, we add
a tree of dummy labels as the other child. Formally:
\begin{itemize}
    \item For every $h \in [H]$ and an allowed pair $(\ell, \{\ell', \ell''\})$ in the original instance, we allow $\big((h, \ell), \{(h-1, \ell'), (h-1, \ell'')\}\big)$ in the new instance.
    \item For every $h \in [H]$ and an allowed pair $(\ell, \{\ell'\})$, we allow $((h, \ell), \{(h-1, \ell'), (h-1, \bot)\})$ in the new instance.
    \item For every $h \in [H]$ and a base label $\ell$, we allow $((h, \ell), \{(h-1, \ell), (h-1, \bot)\})$ in the new instance.
    \item For every $h \in [H]$, we allow $((h, \bot), \{(h-1, \bot), (h-1, \bot)\})$ in the new instance.
\end{itemize}

Finally, the base labels of the new instances are $(0, \ell), \ell \in L \cup \{\bot\}$.  Let $x^{(0, \ell)} = x^{(\ell)}$ if $\ell \in L^\base$, and $x^{(0, \ell)} = 0$ if $\ell = \bot$. The root label is $(H,\ell^\circ)$ \medskip

We summarize the properties of the FTL instance we obtained. Every valid label tree of the instance is a perfect binary tree of height $O(\log \sizebound)$.
Recall that in the given Additive-DP instance, we have $\sizebound$ is the promised size of the optimum solution, and $\listslength = |\mathcal{I}^\base| + \sum_{I \in \mathcal{I} \setminus \mathcal{I}^\base, C \in [k_I]}(\ell(I, C) + 1)$.
The total number of labels in the FTL instance is at most $O(\sizebound\cdot\log(\sizebound)\cdot\listslength^4)$.   As each allowed pair $(\ell, L')$ now has $|L'| = 2$, we only have at most $|L|^3$ different allowed pairs.

\section{Perfect Binary Tree Labeling without costs}
\label{sec:perfect-TL}
As we showed
before, there are approximation-preserving reductions from Additive-DP to Flexible Tree Labeling and from the latter to Perfect Binary Tree Labeling, the 
variant where the structure of
the output tree is fixed to the perfect binary tree of a given height. 
In this section we give an LP-based approximation algorithm for
the perfect binary tree case and therefore also for Additive-DP, but without costs.
The case without costs is simpler and introduces many of the central
techniques, such as the LP relaxation. It also has a slightly better approximation
guarantee for the packing constraints.
In Section~\ref{sec:costs}, we then present a variant that preserves costs
perfectly.

For convenience we recall the setting for the Perfect Binary Tree Labeling problem, as we shall use slightly different terminology compared to the Flexible Tree Labeling problem.   We are given a perfect binary tree $T = (V, E)$ with root $r\in V$ and height $H$, and $n:=2^H$ leaves. Furthermore, we are given a set $L$ of labels. For each label $\ell \in L$, we are given a vector $x^{(\ell)} \in \Z_{\geq 0}^\dimx$. We are also given a set $\Gamma \in L \times L \times L$ of triples of the form $(\ell_{\rm P},\ell_{\rm L},\ell_{\rm R})$ that indicate that $(\ell_{\rm P}, \{\ell_{\rm L}, \ell_{\rm R}\})$ is an allowed pair. We assume without loss of
generality that the left child always has the second label $\ell_{\rm L}$ and the right child the third label $\ell_{\rm R}$. Finally, we are given a label $\ell^\circ\in L$ to use with the root. 

Notice that in Perfect Binary Tree Labeling, we drop the distinction into root, leaf, and other labels
and the partial order on labels.
Instead, we only force the root to have the specific label $\ell^\circ$. 
With a fixed tree, one can easily use the allowed pairs to enforce
a partial order and that leafs can only take specific labels.

A valid labeling of the binary tree, denoted by $\ellv = (\ell_v)_{v \in V}$ is such that the following conditions are satisfied:
\begin{itemize}
	\item $\ell_r = \ell^\circ$. 
	\item For every non-leaf $v \in V$ with left child $u\in V$ and right child $w\in V$, we have $(\ell_v, \ell_u, \ell_w) \in \Gamma$. 
\end{itemize}
The solution vector of a valid labeling is defined as 
\begin{align*}
    x^{(\ellv)}:=\sum_{\text{leaf }v \in V}x^{(\ell_v)}.
\end{align*}

We are further given a vector $c \in \R_{\geq 0}^\dimx$ and a matrix $A \in [0, 1]^{\nconstraints \times \dimx}$.  Our goal is to find a valid labeling $\ell$ with the minimum $c\T x^{(\ellv)}$ subject to the packing constraint $A x^{(\ellv)} \leq {\bf1}$.

\begin{tcolorbox}[colback=gray!10!white,colframe=gray!50!black,title=The {\sc Perfect Binary Tree Labeling} Problem]
\textbf{Input:} Perfect binary tree $T = (V_T, E_T)$ with height $H$ and $n = 2^H$ leaves, root $r\in V_T$, set $L$ of labels, $\ell^\circ\in L$, $\left(x^{(\ell)} \in \Z_{\geq 0}^\dimx\right)_{\ell \in L}$, set $\Gamma \subseteq L \times L \times L$, and $A \in [0, 1]^{\nconstraints \times \dimx}$. \\[3pt]
\textbf{Output:} A valid labeling $\ellv := (\ell_v)_{v \in V}$ of the given binary tree satisfying $\ell_r = \ell^\circ$ and $A x^{(\ellv)} \leq {\bf1}$, so as to minimize $c\T x^{(\ellv)}$.
\end{tcolorbox}
In this section, we present the feasibility variant that ignores the costs: 
\begin{theorem}
\label{thm:tree_labeling_approx}
    For every $\epsilon > 0$, there is a randomized multi-criteria approximation algorithm for Perfect Binary Tree Labeling
    that in time $|L|^{O(1/\epsilon)}\cdot \poly(n, \nconstraints, \dimx)$ finds 
    a labeling with corresponding solution vector $x^{(\ellv)}$ such that:
    \begin{itemize}
        \item $Ax^{(\ellv)} \le O\left(\frac{n^\epsilon}{\epsilon} \log \nconstraints\right) \cdot {\bf1}$,
    \end{itemize}
    assuming there exists a labeling $\ellv^*$ with solution vector $x^{(\ellv^*)}$ such that 
     $Ax^{(\ellv^*)} \le {\bf 1}$.
\end{theorem}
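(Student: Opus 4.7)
I would prove Theorem~\ref{thm:tree_labeling_approx} in three steps: (i) write a hierarchical LP relaxation that strengthens the naive vertex-label LP by remembering, for each tree vertex, the labels of its ``checkpoint ancestors''; (ii) round the LP top-down in $K := 1/\epsilon$ stages, one per checkpoint layer; and (iii) argue concentration via Chernoff applied to block sums of size $n^\epsilon$.

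\textbf{LP construction.} Split the height $H$ of $T$ into $K$ equal layers of depth $h := \epsilon \log n$ each, and call the vertices at depths $0, h, 2h, \ldots, Kh = H$ the \emph{checkpoints}. For every vertex $v$ of $T$ and every candidate label sequence $\vec{\ell} = (\ell_0, \ldots, \ell_k)$ at the $k+1$ checkpoint ancestors of $v$ (with $\ell_0 = \ell^\circ$), introduce a variable $y_{v,\vec{\ell}} \ge 0$, intended to be the probability that those checkpoint ancestors receive exactly the labels $\vec{\ell}$. This yields at most $n \cdot |L|^{K+1} = |L|^{O(1/\epsilon)}\poly(n)$ variables. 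The constraints enforce (a) marginal consistency between parent and child along tree edges, written through auxiliary triple-variables $z_{v,\vec{\ell},\ell_v,\ell_u,\ell_w}$ that are supported on $\Gamma$; (b) conservation at checkpoint vertices, so that the distribution on the next-layer checkpoint labels is consistent when one enters the next super-node; and (c) the packing constraints $\sum_{\text{leaf } v}\sum_{\ell_K} y_{v,(\ell_0,\ldots,\ell_K)} (A x^{(\ell_K)})_i \le 1$ for every $i \in [\nconstraints]$. Feasibility follows by tracing the assumed integer solution $\ellv^*$ through the hierarchy (set the $y$-variables to $0/1$ indicators). The LP can be solved in time $|L|^{O(1/\epsilon)}\poly(n,\nconstraints,\dimx)$.

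\textbf{Rounding and concentration.} After solving the LP, round top-down in $K$ stages. In stage $k$, for each super-node $M$ of height $h$ whose root-checkpoint label has already been fixed in stages $1,\ldots,k-1$, sample a joint labeling of $M$'s interior together with the $n^\epsilon$ new checkpoint leaves of $M$ from the LP's conditional distribution, independently across different super-nodes of the same layer. The resulting integer labeling $\ellv$ is always valid (all triple constraints are satisfied by construction), and $\E[x^{(\ellv)}]$ equals the LP's leaf marginals. Now fix a packing row $i$ and write the final LHS as $X_i = \sum_{\text{leaves } v}(Ax^{(\ell_v)})_i$. Decompose $X_i$ by the bottom-level super-nodes: $X_i = \sum_M X_i^{(M)}$, with each $X_i^{(M)} \in [0,n^\epsilon]$ since $M$ contains $n^\epsilon$ leaves and all entries of $A$ lie in $[0,1]$. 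Conditional on the first $K-1$ stages, the $\{X_i^{(M)}\}_M$ are mutually independent with conditional means summing to at most the LP bound, so Chernoff applied to $X_i^{(M)}/n^\epsilon$ gives an additive violation of $O(n^\epsilon \log \nconstraints)$ for row $i$ with high probability. Iterating the same argument up through the $K$ stages (each stage contributing the same $O(n^\epsilon \log \nconstraints)$ additive penalty) and union-bounding over the $\nconstraints$ rows yields the claimed $O\bigl(\tfrac{n^\epsilon}{\epsilon}\log \nconstraints\bigr)$ bound.

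\textbf{Main obstacle.} The delicate step is making the stage-by-stage concentration combine cleanly: at an intermediate stage $k$, the natural block sums aggregate $n^{\epsilon(K-k+1)}$ leaves and are not bounded by $n^\epsilon$, so one cannot apply Chernoff in one shot at the top. The right viewpoint is to apply Chernoff inductively from the bottom layer upward, always reducing to sums of conditionally-independent $[0,n^\epsilon]$-bounded variables by conditioning on the outcome of the stage below, and absorbing the resulting $O(n^\epsilon \log \nconstraints)$ deviation into the effective right-hand side that the next inductive step needs to beat. This is the same style of inductive concentration already used for degree-bounded Directed Steiner Tree in~\cite{guo2022approximating}, and I expect the bookkeeping to follow similar lines, with the main novelty being the coupling to the flexible triple-based validity constraints coming from $\Gamma$.
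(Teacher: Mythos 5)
Your outline follows the same route as the paper (collapse the tree into $1/\epsilon$ checkpoint super-layers, a prefix-indexed LP of size $|L|^{O(1/\epsilon)}$, top-down stage-wise rounding that is independent across super-nodes, concentration in blocks of $n^\epsilon$ leaves), but two concrete pieces are missing, and the second one is fatal for the bound as claimed. First, your LP imposes the packing constraints only at the top (your constraint (c) on the aggregated leaf marginals). The paper's relaxation additionally imposes them \emph{conditioned on every checkpoint prefix}: every recursive program $\LP^{(v,\ell)}$ contains $Ax\le{\bf 1}$, equivalently $Ax^{(p)}\le\chi_p\cdot{\bf 1}$ for every prefix path $p$ in the compact formulation \eqref{LPC-compact:packing}. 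This is not cosmetic: your key claim that, conditional on the first $K-1$ stages, the block conditional means ``sum to at most the LP bound'' is false with only the global constraint, because the LP can hide large row-$i$ mass behind a low-probability checkpoint label; once the rounding happens to pick that label, the conditional mean explodes and concentration around it is useless. The per-prefix constraints are exactly what guarantee that each super-child, under whatever label is selected for it, contributes at most $1$ per row in conditional expectation. (A smaller wrinkle: $X_i^{(M)}\in[0,n^\epsilon]$ does not follow from $A\in[0,1]^{\nconstraints\times\dimx}$ alone, since $x^{(\ell)}$ is an arbitrary nonnegative integer vector; one must first discard labels with $Ax^{(\ell)}\not\le{\bf 1}$, which is harmless since they can never appear in a feasible integral solution.)

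Second, even granting the strengthened LP, the bookkeeping ``each of the $1/\epsilon$ stages contributes an additive $O(n^\epsilon\log \nconstraints)$ penalty'' does not yield $O\big(\tfrac{n^\epsilon}{\epsilon}\log\nconstraints\big)$. A per-stage high-probability bound necessarily has a multiplicative drift term: the new conditional mean is only bounded by $(1+\delta)\cdot(\text{old mean})+O\big(n^\epsilon\tfrac{\log\nconstraints}{\delta}\big)$, so to keep $(1+\delta)^{1/\epsilon}=O(1)$ you must take $\delta=\Theta(\epsilon)$, and the total becomes $O\big(\tfrac{n^\epsilon}{\epsilon^2}\log\nconstraints\big)$ --- which is exactly the guarantee of the cost-preserving Theorem~\ref{thm:tree_labeling_costs}, not the $\tfrac1\epsilon$ of Theorem~\ref{thm:tree_labeling_approx}. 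The paper avoids per-stage tail bounds altogether: it runs the induction on exponential moments, showing $\E\big[\exp\big(t\cdot\pack_{a,v}/n^\epsilon\big)\big]\le\alpha_i^{ax/n^\epsilon}$ for $v\in V^{(i)}$ with $t=\ln(1+\epsilon/2)$, $\alpha_{1/\epsilon}=1+\epsilon/2$, and $\alpha_i=\exp(\alpha_{i+1}-1)\le 1+\tfrac{1}{i+1/\epsilon}$, so the root MGF is at most $1+\epsilon$; then a \emph{single} tail bound with deviation $\delta=O(\log\nconstraints/\epsilon)$, plus one union bound over the $\nconstraints$ rows, gives the stated violation. This MGF induction (the argument of \cite{guo2022approximating}, which you cite but whose accounting you replace by additive absorption) is the ingredient you need to substitute for your stage-wise Chernoff-and-absorb step; the rest of your plan (LP size, validity of the rounded labeling, preservation of leaf marginals, running time) matches the paper's proof.
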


\subsection{Collapsing the tree $T$}
    Assume without loss of generality that $1/\epsilon \in \N$
    by reducing $\epsilon$ slightly and assume that $\epsilon H\in \N$ by adding at most $1/\epsilon$ dummy layers to the tree. More precisely, replace every leaf vertex $v$ by a small perfect binary tree of the correct depth. Inside this dummy tree, we create dummy labels that enforce that exactly one leaf in the dummy tree inherits the vector that $v$ had, while other leaves get the vector $\textbf{0}$. This is similar to the postprocessing at the end of Section~\ref{sec:reduction_DPTree}.
    
    The latter increases the number of leafs only by a factor of $2^{1/\epsilon}$ and the number of labels by a constant,
    both of which is insignificant towards our goals.
    After this normalization, $H$ is a multiple of $\epsilon H$.

    We work with a small number of $1/\epsilon + 1$ ``super-layers''.
    Towards this, let us denote by $V^{(i)}$ the set of vertices at depth $\epsilon H \cdot i$ in the tree.
    Then $V^{(0)}$ consists only of the root and $V^{(1/\epsilon)}$ are all leafs.
    We define
    $V^{(\le i)} = V^{(0)}\cup\cdots \cup V^{(i)}$. Note that $V^{(\le 1/\epsilon)}$ does not contain all vertices $V$, but only those in super-layers.
    
    For $v\in V^{(i)}$, $i < 1/\epsilon$,
    let $\Lambda^{+}_v \subseteq V^{(i+1)}$ be the descendants
    of $v$ in $V^{(i+1)}$.
    Note that $|\Lambda^{+}_v| = n^\epsilon$.
    Let $I_v^{+}$ be all vertices in layers $\epsilon H \cdot i+1,\dotsc,\epsilon H \cdot (i+1) - 1$ that are descendants of $v$.
    In other words, $I_v^{+}$ are all inner vertices on paths between $v$ and a vertex of $\Lambda^{+}_v$.
    
    Given some label $\ell\in L$, let ${\cal L}^+_{v,\ell}$ be the set of all label tuples that the vertices $\Lambda^+_v$ could have, given $v$ is labeled with $\ell$. More formally,
    \begin{align*}
        {\cal L}^+_{v,\ell} = \{ (\ell_u)_{u\in \Lambda^{+}_v} :\  
         & \ell_w\in L \text{ for all } w\in \{v\} \cup I_v^{+} \cup \Lambda^+_v \\
        &\text{s.t. } (\ell_w, \ell_{w'}, \ell_{w''}) \in \Gamma \text{ for all } w\in\{v\}\cup I_v^{+} \text{ with children } w', w'' \\
        &\text{and } \ell = \ell_v \} 
    \end{align*}

    \subsection{Relaxation for Perfect Tree Labeling}
    We write a continuous relaxation in a recursive manner: $\LP^{(v, \ell)}$ denotes the relaxation for the problem rooted at a super-layer vertex $v\in V^{(\le 1/\epsilon)}$, where the label of $v$ is fixed to $\ell$.  The way we write the linear program may be unusual, but
    this form is cleaner and will be more convenient for the rounding algorithm.
    We will give an equivalent formulation that
    shows that it can be implemented as a linear program of size $|L|^{O(1/\epsilon)}\cdot \poly (n,m,d)$ later.

Given a valid labeling $\ellv$, we define 
\begin{align*}
	x^{(\ellv, v)} = \sum_{\text{leaf descendants $u$ of $v$}} x^{(\ell_u)}
\end{align*}
as the sum of vectors for the leaf descendants of $v$. 

$\LP^{(v, \ell)}$ tries to capture the convex hull of $x^{(\ellv, v)}$, given that $\ell_v = \ell$. For this purpose it contains variables $x\in \RR^d$. In the correspondent integer program, $x$ is the vector $x^{(\ellv, v)}$ for some labeling $\ellv$ with $\ell_v = \ell$. Furthermore, for every $u \in \Lambda^+_v$ and $\ell' \in L$, there is a variable $\chi_{u, \ell'}$, which we think of as $1$ if the label of $u$ is $\ell'$ and $0$ otherwise. 
Finally, we use variables $x^{(u, \ell')}\in \RR^d$, which we think of to be $x^{(\ellv, u)}$ in the case that $\ell_u = \ell'$ and arbitrary otherwise.
These variables allow us to recursively build up $x$ from the solution vectors for vertices in $\Lambda^+_v$.

We also use $\LP^{(v, \ell)}$ to denote the polyhedron of feasible solutions projected to the coordinates of the $x$-vector. 
With this at hand, we can define $\LP^{(v, \ell)}$ recursively. For a leaf $v$ and $\ell \in L$, $\text{LP}^{(v, \ell)}$ contains the single vector $x^{(\ell)}$. Now, focus on an internal vertex $v \in V$ and $\ell \in L$. $\text{LP}^{(v, \ell)}$ is defined as follows:
\begin{align}
    Ax &\leq {\bf 1} \label{LPC-recursive:packing}\\
    x &= \sum_{u \in \Lambda^+_v, \ell'} \chi_{u, \ell'} \cdot x^{(u, \ell')} \label{LPC-recursive:x}\\
    \left(\chi_{u, \ell'}\right)_{u \in \Lambda^+(v), \ell' \in L} &\in \conv({\cal L}^+_{v, \ell}) \label{LPC-recursive:convex}\\
    x^{(u, \ell')} &\in \text{LP}^{(u, \ell')} \text{ or } \chi_{u,\ell'} = 0 &\qquad &\forall u \in \Lambda^+_v, \ell' \in L \label{LPC-recursive:recurse}
\end{align}
    
In the LP,  \eqref{LPC-recursive:packing} are the packing constraints and \eqref{LPC-recursive:x} describes that we can decompose $x^{(\ellv,v)}$ into the sum of $x^{(\ellv,u)}$, $u\in \Lambda_v^+$, \eqref{LPC-recursive:convex} requires that the vector of $\chi$-variables is in $\conv({\cal L}^+_{v,\ell})$, which is defined as follows:
    \begin{equation*}
        \conv({\cal L}^+_{v,\ell}) := \conv\left(\left\{\chi(\ellv) : \ellv \in {\cal L}^+_{v,\ell} \right\}\right).
    \end{equation*}
     Here $\conv(\cdot)$ denotes the convex hull and $\chi(\ellv)$ is a $(|\Lambda^+_v| \cdot |L|)$-dimensional vector defined as follows: For every $u \in \Lambda^+_v$ and $\ell' \in L$, $\chi(\ellv)_{u, \ell'}$ indicates whether $\ell_u = \ell'$.
    Finally,  \eqref{LPC-recursive:recurse} requires each $x^{(u, \ell')}$ to be in the polyhedron $\LP^{(u, \ell')}$. It may be possible that $\LP^{(u, \ell')} = \emptyset$ for some $u \in \Lambda^+_v$ and $\ell' \in L$, in which case there is no valid $x^{(u, \ell')}$. However, we allow this to happen, as long as $\chi_{u, \ell'} = 0$.

It may not be obvious to see that the previous linear program is solvable in time $|L|^{O(1/\epsilon)}\cdot \poly (n,m,d)$, or that it even is a linear program. For this, we translate it to an equivalent linear program that has size $|L|^{O(1/\epsilon)}\cdot \poly (n,m,d)$ in Section~\ref{sec:explicit_LP}. Before that, we need to give a polynomial size extended formulation of $\conv({\cal L}^+_{v,\ell})$. This is done in Section~\ref{sec:extended_conv}. 

    \subsection{Polynomial-size description of $\conv({\cal L}^+_{v, \ell})$}
    \label{sec:extended_conv}
       In this section, we give a polynomial size extended formulation of $\conv({\cal L}^+_{v,\ell})$. Let $\Gamma^+(\ell') \subseteq \Gamma$ be the set of all triples $(\ell_{\rm P}, \ell_{\rm L}, \ell_{\rm R})\in\Gamma$ with $\ell_{\rm P} = \ell'$.
    For every $u\in I^+_v \cup \Lambda^+_v$ let $\Gamma^-(u, \ell') \subseteq \Gamma$
    denote the set of triples $(\ell_{\rm P}, \ell_{\rm L}, \ell_{\rm R}) \in \Gamma$
    where $\ell_{\rm L} = \ell'$ and $u$ is a left child or $\ell_{\rm R} = \ell'$
    and $u$ is a right child.
    Consider the following linear program, which assigns each vertex in $\{v\} \cup I^+_v$ a triple, enforces consistency between
    these assignments and derives the labels of $\Lambda^+_v$ from them.
    
    For every $u \in \{v\} \cup I^+_v$ and triple $t \in \Gamma$, we have a variable $\phi_{v, t}$ indicating if the labels of $v$ and its two children form the triple $t$. For every $u \in \Lambda^+_v$ and label $\ell' \in L$, we have a variable $\chi_{u,\ell'}$ indicating if the label of $u$ is $\ell'$. Consider the following LP:
     \begin{align}
      \sum_{t \in \Gamma^+(\ell)} \phi_{v, t} &= 1 \quad & \label{LPC-convex:start} \\
        \sum_{t\in \Gamma^+(\ell')} \phi_{u,t} &= \sum_{t\in \Gamma^-(u,\ell')} \phi_{w, t} \quad &\forall u\in I^+_v, w\in\{v\} \cup I^+_v, u\text{ child of } w, \ell' \in L \\
         \chi_{u,\ell'} &= \sum_{t\in\Gamma^-(u,\ell')} \phi_{w,t}  &\forall u\in \Lambda^+_v, w\in \{v\}\cup I^+_v, u\text{ child of } w \\
        \phi_{v, t} &= 0\quad &\forall t \in \Gamma\setminus \Gamma^+(\ell) \\
        \phi_{u, t} &\ge 0\quad &\forall u\in \{v\}\cup I^+_v, t\in \Gamma  \label{LPC-convex:end}
	\end{align}

    Clearly the LP above is a relaxation of $\conv({\cal L}^+_{v,\ell})$. We will prove that the projection of the above LP onto $\chi:=(\chi_{u,\ell'})_{u\in \Lambda^+_v,\ell'\in L}$ is in fact equal to $\conv({\cal L}^+_{v,\ell})$. To do this, we design a simple randomized rounding algorithm which rounds a solution $(\chi, \phi)$ to an integral solution $\ellv$ such that $\mathbb P[\ell_u = \ell'] = \phi_{u,\ell'}$ for all $u\in \Lambda^+_v,\ell'\in L$.

    The rounding proceeds recursively from $v$ to $\Lambda^+_v$. The recursive procedure is parameterized by a vertex $u \in \{v\} \cup I^+_v \cup \Lambda^+_v$; at the beginning, we let $\ell_v = \ell$ and we call the procedure for $v$.   In the procedure for $u$, we do nothing if $u \in \Lambda^+_v$. So we assume $u \in \{v\} \cup I^+_v$. 
    %
    We select a triple  $t\in \Gamma^+(\ell_u)$. The triple $t$ is selected with probability equal to 
    \begin{equation*}
        \frac{\phi_{u, t}}{\sum_{t'\in \Gamma^+(\ell_u)} \phi_{u, t'}}\ .
    \end{equation*}
    Assume $u'$ and $u''$ are the left and right children of $u$. Then we define $\ell_{u'}$ and $\ell_{u''}$ so that $(\ell_u, \ell_{u'}, \ell_{u''}) = t$. We recursively call the procedure for $u'$ and $u''$. 
    \begin{lemma}
        \label{lem:rounding_pi}
        The naive randomized rounding of a solution $(\chi, \phi)$ ensures that $\mathbb P[\ell_u = \ell'] = \sum_{t\in \Gamma^-(u, \ell')}\phi_{w, t}$ for all $w \in \{v\} \cup I^+_v,  u\in I^+_v \cup \Gamma^+_v$ being a child of $w$, and $\ell' \in L$.
    \end{lemma}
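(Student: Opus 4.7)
The plan is to prove, by induction moving from $v$ downwards through $\{v\} \cup I^+_v$, the following strengthened statement: for every $w \in \{v\} \cup I^+_v$ and every triple $t \in \Gamma$, the probability that the rounding procedure selects the triple $t$ at $w$ equals $\phi_{w,t}$. The lemma then follows immediately: for a child $u$ of $w$, the event $\ell_u = \ell'$ occurs exactly when the triple picked at $w$ belongs to $\Gamma^-(u, \ell')$, and these events are disjoint over $t$, so
$$\mathbb{P}[\ell_u = \ell'] \;=\; \sum_{t \in \Gamma^-(u, \ell')} \mathbb{P}[\text{triple at } w \text{ is } t] \;=\; \sum_{t \in \Gamma^-(u, \ell')} \phi_{w,t}.$$

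For the base case $w = v$, the procedure starts with $\ell_v = \ell$ and draws $t$ with probability $\phi_{v,t} / \sum_{t' \in \Gamma^+(\ell)} \phi_{v, t'}$. By \eqref{LPC-convex:start}, the denominator is $1$, and by the constraint $\phi_{v,t} = 0$ for $t \notin \Gamma^+(\ell)$, the claim holds for every $t \in \Gamma$. For the inductive step, fix $u \in I^+_v$ with parent $w$, and condition on the label of $u$. Applying the inductive hypothesis to $w$ together with the LP constraint relating $\phi$-variables at $u$ and $w$ yields, for every $\ell' \in L$,
$$\mathbb{P}[\ell_u = \ell'] \;=\; \sum_{t' \in \Gamma^-(u, \ell')} \phi_{w, t'} \;=\; \sum_{t \in \Gamma^+(\ell')} \phi_{u, t}.$$
Hence for any $t \in \Gamma^+(\ell')$, the definition of the rounding gives
$$\mathbb{P}[\text{triple at } u \text{ is } t] \;=\; \mathbb{P}[\ell_u = \ell'] \cdot \frac{\phi_{u,t}}{\sum_{t' \in \Gamma^+(\ell')} \phi_{u, t'}} \;=\; \phi_{u,t},$$
which completes the induction. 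Triples $t$ with parent label different from $\ell_u$ are never chosen, so those cases contribute zero probability, matching $\phi_{u,t}$ only up to a subtlety discussed next.

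The only delicate point is the corner case where the denominator $\sum_{t' \in \Gamma^+(\ell')} \phi_{u,t'}$ vanishes. By the identity above, this forces $\mathbb{P}[\ell_u = \ell'] = 0$, so the procedure is never invoked at $u$ with label $\ell'$ and the $0/0$ never arises; simultaneously, nonnegativity of the $\phi_{u,t}$ forces $\phi_{u,t} = 0$ for each such $t \in \Gamma^+(\ell')$, so the desired identity $\mathbb{P}[\text{triple } t] = \phi_{u,t} = 0$ holds trivially. Moreover, for triples $t$ with parent label not occurring as $\ell_u$ under any realization, both sides are $0$ for the same reason. Beyond this bookkeeping, the proof is a routine induction that mirrors the two flow-like LP equalities in \eqref{LPC-convex:start}--\eqref{LPC-convex:end}, and I do not anticipate any further obstacle.
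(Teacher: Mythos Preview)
Your proof is correct and follows essentially the same top-down induction as the paper, using the flow-conservation constraints of the LP at each step. The only difference is cosmetic: you strengthen the inductive hypothesis to $\mathbb{P}[\text{triple at }w\text{ is }t]=\phi_{w,t}$, whereas the paper carries the lemma statement itself through the induction and sums over the possible labels $\ell''$ of the parent; both arguments cancel the same denominator against the same LP identity, and your formulation is if anything slightly cleaner.
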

    \begin{proof}
        We prove this statement by induction from the top to bottom. Consider the base case where $w = v$ and $u$ is a child of $v$.
        \begin{equation*}
            \mathbb P[\ell_u = \ell'] = \sum_{t\in \Gamma^{-}(u,\ell')} \frac{\phi_{v, t}}{\sum_{t'\in \Gamma^+(\ell)} \phi_{v, t'}} = \sum_{t\in \Gamma^{-}(u,\ell')} {\phi_{v, t}}.
        \end{equation*}
        Now assume the statement is true for the pair $(o, w)$ where $o \in \{v\} \cup I^+_v$ and $w \in I^+_v$ is a child of $o$.  Let $u$ be a child of $w$. Let us prove the statement for the pair $(w, u)$. Let $\ell' \in L$.

    Then we can write by induction,
    \begin{align*}
        \mathbb P[\ell_u = \ell'] &= \sum_{\ell'' \in L} \mathbb P[\ell_w = \ell'']\cdot \sum_{t\in \Gamma^+(\ell'') \cap \Gamma^-(u, \ell')} \frac{\phi_{w, t}}{\sum_{t'\in \Gamma^+(\ell'')}\phi_{w, t'}}\\
         &= \sum_{\ell'' \in L} \left(\sum_{t'\in \Gamma^-(w, \ell'')}\phi_{o, t'}\right)\cdot \sum_{t\in \Gamma^+(\ell'') \cap \Gamma^-(u, \ell')} \frac{\phi_{w, t}}{\sum_{t'\in \Gamma^+(\ell'')}\phi_{w, t'}}\\
         &= \sum_{\ell'' \in L} \left(\sum_{t'\in \Gamma^+(\ell'')}\phi_{w, t'}\right)\cdot \sum_{t\in \Gamma^+(\ell'') \cap \Gamma^-(u, \ell')} \frac{\phi_{w, t}}{\sum_{t'\in \Gamma^+(\ell'')}\phi_{w, t'}}\\
         &= \sum_{\ell'' \in L} \sum_{t\in \Gamma^+(\ell'') \cap \Gamma^-(u, \ell')} \phi_{w, t} \\
         &= \sum_{t\in \Gamma^-(u, \ell')} \phi_{w, t} \ . \qedhere
    \end{align*}
    \end{proof}

	Applying the lemma for $u \in \Gamma^+_v$, the parent $w$ of $u$, and a label $\ell' \in L$, we have $\mathbb P[\ell_u = \ell'] = \sum_{t\in \Gamma^-(u, \ell')}\phi_{w, t} = \chi_{u, \ell'}$.

        \subsection{Explicit linear program for $\LP^{(r, \ell^\circ)}$}
    \label{sec:explicit_LP}Instead of the recursive definition, we can now write an explicit one for $\LP^{(r, \ell^\circ)}$, where solvability and the size become clear. Towards this, define for each $k\in\{0,\dotsc,1/\epsilon\}$,
    \begin{align*}
        P_k = \{(v_0,\ell^\circ,v_1,\ell_1,\dotsc,v_k,\ell_k) :
        &\ v_0\in V^{(0)}, v_1 \in \Lambda^+(v_0), v_2\in \Lambda^+(v_1),\dotsc, v_k\in \Lambda^+(v_{k-1}) \\
        &\text{ and } \ell^\circ,\ell_1,\dotsc,\ell_k\in L\} \ .
    \end{align*}
    and $P = \bigcup_k P_k$.
    Each $p\in P$ corresponds to the vertices in super-layers on a path starting in the root and a choice of labels for these vertices.
    We will have two variables for each $p\in P$:
    $\chi_p$ and $x^{(p)}$.
    The variable $\chi_p$ describes whether the output labeling is consistent
    with $p$, i.e., all vertices along $p$
    are labeled with the label right after it.
    The variable $x^{(p)}$ describes $\chi_p$ times the solution vector contributed by the end vertex of $p$. 
    For each $p \in P_{k}$, $k< 1/\epsilon$, denote by $P^+(p)\subseteq P_{k+1}$ all paths with the prefix $p$, which extend
    the path by one vertex and one label.
    Consider the following linear program: 
        \begin{align}
        \chi_{(r,\ell^\circ)} &= 1 & & \label{LPC-compact:start} \\
        x^{(p)} &= \chi_p \cdot x^{(\ell)} & & \forall p = (\dotsc,\ell)\in P_{1/\epsilon}, \ell\in L \label{LPC-compact:xp-leaf} \\
        A x^{(p)} &\le \chi_p\cdot {\bf1} & & \forall p\in P \label{LPC-compact:packing}\\
	x^{(p)} &= \sum_{q\in P^+(p)} x^{(q)} & & \forall p\in P_0\cup \cdots \cup P_{1/\epsilon-1} \label{LPC-compact:xp} \\
        (\chi_q)_{q\in P^+(p)} &\in \chi_p \cdot \conv({\cal L}^+_{v,\ell}) & & \forall p=(\dotsc, v,\ell)\in P_0\cup\dotsc\cup P_{1/\epsilon-1} \label{LPC-compact:end}
	\end{align}

    In \eqref{LPC-compact:end}, the constraint involving ${\chi_p}\cdot\conv({\cal L}^+_{v,\ell})$ is implemented
    using the extended formulation from before. More precisely, we
    introduce the auxiliary variables and constraints (separately, for each $p$) and multiply the constant terms in the LP formulation with ${\chi_p}$.
    $(\chi_q)_{q\in P^+(p)}$ refers to
    the $(|n^\epsilon| \cdot |L|)$-dimensional vector with one entry for each $v,\ell$ where $(\dotsc,v,\ell) = q$.
    Since $|P| \le |L|^{O(1/\epsilon)}\cdot (n/\epsilon)$, it is clear that the above LP is solvable in time $|L|^{O(1/\epsilon)}\cdot \poly (n,m,d)$.

    \begin{lemma}
        Let {$(\chi_p,x^{(p)})_{p \in P}$} be a solution for LP(\ref{LPC-compact:start}-\ref{LPC-compact:end}).
        In time $|L|^{O(1/\epsilon)}\cdot \poly (n,m,d)$ we can compute a solution for ${\LP}^{(r,\ell^\circ)}$. 
    \end{lemma}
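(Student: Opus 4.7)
The plan is to read off a feasible point of $\LP^{(r,\ell^\circ)}$ directly from the compact solution by identifying each path $p=(r,\ell^\circ,v_1,\ell_1,\dotsc,v_k,\ell_k)\in P_k$ with the sub-instance $\LP^{(v_k,\ell_k)}$ reached in the recursion when we fix each intermediate vertex $v_i$ to label $\ell_i$. Since $\LP^{(r,\ell^\circ)}$ is defined recursively, ``producing a solution'' means producing a consistent assignment of all variables ($x$, $\chi_{u,\ell'}$, $x^{(u,\ell')}$) throughout the whole recursion tree, whose size equals $|P|\le |L|^{O(1/\epsilon)}\cdot\poly(n)$; doing this in a single pass will give the claimed running time.

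For each path $p=(\dotsc,v,\ell)$ with $\chi_p>0$, I would set the $x$-vector of the corresponding sub-instance $\LP^{(v,\ell)}$ to $x^{(p)}/\chi_p$, and for every extension $q=(p,u,\ell')\in P^+(p)$ set $\chi_{u,\ell'}:=\chi_q/\chi_p$, with $x^{(u,\ell')}$ taken to be the $x$-vector that is produced recursively for $q$. Verifying \eqref{LPC-recursive:packing} then reduces to \eqref{LPC-compact:packing} divided by $\chi_p$; the convex-hull requirement \eqref{LPC-recursive:convex} follows from \eqref{LPC-compact:end} divided by $\chi_p$, implemented by reusing the extended formulation of $\conv({\cal L}^+_{v,\ell})$ from Section~\ref{sec:extended_conv}; and \eqref{LPC-recursive:x} unfolds out of \eqref{LPC-compact:xp} since $\sum_q(\chi_q/\chi_p)(x^{(q)}/\chi_q)=x^{(p)}/\chi_p$. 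The recursion constraint \eqref{LPC-recursive:recurse} holds by construction on the coordinates where $\chi_q>0$, and trivially on the rest.

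The step I expect to need some care is handling paths with $\chi_p=0$, where the above rescaling is ill-defined. The key observation is $\chi_p=0\Rightarrow x^{(p)}=0$, which I would prove by first noting that \eqref{LPC-compact:end} forces $\chi_q=0$ for every $q\in P^+(p)$ whenever $\chi_p=0$ (the right-hand side is scaled by $\chi_p$, and all extreme points of $\conv({\cal L}^+_{v,\ell})$ have nonnegative entries), iterating this down to the leaf level of $P$, applying \eqref{LPC-compact:xp-leaf} to get $x^{(p')}=0$ at every leaf extension $p'$, and telescoping back up through \eqref{LPC-compact:xp}. Once this is established, any path $p$ with $\chi_p=0$ can be completed with the zero vector throughout its entire sub-recursion; such a branch contributes nothing to an ancestor's decomposition \eqref{LPC-recursive:x} and is allowed by the disjunctive clause of \eqref{LPC-recursive:recurse}, so feasibility at every level is preserved. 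A straightforward traversal of $P$ in topological order implements the whole construction in the required time.
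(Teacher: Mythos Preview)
Your proposal is correct and follows essentially the same route as the paper: set $x:=x^{(p)}/\chi_p$, $\chi_{u,\ell'}:=\chi_q/\chi_p$, $x^{(u,\ell')}:=x^{(q)}/\chi_q$ (or $0$ when $\chi_q=0$), and verify \eqref{LPC-recursive:packing}--\eqref{LPC-recursive:recurse} from \eqref{LPC-compact:packing}, \eqref{LPC-compact:xp}, \eqref{LPC-compact:end}, the paper doing this by induction from the leaves up. Your explicit argument that $\chi_p=0\Rightarrow x^{(p)}=0$ (propagate $\chi_q=0$ down via \eqref{LPC-compact:end}, apply \eqref{LPC-compact:xp-leaf} at leaves, telescope back via \eqref{LPC-compact:xp}) is in fact needed to verify \eqref{LPC-recursive:x} when some extensions have $\chi_q=0$, and the paper glosses over this point; so your write-up is slightly more careful here.
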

    \begin{proof} 
    For convenience, we rewrite the definition of ${\LP}^{(v,\ell)}$ here.
    \begin{align}
    Ax &\leq {\bf 1} \label{LPC-recursive:packing2}\\
    x &= \sum_{u \in \Lambda^+_v, \ell'} \chi_{u, \ell'} \cdot x^{(u, \ell')} \label{LPC-recursive:x2}\\
    \left(\chi_{u, \ell'}\right)_{u \in \Lambda^+(v), \ell' \in L} &\in \conv({\cal L}^+_{v, \ell}) \label{LPC-recursive:convex2}\\
    x^{(u, \ell')} &\in \text{LP}^{(u, \ell')} \text{ or } \chi_{u,\ell'} = 0 &\qquad &\forall u \in \Lambda^+_v, \ell' \in L \label{LPC-recursive:recurse2}
\end{align}

We prove by induction the following statement: 
For $p = (\dotsc,v,\ell) \in P_{1/\epsilon}$ with $\chi_p > 0$, we have $x^{(p)}/\chi_p \in \LP^{(v, \ell)}$. This holds for the case $\ell \in V^{(1/\epsilon)}$ as $x^{(p)}/\chi_p = x^{(\ell)} \in \LP^{(v, \ell)}$ by \eqref{LPC-compact:xp-leaf}.  Now assume $v \in V^{(\leq 1/\epsilon - 1)}$ and the statement holds for every child path of $p$. Consider the following solution to $\LP^{(v, \ell)}$:
\begin{align*}
    x := \frac{x^{(p)}}{\chi_p}, \quad x^{(u, \ell')} := \begin{cases}\frac{x^{((p, u, \ell'))}}{\chi_{(p, u, \ell')}} &\text{ if } \chi_{(p, u, \ell')} > 0 \\ 0 &\text{ otherwise} \end{cases},\chi_{u, \ell'} := \frac{\chi_{(p, u, \ell')}}{\chi_p}, \forall u \in \Lambda^+_v, \ell'  \in L.
\end{align*}
\eqref{LPC-recursive:packing2}, \eqref{LPC-recursive:x2} and \eqref{LPC-recursive:convex2} are satisfied; they respectively follow from \eqref{LPC-compact:packing}, \eqref{LPC-compact:xp} and \eqref{LPC-compact:end}. We have $x^{(u, \ell')} = \frac{x^{(q)}}{\chi_q}$ for the child path $q = (p, u, \ell')$ of $p$. Either $\chi_{u,\ell'} = 0$ or, by our induction hypothesis, we have $x^{(u, \ell')} \in \LP^{(u, \ell')}$. Therefore, we proved $\frac{x^{(p)}}{\chi_p} \in \LP^{(v, \ell)}$ and the statement holds for $p$. Thus the lemma follows. 

    \end{proof}

    To summarize, LP~(\ref{LPC-compact:start}-\ref{LPC-compact:end}) is a relaxation of the original problem
    and has size $|L|^{O(1/\epsilon)}\cdot \poly (n,m,d)$. After solving it while in time
    polynomial in its size, we derive a solution to $\LP^{(r,\ell^\circ)}$ with the same solution vector.

\newcommand{\bfe}{{\mathbf{e}}}
\subsection{Rounding algorithm}
    The technique to round the linear program is very similar to that in \cite{guo2022approximating}. We include the analysis for the sake of completeness.
    
    Suppose we have a solution for $\LP^{(r,\ell^\circ)}$.  We will recursively round the solution from $r$ to the leaves.  
    We let $\tilde \ell_r = \ell^\circ$, and call the recursive procedure rounding-without-cost$(r, \tilde \ell_r, x)$ for the root, where $x$ is the solution in $\LP^{(r, \ell^\circ)}$.  The final labeling vector will be $\tilde \ellv$.

    \begin{algorithm}[H]
    	\caption{rounding-without-cost$(v, \ell, x)$ 	    	    	\Comment{$x \in \LP^{(v, \ell)}$}}
	    \begin{algorithmic}[1]
	    	\If{$v \in V^{(1/\epsilon)}$} \Return \EndIf
	    	\State let $\big(x, (x^{(u, \ell')}, \chi_{u, \ell'})_{u \in \Lambda^+_v, \ell' \in L}\big)$ be the solution that certifies $x \in \LP^{(v, \ell)}$
	    	\State use Lemma~\ref{lem:rounding_pi} to randomly choose $\ellv \in {\cal L}^+_{v, \ell_v}$ so that $\Pr[\ell_u = \ell'] = \chi_{u, \ell'}$ for all $u \in \Lambda^+_v, \ell' \in L$
	    	\For{every $u \in \Lambda^+_v$}
	    		\State $\tilde \ell_u \gets \ell_u$
                        \State rounding-without-cost$(u, \ell_u, x^{(u, \ell_u)})$
	    	\EndFor
	    \end{algorithmic}
    \end{algorithm}
    
    Let $t  = \ln(1 + \epsilon/2)$. Let $\alpha_{1/\epsilon} = \bfe^{t}$. For every $i = 1/\epsilon - 1, 1/\epsilon-2, \cdots, 0$, we define $\alpha_i = \bfe^{\alpha_{i+1}-1} > \alpha_{i+1}$. 
    For a row $a$ of $A$, and a vertex $v \in V^{(\leq 1/\epsilon)}$, we define
    \begin{align*}
		\pack_{a, v} = a \cdot \sum_{\text{leaf descendant $u$ of $v$ in $T^+$}}  x^{(\tilde \ell_u)}.    
    \end{align*}
    Until the end of this section, we fix a row $a \in [0, 1]^\dimx$, and view $a$ as a row vector. 
	\begin{lemma}
		For every $v \in V^{(i)}$, consider the process rounding-without-cost$(v, \ell, x)$. We have 
		\begin{align*}
			\E\left[\exp\Big(\frac{t \cdot \pack_{a, v}}{n^\epsilon}\Big)\right] \leq \alpha_i^{ax/n^\epsilon}.
		\end{align*}
	\end{lemma}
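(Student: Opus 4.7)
The plan is to prove the statement by reverse induction on $i$ from $1/\epsilon$ down to $0$, using the recursive structure of the rounding.

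For the base case $i = 1/\epsilon$, the vertex $v$ is a leaf, so $\LP^{(v,\ell)} = \{x^{(\ell)}\}$, which forces $x = x^{(\ell)}$, $\tilde{\ell}_v = \ell$, and hence $\pack_{a,v} = a x^{(\ell)} = ax$. Then
\[
    \E[\exp(t \cdot \pack_{a,v}/n^\epsilon)] = \exp(t \cdot ax/n^\epsilon) = \alpha_{1/\epsilon}^{ax/n^\epsilon}
\]
by the definition $\alpha_{1/\epsilon} = \bfe^{t}$, and the claim holds with equality.

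For the inductive step at $v \in V^{(i)}$, the randomness of the call rounding-without-cost$(v,\ell,x)$ decomposes into (a) the labels $(\ell_u)_{u \in \Lambda^+_v}$ sampled via Lemma~\ref{lem:rounding_pi}, and (b) the internal randomness of the $n^\epsilon$ recursive calls, which are mutually independent given the $\ell_u$'s. Since $\pack_{a,v} = \sum_{u \in \Lambda^+_v} \pack_{a,u}$, conditional independence gives
\[
    \E[\exp(t \cdot \pack_{a,v}/n^\epsilon)] = \E_{\ellv}\!\left[\prod_{u \in \Lambda^+_v} \E\!\left[\exp(t \cdot \pack_{a,u}/n^\epsilon) \mid \ell_u\right]\right] \leq \E_{\ellv}\!\left[\prod_{u \in \Lambda^+_v} \alpha_{i+1}^{a x^{(u,\ell_u)}/n^\epsilon}\right],
\]
where the inequality applies the induction hypothesis to each call rounding-without-cost$(u, \ell_u, x^{(u,\ell_u)})$. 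Note that by Lemma~\ref{lem:rounding_pi}, $\Pr[\ell_u = \ell'] = \chi_{u,\ell'}$, so whenever $\chi_{u,\ell'} = 0$ (the only case where $x^{(u,\ell')}$ may fail to lie in $\LP^{(u,\ell')}$ by \eqref{LPC-recursive:recurse2}) the label $\ell'$ is never selected and the hypothesis applies.

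Let $Y := \sum_{u \in \Lambda^+_v} a x^{(u,\ell_u)}/n^\epsilon$, so the right-hand side equals $\E[\alpha_{i+1}^Y]$. The key observation is that $Y \in [0,1]$: each $a x^{(u,\ell_u)} \leq 1$ because $A x^{(u,\ell_u)} \leq \mathbf 1$ follows from $x^{(u,\ell_u)} \in \LP^{(u,\ell_u)}$ and \eqref{LPC-recursive:packing2}, and $|\Lambda^+_v| = n^\epsilon$. Convexity of $y \mapsto \alpha_{i+1}^y$ gives $\alpha_{i+1}^y \leq 1 + (\alpha_{i+1}-1)y$ for $y \in [0,1]$, and combined with $1+z \leq \bfe^z$ this yields
\[
    \E[\alpha_{i+1}^Y] \leq 1 + (\alpha_{i+1} - 1) \E[Y] \leq \exp\!\left((\alpha_{i+1}-1)\E[Y]\right).
\]
Finally, $\E[Y] = \frac{1}{n^\epsilon} \sum_{u,\ell'} \chi_{u,\ell'} \cdot a x^{(u,\ell')} = ax/n^\epsilon$ by linearity of expectation and \eqref{LPC-recursive:x2}. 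Using the definition $\alpha_i = \exp(\alpha_{i+1}-1)$, the upper bound becomes $\alpha_i^{ax/n^\epsilon}$, completing the induction.

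The subtle step is ensuring $Y \leq 1$ so that the convexity inequality applies: this is exactly where the packing constraint \eqref{LPC-recursive:packing2} at each recursive subproblem combines with the $n^\epsilon$ branching factor to give a clean telescoping of the $\alpha_i$'s. Everything else is a routine MGF-style calculation that mirrors standard Chernoff-type arguments on a recursion tree of depth $1/\epsilon$.
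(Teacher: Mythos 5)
Your proof is correct and follows essentially the same route as the paper: reverse induction down the super-layers, conditioning on the sampled labels, applying the induction hypothesis factor-wise, and then using the chord bound $\alpha_{i+1}^y \le 1+(\alpha_{i+1}-1)y$ on $[0,1]$ (valid since $a(x|\ellv) \le n^\epsilon$ by the packing constraint at each subproblem) together with $1+z \le \bfe^z$ and $\E[Y]=ax/n^\epsilon$ from \eqref{LPC-recursive:x2}. The only differences are cosmetic (you apply the two elementary inequalities in the opposite order and explicitly note the $\chi_{u,\ell'}=0$ corner case), so nothing further is needed.
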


	\begin{proof}
		We prove the lemma by inductions over $v$ from the bottom to the top of the collapsed tree.  When $v$ is a leaf,  $\pack_{a,v} = a x$ always holds. So, the left side of the inequality is $\exp\left(\frac{t\cdot ax}{n^\epsilon}\right) = \alpha_i^{ax/n^\epsilon}$,  by the definition of $\alpha_{1/\epsilon}$.
		
		We then assume $i < \frac1\epsilon$. Let $x^{(u, \ell')}, \chi_{u, \ell'}$ for every $u \in \Lambda^+_v, \ell' \in L$ be as obtained from Step 2 of the rounding algorithm. Then
		\begin{align}
			 \E\left[\exp\Big(\frac{t \cdot \pack_{a, v}}{n^\epsilon}\Big)\right] &= \E_{\ellv} \prod_{u \in \Lambda^+(v)}\E\left[\exp\Big(\frac{t \cdot \pack_{a, u}}{n^\epsilon}\Big) \mid \ell_u\right] \\
			&\leq \E_{\ellv}  \prod_{u \in \Lambda^+(v)} \alpha_{i+1}^{ax^{(u, \ell_u)}/n^\epsilon} \label{equ:induction} \\
			&= \E_{\ellv} \left[\alpha_{i+1}^{a(x|\ellv)/n^\epsilon}\right] \label{equ:define-x|l} \\
			&\leq \exp\left( \E_{\ellv}\big[\alpha_{i+1}^{a(x|\ellv)/n^\epsilon}\big] - 1\right) \\
			&\leq \exp\left((\alpha_{i+1}-1) ax/n^\epsilon\right) = \alpha_i^{ax/n^\epsilon}. \label{inequ:exponential-concavity}
		\end{align}
		\eqref{equ:induction} is by the induction hypothesis. In \eqref{equ:define-x|l}, we define $(x|\ellv) := \sum_{u \in \Lambda^+_v} x^{(u, \ell_u)}$. Therefore 
		\begin{align*}
			\E_{\ellv} (x|\ellv) = \E_{\ellv} \sum_{u \in \Lambda^+_v}x^{(u, \ell_u)} = \sum_{u \in \Lambda^+} \E_{\ellv} x^{(u, \ell_u)} = \sum_{u \in \Lambda^+, \ell' \in L} \chi_{u, \ell'}\cdot x^{(u, \ell')} = x.
		\end{align*}
		The inequality in \eqref{inequ:exponential-concavity} is by that $\E_{\ellv}[a(x|\ellv)] =ax$, $0 \leq a(x|\ellv) \leq n^\epsilon$ for every $\ellv$, and $\alpha_{i+1}^{\theta} - 1$ is a concave function of $\theta$. The equality is by the definition of $\alpha_i$. 
	\end{proof}

	\begin{lemma}
		\label{lemma:bound-alpha}
		For every $i \in[0,1/\epsilon]$, we have $\alpha_i \leq 1 + \frac{1}{i+1/\epsilon}$.
	\end{lemma}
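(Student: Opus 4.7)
The plan is to use backward induction on $i$, starting from $i = 1/\epsilon$ and descending to $i = 0$.

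For the base case $i = 1/\epsilon$, note that $\alpha_{1/\epsilon} = \mathbf{e}^{t} = \mathbf{e}^{\ln(1+\epsilon/2)} = 1 + \epsilon/2$, and the right-hand side equals $1 + \frac{1}{2/\epsilon} = 1 + \epsilon/2$, so the inequality holds (with equality).

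For the inductive step, assume $\alpha_{i+1} \leq 1 + \frac{1}{(i+1) + 1/\epsilon}$. Since $\alpha_i = \mathbf{e}^{\alpha_{i+1}-1}$ is monotone in $\alpha_{i+1}$, we get
\begin{equation*}
    \alpha_i \;\leq\; \exp\!\left(\frac{1}{(i+1) + 1/\epsilon}\right).
\end{equation*}
Setting $k := i + 1/\epsilon \geq 1$, it therefore suffices to establish the elementary inequality $\mathbf{e}^{1/(k+1)} \leq 1 + 1/k$, which after taking logarithms reduces to $\frac{1}{k+1} \leq \ln\!\left(1 + \frac{1}{k}\right)$. This is an instance of the standard bound $\ln(1+x) \geq \frac{x}{1+x}$ (valid for $x > -1$), applied with $x = 1/k$, which gives $\ln(1+1/k) \geq \frac{1/k}{1+1/k} = \frac{1}{k+1}$. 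Plugging back yields $\alpha_i \leq 1 + \frac{1}{i+1/\epsilon}$, completing the induction.

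No step looks particularly delicate: the whole argument rests on the one-line convexity-type inequality $\ln(1+x) \geq x/(1+x)$, and once one recognizes that the recurrence $\alpha_i = \exp(\alpha_{i+1}-1)$ plays well with the ansatz $1 + 1/(i + 1/\epsilon)$, everything falls into place. The only mild subtlety is making sure the base case lines up exactly (which it does because of the specific choice $t = \ln(1+\epsilon/2)$).
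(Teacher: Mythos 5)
Your proof is correct and follows essentially the same route as the paper: backward induction on $i$ with the identical base case $\alpha_{1/\epsilon} = 1+\epsilon/2$ and the same use of the recurrence $\alpha_i = \mathbf{e}^{\alpha_{i+1}-1}$. The only difference is the elementary estimate in the last step — you invoke $\ln(1+x) \geq x/(1+x)$ (equivalently $\mathbf{e}^{1/(k+1)} \leq 1+1/k$), whereas the paper uses $\mathbf{e}^\theta \leq 1+\theta+\theta^2$ for $\theta \in [0,1]$ followed by a short calculation; both close the induction.
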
 
	\begin{proof}
		By definition, $\alpha_{1/\epsilon} = \bfe^t = 1+ \epsilon/2 = 1 + \frac{1}{1/\epsilon + 1/\epsilon}$ and thus the statement holds for $i = 1/\epsilon$.  Let $i \in [0, 1/\epsilon-1]$ and assume the statement holds for $i+1$. Then, we have 
		\begin{align*}
		\alpha_i &= \bfe^{\alpha_{i+1}-1} \leq \bfe^{\frac{1}{i+1+1/\epsilon}} \leq 1 + \frac{1}{i+1+1/\epsilon} + \left(\frac{1}{i+1+1/\epsilon}\right)^2\\
		&= 1 + \frac{i+2+1/\epsilon}{(i+1+1/\epsilon)^2} \leq 1 + \frac{1}{i + 1/\epsilon}.
		\end{align*}
		The first inequality used the induction hypothesis and the second one used that for every $\theta \in [0, 1]$, we have $e^\theta \leq 1 + \theta + \theta^2$. 
	\end{proof}

	Therefore, for the root vector $x \in \LP^{(r, \ell^\circ)}$, we have 
	\begin{align*}
		\E\left[\exp\left(\frac{t\cdot \pack_{a,r}}{n^\epsilon}\right)\right] \leq \alpha_0^{ax/n^\epsilon} \leq (1 + \epsilon)^{ax/n^\epsilon} \leq 1+\epsilon.
	\end{align*}
	
	Therefore for every $\delta > 0$, we have
	\begin{align*}
		\Pr\left[\pack_{a, r} \geq \delta n^\epsilon \right] \leq \frac{1+\epsilon}{\exp(t\delta)} =(1+\epsilon)\cdot(1+\epsilon/2)^{-\delta}.
	\end{align*}
	
	Setting $\delta = O\left(\frac{\log \nconstraints}{\epsilon}\right)$ and applying the union bound over all the $\nconstraints$ constraints, we can guarantee that with probability at least $1/2$, all the $\nconstraints$ constraints are satisfied. Repeating the process $O(\log (nm))$ times can increase the success probability to $1-\poly(n, m)$. This proves Theorem~\ref{thm:tree_labeling_approx}. 

    Recall that in the Perfect Binary Tree Labeling instance reduced from Additive-DP, we have $n = \poly(\sizebound)$ and $|L| = \poly(\sizebound, \listslength)$. This implies Theorem~\ref{thm:main} holds for the case $c = {\bf0}$.
\section{Cost-preserving rounding for Perfect Binary Tree Labeling}
\label{sec:costs}
We refer to Section~\ref{sec:perfect-TL} for the definition of Perfect Binary
Tree Labeling. Here, we present a more involved rounding
algorithm that ensures that the costs of the LP solution do not increase
by the rounding. This is done by comining the ideas from the previous
rounding with techniques from~\cite{rohwedder2025cost}.
Formally, in this section we prove the following theorem.
\begin{theorem}
\label{thm:tree_labeling_costs}
    For every $\epsilon > 0$, there is a randomized multi-criteria approximation algorithm for Perfect Binary Tree Labeling
    that in time $|L|^{O(1/\epsilon)}\cdot \poly(n, \nconstraints, \dimx)$ finds 
    a labeling with corresponding solution vector $x^{(\ellv)}$ such that:
    \begin{itemize}
        \item $c\T x^{(\ellv)} \le \opt$ and
        \item $Ax^{(\ellv)} \le O\left(\frac{n^\epsilon}{\epsilon^2} \log \nconstraints\right) \cdot {\bf1}$,
    \end{itemize}
    assuming there exists a labeling $\ellv^*$ with solution vector $x^{(\ellv^*)}$ such that $c\T x^{(\ellv^*)} \le \opt$ and
     $Ax^{(\ellv^*)} \le {\bf 1}$.
\end{theorem}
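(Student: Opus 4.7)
The plan is to solve the LP of Section~\ref{sec:perfect-TL} augmented with the objective $\min c\T x$ at the root; since this remains a valid relaxation, its optimum $\opt^*$ is at most $\opt$. It then suffices to round the LP solution to an integral labeling $\tilde\ellv$ with $c\T x^{(\tilde \ellv)} \le \opt^*$ deterministically, while paying only an additional $O(1/\epsilon)$ factor in the packing violation compared to Theorem~\ref{thm:tree_labeling_approx}.

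Following the idea of~\cite{rohwedder2025cost}, I would organize the recursive rounding of Section~\ref{sec:perfect-TL} into $1/\epsilon$ \emph{stages}, one per super-layer. At stage $i$ (processing the vertices in $V^{(i)}$), rather than drawing a single sample from Lemma~\ref{lem:rounding_pi} I would draw two coupled candidate labelings of $\Lambda^+_v$ whose marginals both agree with the LP $\chi$-values, but whose cost deviations from the fractional cost are \emph{antithetic}, i.e., they sum in conditional expectation to twice the LP cost of stage $i$. Such a coupling can be realized by expressing the internal distribution $\phi$ of the extended formulation of $\conv(\mathcal{L}^+_{v,\ell})$ as a polynomial-support convex combination of deterministic atoms via Carath\'eodory, and then pairing atoms greedily by cost; the ``negate all choices at stage $i$'' operation mentioned in the overview is then simply the swap within each pair. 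At each stage, the algorithm deterministically keeps whichever of the two coupled candidates yields smaller cost. Since the chosen candidate contributes at most the conditional expected cost of that stage, a martingale-style chaining over the $1/\epsilon$ stages gives $c\T x^{(\tilde \ellv)} \le \opt^*$ with probability $1$.

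For the packing analysis I would redo the inductive moment-generating-function calculation of Section~\ref{sec:perfect-TL}. The only change is that the packing contribution at each super-layer is the one chosen among two coupled samples, which is upper bounded by their sum; hence the random packing contribution at each of the $1/\epsilon$ levels is at most twice that of the cost-oblivious rounding. Propagating this doubling through the recursion for $\alpha_i$ essentially squares $\alpha_{i+1}$ at each level, so to still conclude $\alpha_0 \le 1+\epsilon$ one must shrink the exponent parameter $t = \ln(1+\epsilon/2)$ by a factor of $\Theta(1/\epsilon)$ (redoing Lemma~\ref{lemma:bound-alpha} with the doubling correction). The Chernoff tail bound followed by a union bound over the $m$ packing rows then yields a violation of $O(n^\epsilon \log m /\epsilon^2)$, exactly matching the statement of the theorem.

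The main technical obstacle is the antithetic coupling step: one needs two samples whose marginals are simultaneously consistent with Lemma~\ref{lem:rounding_pi} \emph{and} whose costs are symmetric around the LP value. The challenge is that the extreme-point distribution inside Lemma~\ref{lem:rounding_pi} can have exponential support, so one cannot naively pair atoms. I would address this by extracting a polynomial-support decomposition of $\phi$ into integral assignments (via Carath\'eodory on the polyhedron defined by \eqref{LPC-convex:start}--\eqref{LPC-convex:end}) and then greedily pairing these integral assignments by cost into nearly-balanced pairs, with any residual imbalance absorbed by a single dummy atom whose overall contribution is negligible. Once this combinatorial step is in place, the cost guarantee follows from the per-stage ``take the cheaper of two'' rule, and the packing guarantee follows from the Chernoff induction of Section~\ref{sec:perfect-TL} redone with the doubling factor per level, yielding the claimed bounds.
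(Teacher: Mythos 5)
Your high-level plan — round super-layer by super-layer, protect the cost by a ``negate the choices / take the cheaper of two'' mechanism, and accept an extra $1/\epsilon$ factor in the packing violation — is indeed the spirit of the paper's proof. But two of your concrete steps have genuine gaps. First, the cost guarantee: you need the cheaper of your two coupled candidates to cost at most the current fractional cost \emph{for every realization}, not just in conditional expectation; otherwise you only get $c\T x^{(\tilde\ellv)}\le\opt$ in expectation, not with probability $1$ as claimed. Your construction (decompose $\chi$ into polynomial-support atoms, pair atoms ``greedily by cost'' with a dummy atom absorbing the imbalance) does not yield pointwise symmetry, and it cannot, because the weights $\lambda_{\ellv}$ are unequal, so there is in general no weight-respecting pairing of atoms at all. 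This is exactly the difficulty the paper resolves with its semi-random rounding (Algorithm~\ref{alg:semi-random-matching}): the pairing is done bit by bit on the binary expansion of $\lambda$, so every paired move has equal weight $2^{-k-1}$, the global sign $\nu'\in\{\nu,-\nu\}$ is chosen so that $c\T\nu'\le 0$ at every sub-step, and Claim~\ref{cla:lambda} then gives the cost bound deterministically.

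Second, the packing analysis does not go through as you describe. Bounding the chosen candidate's packing contribution by the \emph{sum} of the two candidates and ``propagating the doubling through the recursion for $\alpha_i$'' is not a constant-factor correction: if the excess $\alpha_{i+1}-1$ doubles (or $\alpha_{i+1}$ squares) at each of the $1/\epsilon$ levels, the blow-up is of order $2^{1/\epsilon}$ and cannot be repaired by shrinking $t$ by $\Theta(1/\epsilon)$; your own accounting is numerically inconsistent with the claimed $O(n^\epsilon\epsilon^{-2}\log\nconstraints)$ bound in the quasi-polynomial regime. More fundamentally, the moment-generating-function induction of Section~\ref{sec:perfect-TL} relies on the rounded labels having exactly the LP marginals and on conditional independence across a super-layer; once the label is selected by a cost comparison, its distribution is biased, the identity $\E_{\ellv}[a(x|\ellv)]=ax$ used in \eqref{inequ:exponential-concavity} fails, and the induction breaks rather than merely loses a factor $2$. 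The paper sidesteps this by abandoning the MGF induction in the cost-preserving case: it proves a standalone concentration bound for the semi-random rounding (Lemma~\ref{lemma:semi-random-rounding}), in which the adversarial sign choices are absorbed by per-bit Chernoff bounds with geometrically decaying error parameters, and then chains this bound across the $1/\epsilon$ super-layers via the quantities $\pack_a^{(i)}$ (Lemma~\ref{lem:rounding}) with $\delta=\Theta(\epsilon)$, which is where the final $O\!\left(\frac{n^\epsilon}{\epsilon^2}\log\nconstraints\right)$ violation actually comes from. To fix your proof you would essentially have to reconstruct this stage-wise rounding with equal-weight pairings and its accompanying concentration lemma; the antithetic-atoms shortcut does not suffice.
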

Recall that in Section~\ref{sec:perfect-TL} we
gave a linear program of size $|L|^{O(1/\epsilon)}\cdot \poly (n,m,d)$ over
variables $x^{(r,\ell^\circ)}$ that capture the solution vector of the continuous relaxation. For the purpose of this section, we solve the same linear program, but
minimize $c\T x^{(r,\ell^\circ)}$ over it.

\subsection{Polynomial-size decomposition of  $\chi \in \conv({\cal L}^+_{v, \ell})$} 
    
    The naive rounding algorithm in Section~\ref{sec:extended_conv} implicitly gives a decomposition of any feasible solution $(\phi, \chi)$ to the LP(\ref{LPC-convex:start}-\ref{LPC-convex:end}) into a convex combination of integral feasible solutions. However, the support of the convex combination may have exponential size. In our algorithm with costs, it is crucial for us to construct a polynomial-size convex combination. The goal of this section is to prove the following lemma:
    \begin{lemma}
    \label{lem:poly_size_decomp}
    	Given a $\chi \in \conv({\cal L}^+_{v, \ell})$, we can output in polynomial time a decomposition of $\chi$ into a convex combination of extreme points in $\conv({\cal L}^+_{v, \ell})$. 
    \end{lemma}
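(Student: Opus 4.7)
The plan is to use the polynomial-size extended formulation~\eqref{LPC-convex:start}--\eqref{LPC-convex:end} together with an iterative Carath\'eodory-style extraction. Given $\chi \in \conv({\cal L}^+_{v,\ell})$, I would first compute, via LP feasibility, a $\phi$ such that $(\phi, \chi)$ is a feasible solution to~\eqref{LPC-convex:start}--\eqref{LPC-convex:end}. Such a $\phi$ exists because the analysis culminating in Lemma~\ref{lem:rounding_pi} shows that the projection of the extended LP onto the $\chi$-coordinates equals $\conv({\cal L}^+_{v,\ell})$, so the augmented LP that additionally fixes the $\chi$-coordinates to the given value is feasible.

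Then I would iteratively extract integer solutions. Given the current fractional $(\phi, \chi)$, I deterministically build an integer feasible solution $(\bar\phi, \bar\chi)$ via a top-down traversal: start by choosing any triple $t \in \Gamma^+(\ell)$ with $\phi_{v,t} > 0$ and set $\bar\phi_{v,t} = 1$, which fixes labels for the two children of $v$; recursively, for each $u \in I^+_v$ whose label $\ell'$ has already been determined by the triple chosen at its parent $w$, pick any $t \in \Gamma^+(\ell')$ with $\phi_{u,t} > 0$. Such a $t$ exists: if the triple $t_w$ picked at $w$ places label $\ell'$ on $u$, then $t_w \in \Gamma^-(u,\ell')$ and $\phi_{w,t_w} > 0$, so the conservation constraint forces $\sum_{t\in \Gamma^+(\ell')}\phi_{u,t} = \sum_{t\in \Gamma^-(u,\ell')}\phi_{w,t} \ge \phi_{w,t_w} > 0$. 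This produces an integer $(\bar\phi, \bar\chi)$ corresponding to some $\ellv \in {\cal L}^+_{v,\ell}$, so $\bar\chi = \chi(\ellv)$ is an extreme point of $\conv({\cal L}^+_{v,\ell})$.

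Next I would set $\mu = \min\{\phi_{u,t} : \bar\phi_{u,t} = 1\}$, record $\bar\chi$ with weight $\mu$ in the decomposition, and update $(\phi,\chi) \leftarrow \frac{1}{1-\mu}\bigl((\phi,\chi) - \mu(\bar\phi, \bar\chi)\bigr)$, halting if $\mu = 1$. Since every constraint in~\eqref{LPC-convex:start}--\eqref{LPC-convex:end} is linear and satisfied by both $(\phi,\chi)$ and $(\bar\phi,\bar\chi)$, the updated pair remains feasible; by the choice of $\mu$, at least one previously strictly positive $\phi_{u,t}$ is driven to zero. The number of $\phi$-variables is polynomial in $n$ and $|L|$, so the loop terminates after polynomially many iterations, and the recorded $(\bar\chi,\mu)$ pairs constitute a polynomial-support convex decomposition whose weights sum to $1$ and whose $\chi$-projection equals the original $\chi$.

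The only nontrivial point is showing that the top-down extraction never gets stuck, which is handled by the flow-conservation argument above. The rest is the standard iterative-rounding pattern: subtracting an extreme point with the largest coefficient that preserves feasibility zeros out at least one variable per iteration, so both the support and the running time are polynomial.
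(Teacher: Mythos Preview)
Your proof is correct, but it takes a genuinely different route from the paper's. The paper writes the exponential-size primal LP with one variable $\lambda_{\ellv}$ per $\ellv\in{\cal L}^+_{v,\ell}$, passes to the dual (polynomially many variables, exponentially many constraints), and solves it via the Ellipsoid method; the separation oracle is linear optimization over $\conv({\cal L}^+_{v,\ell})$, which is handled through the extended formulation~\eqref{LPC-convex:start}--\eqref{LPC-convex:end}. The polynomially many constraints touched by Ellipsoid determine a polynomial-support primal solution.

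Your approach instead works directly with the extended formulation: lift $\chi$ to $(\phi,\chi)$, then iteratively peel off integer solutions by a top-down traversal, using the flow-conservation constraints to guarantee you never get stuck. This is more elementary and fully constructive---no Ellipsoid machinery required---and exploits the specific tree structure of the extended formulation (the ``flow'' interpretation makes the greedy extraction go through). The paper's argument, by contrast, is agnostic to that structure and would work for any polytope admitting efficient linear optimization. Both yield a polynomial-support decomposition into points of the form $\chi(\ellv)$, which is exactly what the rounding in Section~\ref{sec:costs} needs.
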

\begin{proof}
	Given $\chi \in \conv({\cal L}^+_{v, \ell})$, we consider the following exponential-sized LP which tries to find the decomposition explicitly:
        \begin{align*}
            \sum_{\ellv\in {\cal L}^+_{v,\ell}} \lambda_{\ellv} &= 1 \\
            \sum_{\ellv\in {\cal L}^+_{v,\ell} : \ellv_u = \ell'} \lambda_{\ellv} &= \chi_{u, \ell'} &\quad &\forall u \in \Gamma^+_v, \ell' \in L \\
            \lambda_{\ellv} &\ge 0 &\quad &\forall \ellv \in {\cal L}^+_{v, \ell}
        \end{align*}
        The dual of this linear program is
        \begin{align*}
            \min \quad \pi + &\sum_{u \in \Lambda^+_v, \ell' \in L} \chi_{u, \ell'} \cdot \tau_{u,\ell'} & & \\
            \pi + \sum_{u \in \Lambda^+_v} \tau_{u,\ell_u} &\ge 0 & \quad & \forall \ellv\in {\cal L}^+_{v,\ell}
        \end{align*}
        The dual is clearly feasible, since zero is a solution.
        Thus, it is bounded if and only if the primal is feasible.
        We run the Ellipsoid method with a separation oracle on the
        dual obtaining a polynomial number of constraints
        that prove its boundedness. These constraints (or 
        the corresponding label tuples) suffice as variables
        of the primal to obtain a feasible solution.
        We therefore solve the primal restricted to these variables
        and obtain a solution $\lambda$.
        It remains to show that the separation problem of the
        dual can be solved. This is to find a label tuple $\ellv$
        such that $\sum_{{u\in \Lambda^{+}_v}} \tau_{u,\ell_u} < -\pi$.
        This can be reformulated as optimizing a linear objective
        over $\conv({\cal L}^+_{v,\ell})$, which we can do
        by linear programming, since we have a polynomial-size
        extended formulation of $\conv({\cal L}^+_{v,\ell})$.
%
%
\end{proof}    

\subsection{A useful tool: Semi-Random Rounding}
\label{sec:semi-random-rounding}
We describe  a semi-random rounding procedure, described in Algorithm~\ref{alg:semi-random-matching}. The notation in this section is independent of the other sections. 
\begin{algorithm}[H]
	\caption{Semi-Random Rounding}
	\label{alg:semi-random-matching}
    \begin{algorithmic}[1]
        \Require{a finite domain $D$, a partition $\cal P$ of $D$, $\lambda \in [0, 1]^D$ such that $\lambda(S) \in \Z, \forall S \in \cal P$, and $K \in \Z_{>0}$, and $c \in \R^D$
        }
        \Ensure{a random vector $\lambda' \in \{0, 1\}^D$}
        	\State \label{step:semi-random-find-lambdaK} let $\lambda^{(K)}$ be the vector with the minimum $c\T \lambda^{(K)}$ satisfying $\lambda^{(K)}_i \in \Big\{\frac{1}{2^K}\cdot \floor{2^K \lambda_i}, \frac{1}{2^K}\cdot \ceil{2^K \lambda_i} \Big\}, \forall i \in D$ and $\lambda^{(K)}(S) = \lambda(S)$ for every $S \in \cal P$ 
        \For{$k = K-1$ down to $0$}
            \State choose an arbitrary perfect matching $M$ between the indices $S:= \{i \in D: 2^k \lambda^{(k+1)}_i \notin \Z\}$, such that $(i, j)\in M$ implies that $i$ and $j$ are in the same partition in $\cal P$
            \State let $\nu_i = 0$ for every $i \in D \setminus S$
            \For{every $(i, j) \in M$} 
                \State with probability $1/2$, let $\nu_i \gets \frac12, \nu_j \gets -\frac12$; with probability $1/2$, let $\nu_i \gets -\frac12, \nu_j \gets \frac12$
            \EndFor
            \State 
            choose $\nu' \in \{\nu, -\nu\}$ so that $c\T \nu' \leq 0$
            \State $\lambda^{(k)} \gets \lambda^{(k+1)} + \frac{\nu'}{2^k}$
        \EndFor
        \State \Return  $\lambda' := \lambda^{(0)}$
    \end{algorithmic}
\end{algorithm}

\begin{claim}\label{cla:lambda}
	The output $\lambda'$ of Algorithm~\ref{alg:semi-random-matching} satisfies $\lambda'(S) = \lambda(S)$ for  every $S \in \cal P$. Moreover, $c\T \lambda' \leq c\T \lambda$ always holds. 
\end{claim}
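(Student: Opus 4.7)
The plan is to maintain two invariants throughout the algorithm's execution: (i) $\lambda^{(k)}(S) = \lambda(S)$ for every block $S\in\mathcal{P}$, and (ii) $c\T\lambda^{(k)} \le c\T\lambda$, with $c\T\lambda^{(k)}$ non-increasing as $k$ decreases from $K$ to $0$. Once both are established all the way down to $k=0$, the output $\lambda' := \lambda^{(0)}$ has entries that are integer multiples of $1/2^0 = 1$; combined with the fact that entries stay in $[0,1]$ (each perturbation of magnitude $1/2^{k+1}$ is applied to a value that is an odd multiple of $1/2^{k+1}$, so neither bound is crossed), this yields $\lambda' \in \{0,1\}^D$, and the claim follows.

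The step I expect to be the main obstacle is the base case, namely showing that the $\lambda^{(K)}$ required by Step~\ref{step:semi-random-find-lambdaK} both exists and satisfies $c\T\lambda^{(K)} \le c\T\lambda$. For existence I would argue block-by-block: writing $f_i := 2^K\lambda_i - \floor{2^K\lambda_i} \in [0,1)$, the constraint $\lambda^{(K)}(S) = \lambda(S)$ reduces to choosing a subset of $S$ of size $\sum_{i\in S} f_i$ whose indices get rounded up, and this size is an integer because $\lambda(S)\in\Z$. For the cost bound I would observe that the induced rounding subproblem on each $S$ is the LP with variables $a_i\in[0,1]$, constraint $\sum_{i\in S} a_i = \sum_{i\in S} f_i$, and objective $\sum_{i\in S} c_i a_i$, whose constraint matrix is totally unimodular; thus its integer minimum equals its LP minimum. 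Since $a_i = f_i$ is LP-feasible with objective $\sum_{i\in S} c_i f_i$, the optimal integer rounding has cost at most $\sum_{i\in S} c_i f_i$, and summing over $S\in\mathcal{P}$ and scaling by $1/2^K$ yields $c\T\lambda^{(K)} \le c\T\lambda$.

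The inductive step is then routine. Invariant~(i) is preserved because $M$ pairs indices within the same block of $\mathcal{P}$ and $\nu'$ assigns $+\tfrac12$ to one endpoint of each matched pair and $-\tfrac12$ to the other, so each block's sum is unchanged by $\lambda^{(k)} = \lambda^{(k+1)} + \nu'/2^k$. Invariant~(ii) is preserved because the algorithm explicitly chooses $\nu'\in\{\nu,-\nu\}$ with $c\T\nu'\le 0$, giving $c\T\lambda^{(k)} \le c\T\lambda^{(k+1)}$. A small auxiliary check is needed to see that the matching exists at all: writing $S_k := \{i\in D : 2^k\lambda^{(k+1)}_i \notin \Z\}$, invariant~(i) implies $2^{k+1}\lambda^{(k+1)}(P) = 2^{k+1}\lambda(P)$ is an even integer for every $P\in\mathcal{P}$, and reducing modulo $2$ this sum is congruent to $|S_k\cap P|$, so $|S_k\cap P|$ is even and a within-block perfect matching exists. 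Assembling these pieces by induction on $k$ from $K$ down to $0$ establishes both parts of the claim deterministically, matching the ``always holds'' qualifier.
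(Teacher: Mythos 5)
Your proof is correct and follows essentially the same route as the paper: a base-case cost bound for $\lambda^{(K)}$ (which the paper states via convex-hull membership of $\lambda$ among the candidate rounded vectors --- exactly the integrality fact your totally-unimodular argument establishes), followed by the observation that each iteration adds a within-block, zero-sum perturbation $\nu'/2^k$ chosen with $c\T \nu' \leq 0$, so block sums are preserved and the cost never increases. The additional well-definedness checks you supply (existence of $\lambda^{(K)}$ and the parity argument for the within-block perfect matching) are details the paper leaves implicit, and they are correct.
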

\begin{proof}
    First, we show that $c\T\lambda^{(K)}\le c\T\lambda$. This holds as $\lambda$ is in the convex hull of all vectors $\lambda^{(K)}$ satisfying the condition in Step \ref{step:semi-random-find-lambdaK} of Algorithm~\ref{alg:semi-random-matching}. So, the $\lambda^{(K)}$ with the minimum $c\T\lambda^{(K)}$ clearly has $c\T\lambda^{(K)}\le c\T\lambda$. 
    Algorithm~\ref{alg:semi-random-matching} ensures that $\lambda^{(k)}=\lambda^{(k+1)}+\frac{\nu'}{2^k}$ with $c^T\nu' \le 0$ at every iteration. Hence $c\T\lambda'\le c\T\lambda^{(K)} \leq c\T \lambda$. 
\end{proof}

\begin{lemma}
    \label{lemma:semi-random-rounding}
	Consider Algorithm~\ref{alg:semi-random-matching}. Let $c' \in [0, 1]^D$ be any vector.  Let $\delta, \beta \in (0, 0.1)$. Then we have
    \begin{align*}
        \Pr\left[{c'}\T \lambda' \leq (1+7\delta) \max\left\{{c'}\T \lambda + \frac{\|c'\|_1}{2^K}, \frac{3}{\delta^2}\ln\frac1\beta\right\}\right] \geq 1 - 4\beta.
    \end{align*}
\end{lemma}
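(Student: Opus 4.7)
I would split $c'^T \lambda'$ into the deterministic contribution from Step~\ref{step:semi-random-find-lambdaK} and the stochastic contribution from the random stages. The deterministic part is immediate: each coordinate of $\lambda^{(K)}$ is within $1/2^K$ of $\lambda_i$ and $c' \geq 0$, so $c'^T \lambda^{(K)} \leq c'^T \lambda + \|c'\|_1/2^K$. For the stochastic part, write $X_k := c'^T(\lambda^{(k)} - \lambda^{(k+1)}) = \epsilon_k c'^T \nu_k/2^k$; the randomness at different stages is independent, and within stage $k$ the vector $\nu_k$ is a zero-mean, symmetric random vector built from independent $\pm 1$ flips on the pairs of $M_k$. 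The crucial point is that $\epsilon_k \in \{\pm 1\}$ is a deterministic function of $\nu_k$ (chosen so that $c^T\nu'_k \leq 0$), so $X_k$ is not mean-zero in general.

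To get a Chernoff-type tail I would bound the moment generating function stage by stage. Since $X_k \in \{\pm c'^T\nu_k/2^k\}$, one has $\exp(tX_k) \leq \exp(t c'^T \nu_k/2^k) + \exp(-t c'^T \nu_k/2^k)$ pointwise, and then by the symmetry $\nu_k \stackrel{d}{=} -\nu_k$ the conditional MGF satisfies $\E[\exp(tX_k) \mid \text{past}] \leq 2\, \E[\exp(t c'^T \nu_k/2^k) \mid \text{past}]$. The right-hand side is a sub-Gaussian MGF for a sum of independent scaled Rademachers of magnitude at most $1/2^{k+1}$ (using $c'_i \in [0,1]$), so Hoeffding gives a clean exponential bound. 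Multiplying across the $K$ independent stages, applying Markov, and optimizing the Chernoff parameter $t$ yields a tail of multiplicative Chernoff form around the deterministic offset $c'^T \lambda + \|c'\|_1/2^K$.

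The main obstacle is the factor of $2$ per stage: it naively compounds to $2^K$ and would destroy the multiplicative Chernoff form. I expect one has to show that this overhead is dominated by the Chernoff exponent. Specifically, the aggregate variance $\sum_k \E[(c'^T\nu_k/2^k)^2]$ telescopes geometrically (each pair contributes at most $1/4^{k+1}$), and together with the hypothesis $\mu \geq 3\ln(1/\beta)/\delta^2$ this makes the Chernoff exponent beat both the $\ln(1/\beta)$ and the $K\ln 2$ overhead at once. Absorbing the corresponding slack into the constants is what produces the $7$ in $(1+7\delta)$; the factor $4$ in $4\beta$ includes a small union bound splitting the argument into the mean-dominated regime ($\mu$ exceeds the threshold) and the threshold-dominated regime (where one instead applies the above estimate to the floor value $3\ln(1/\beta)/\delta^2$).
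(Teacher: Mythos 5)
Your high-level decomposition (deterministic rounding error $\|c'\|_1/2^K$ plus stochastic stage increments, symmetrization to handle the adversarial sign $\nu'$) is fine, but the two places where you wave your hands are exactly where the lemma lives, and as sketched the argument fails. First, the compounding issue: you propose to multiply conditional MGF bounds over all $K$ stages, pick up a factor $2$ per stage from symmetrization, and then argue the resulting $K\ln 2$ overhead is beaten by the Chernoff exponent "using the hypothesis $\mu \geq 3\ln(1/\beta)/\delta^2$". There is no such hypothesis: the lemma must hold for every $c'\in[0,1]^D$, every $\beta\in(0,0.1)$ and every $K$, with a bound that does not grow with $K$ (beyond the negligible $\|c'\|_1/2^K$ term). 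In the intended application $K=\Theta(\log n)$ while $\ln(1/\beta)=\Theta(\log m)$ can be constant, so an exponent of order $\ln(1/\beta)$ (or of order $\delta^2\, c'\T\lambda$, which may also be tiny) cannot absorb $K\ln 2$. Second, the variance claim: "each pair contributes at most $1/4^{k+1}$" does not make the aggregate variance comparable to $c'\T\lambda$, because the number of matched pairs at stage $k$ is not controlled by $c'\T\lambda$; without more you only get a bound of order $\|c'\|_1$, and a Hoeffding deviation $\sqrt{\|c'\|_1\ln(1/\beta)}$ can be far larger than $\delta\max\{c'\T\lambda,\tfrac{3}{\delta^2}\ln\frac1\beta\}$. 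The missing ingredient is the structural fact that every index in $S$ at stage $k$ satisfies $\lambda^{(k+1)}_i\ge 2^{-(k+1)}$, which gives $\tfrac12\sum_{i\in S}c'_i\le 2^k\, c'\T\lambda^{(k+1)}$; this is what ties the scale of the stage-$k$ randomness to the \emph{current} value $c'\T\lambda^{(k+1)}$ and makes a multiplicative bound possible at all.

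The paper avoids both problems by not compounding MGFs: it proves a separate two-sided tail bound at each stage (a union bound over $\{\nu,-\nu\}$, so the factor $2$ lands in the failure probability, not in an exponent), and—crucially—it allocates the error parameters geometrically across stages: at stage $k$ the allowed relative deviation is $2^{-k/3}\delta$ times $\max\{c'\T\lambda^{(k+1)},\tfrac{3}{\delta^2}\ln\frac1\beta\}$ with failure probability $2\beta^{2^{k/3}}$. Then $\prod_k(1+2^{-k/3}\delta)\le 1+7\delta$ and $\sum_k 2\beta^{2^{k/3}}\le 4\beta$, and a final case analysis on whether $c'\T\lambda^{(k+1)}$ ever drops below the threshold $\tfrac{3}{\delta^2}\ln\frac1\beta$ yields the stated max-form bound. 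Since $c'\T\lambda^{(k+1)}$ is itself random, some adaptive device of this kind (the paper's per-stage max recursion, or alternatively a stopping-time/Freedman-type martingale argument) is unavoidable; your sketch, which defers everything to one Markov step at the end with a fixed centering at $c'\T\lambda+\|c'\|_1/2^K$, does not supply it. If you want to salvage your route, you would need (i) the $\lambda^{(k+1)}_i\ge 2^{-(k+1)}$ inequality to bound each stage's mean and variance by $2^k c'\T\lambda^{(k+1)}$ and $c'\T\lambda^{(k+1)}/2^{k+1}$ respectively, and (ii) stage-dependent deviation/failure parameters decaying in $k$, at which point you have essentially reconstructed the paper's proof.
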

\begin{proof}
	Clearly, we have $|{c'}\T \lambda^{(K)} - {c'}\T\lambda| < \frac{\|c'\|_1}{2^K}$. 
	
    Let us now estimate the increase ${c'}\T \lambda^{(k)} - {c'}\T \lambda^{(k+1)}$ induced by the rounding in iteration $k$. Let $M, S$, $\nu$ and $\nu'$ be as in the iteration. For each $p = (i, j) \in M$, let 
    \begin{align*}
        X_p = c'_i \left(\frac12 + \nu_i\right) + c'_j \left(\frac{1}{2} + \nu_j\right) \in \{c'_i, c'_j\} \in [0, 1].
    \end{align*} 
    We have $\E[X_p] = \frac{1}{2}(c'_i + c'_j)$.  Therefore, 
    \begin{align*}
        \mu:=\E\left[\sum_{p \in M}X_p\right] = \frac{1}{2}\sum_{i \in S}c'_i \leq 2^k \sum_{i \in S}c'_i \lambda^{(k+1)}_i \leq 2^k \cdot {c'}\T \lambda^{(k+1)}.
    \end{align*}
    The first inequality used that every $i \in S$ has $\lambda^{(k+1)}_i \geq \frac{1}{2^{k+1}}$.
    
By a variant of Chernoff bound, we have
\begin{align*}
    &\quad \Pr\left[\sum_{p \in M} X_p > \mu + 2^{-k/3}\delta \cdot \max\left\{2^k \cdot {c'}\T \lambda^{(k+1)}, \frac{3\cdot 2^k}{\delta^2}\ln \frac1\beta\right\}\right]\\
    &\leq \exp\left(-\frac{(2^{-k/3}\delta)^2}{3} \cdot \frac{3\cdot 2^k}{\delta^2}\ln \frac1\beta\right) = \exp\left(-2^{k/3}\ln \frac{1}{\beta}\right)  = \beta^{2^{k/3}}.
\end{align*}
Notice that ${c'}\T \nu = \sum_{i \in S} c'_i \nu_i = \sum_{p \in M} X_p - \mu$. We have
\begin{align*}
   	\Pr\left[\frac{1}{2^k}\cdot {c'}\T \nu > 2^{-k/3}\delta \cdot \max\left\{{c'}\T \lambda^{(k+1)}, \frac{3}{\delta^2}\ln \frac1\beta\right\}\right] \leq \beta^{2^{k/3}}.
\end{align*}
Using the same argument, the above inequality holds with $\nu$ replaced by $-\nu$. Therefore, 
\begin{align*}
    \Pr\left[\frac{1}{2^k}\cdot {c'}\T \nu' > 2^{-k/3}\delta \cdot \max\left\{{c'}\T \lambda^{(k+1)}, \frac{3}{\delta^2}\ln \frac1\beta\right\}\right] \leq 2\beta^{2^{k/3}}.
\end{align*}
We assume for every $k = 0, 1, 2, \cdots, K-1$, we have 
\begin{align*}
	\frac{1}{2^k}\cdot {c'}\T \nu' \leq 2^{-k/3}\delta \cdot \max\left\{ {c'}\T \lambda^{(k+1)}, \frac{3}{\delta^2}\ln \frac1\beta\right\}.
\end{align*}
This happens with probability at least $1 - \sum_{k = 0}^{K-1}2\beta^{2^{k/3}} \geq 1 - 4\beta$ as $\beta < 0.1$.  Under this condition, for every $k$, we have 
\begin{align*}
    {c'}\T \lambda^{(k)} = {c'}\T \lambda^{(k+1)} + \frac1{2^k}{c'}\T \nu' \leq \left(1+2^{-k/3}\delta\right) \cdot \max\left\{{c'}\T \lambda^{(k+1)}, \frac{3}{\delta^2}\ln \frac1\beta \right\}.
\end{align*}
Consider the smallest $k' \in \{0, 1, 2, \cdots, K-1\}$ for which we have ${c'}\T \lambda^{(k'+1)} < \frac{3}{\delta^2}\ln \frac1\beta$. If $k'$ does not exist, we have
\begin{align*}
    {c'}\T \lambda^{(0)} \leq  {c'}\T \lambda^{(K)} \cdot \prod_{k = 0}^{K-1} \left(1 + 2^{-k/3}\delta\right) \leq \exp\left(\frac{\delta}{1-2^{-1/3}}\right) \cdot {c'}\T \lambda^{(K)} \leq (1 + 7\delta)\cdot {c'}\T \lambda^{(K)}. 
\end{align*}
If $k'$ exists, then
\begin{align*}
    {c'}\T \lambda^{(0)} &\leq \frac{3}{\delta^2}\ln \frac1\beta \cdot \prod_{k=0}^{k'}(1 + 2^{-k/3}\delta) \leq (1+7\delta)\cdot \frac{3}{\delta^2}\ln \frac1\beta.
\end{align*}
The lemma follows from that $\lambda' = \lambda^{(0)}$ and ${c'}\T \lambda^{(K)} < {c'}\T \lambda + \frac{\|c'\|_1}{2^K}$.
\end{proof}

    \subsection{Rounding algorithm}
%
    With the tool of semi-random rounding, we can now design our rounding algorithm when we have $c \neq 0$. 
	We shall use the vector $\tilde \ell:= (\tilde \ell_v)_{v \in V^{(\leq 1/\epsilon)}}$ to store the final labels for all vertices in the collapsed tree.  Suppose we have a solution $x \in \LP^{(r,\ell^\circ)}$ minimizing $c\T x$.  We shall define $\tilde \ell_r = \ell^\circ$.  We will recursively round the solution from $V^{(i)}$ for $i$ from $0$ to $1/\epsilon - 1$. Focus on some $i$. We already obtained a label $\tilde \ell_v$ for every $v \in V^{(i)}$, along with a feasible solution $\big(x, (x^{(u, \ell')}, \chi_{u, \ell'})_{u \in \Lambda^+_v, \ell' \in L}\big)$  to ${\LP}^{(v,\tilde \ell_v)}$ for every $v \in V^{(i)}$. \notesl{Check the style conventions.}
    
    For every $v \in V^{(i)}$, we use {Lemma \ref{lem:poly_size_decomp}} to find a decomposition of $(\chi_{u, \ell'})_{u \in \Lambda^+_v, \ell' \in L}$ into a convex combination $(\lambda_{\ellv})_{\ellv \in {\cal L}^+_{v, \ell_v}}$ (of polynomial support) of valid label vectors in ${\cal L}^+_{v, \tilde \ell_v}$.  We then define $D$ to be the set of all label vectors $\ellv \in {\cal L}^+_{v, \tilde \ell_v}$ in the decomposition for $v$ over all $v \in V^{(i)}$. We define a partition $\cal P$ of $D$ according to $v$: for every $v \in V^{(i)}$, the set of all label vectors $\ellv \in {\cal L}^+_{v, \tilde \ell_v}$ in the decomposition for $v$ forms a set in the partition $\cal P$.  We assume each $\ellv \in D$ has its own identity. 
    
    For every $\ellv \in D$ that is in the decomposition for $v$, we define $\tilde c_{\ellv} = c\T  \sum_{u \in \Lambda^+_v} x^{(u, \ell_u)}$.
    We then run Algorithm~\ref{alg:semi-random-matching} for this $D$, $\lambda \in [0, 1]^D$, $\cal P$, $K$ to be decided later, and the vector $\tilde c$.  This gives us an integral vector $(\lambda'_{\ellv})_{\ellv \in D}$, which naturally gives the labels for $V^{(i+1)}$ as follows. For every $v \in V^{(i)}$, consider the unique $\ellv \in {\cal L}^+_{v, \tilde \ell_v}$ with $\lambda'_{\ellv} = 1$. Then we define $\tilde \ell_u = \ell_u$ for every $u \in \Lambda^+_v$. We can then recurse to the next layer by selecting ${x^{(u,\tilde \ell_u)}}$ as feasible solution to ${\LP}^{(u,\tilde \ell_u)}$.

    We need to show two things while applying this rounding to move to the next layer. First, we show that the cost does not increase, second we will prove that the packing constraints do not increase much, with high probability.

    For the cost, we define the current cost as follows
    \begin{equation*}
        \cost^{(i)} = \sum_{v\in V^{(i)}}\sum_{\ellv\in {\cal L}^+_{v,\tilde \ell_v}} \tilde c_{\ellv}\cdot \lambda_{\ellv}\ .
    \end{equation*}
    Note that we can write 
    \begin{align*}
        \cost^{(0)} &= \sum_{v\in V^{(0)}}\sum_{\ellv\in {\cal L}^+_{v,\tilde \ell_v}} \tilde c_{\ellv}\cdot \lambda_{\ellv} \\
        &= \sum_{v\in V^{(0)}}\sum_{\ellv\in {\cal L}^+_{v,\tilde \ell_v}} \lambda_{\ellv}\cdot  c\T  \sum_{u \in \Lambda^+_v}x^{(u, \ell_u)}\\
        &= \sum_{v\in V^{(0)}} c\T  \sum_{u \in \Lambda^+_v, \ell'} \chi_{u,\ell'}\cdot x^{(u, \ell')}\\
        &= c\T x\ .
    \end{align*}
    So $\cost^{(0)}$ is indeed equal to the initial fractional cost of our LP solution. By the same arguments, one can see that $\cost^{(1/\epsilon)}$ is equal to the final cost of our integral solution.
    \begin{lemma}
     Let $i\in \{0,1,\dotsc,1/\epsilon - 1\}$.
        It holds that $\cost^{(i+1)} \le \cost^{(i)}$ with probability $1$.
    \end{lemma}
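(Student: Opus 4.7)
The plan is to show both $\cost^{(i)}$ and $\cost^{(i+1)}$ can be written in the form $\tilde c\T \lambda$ and $\tilde c\T \lambda'$ respectively, where $\lambda$ and $\lambda'$ are the input and output of the call to Algorithm~\ref{alg:semi-random-matching} at level $i$, and $\tilde c$ is the cost vector we feed in. Once this is established, the inequality is immediate from Claim~\ref{cla:lambda}, since that claim guarantees $\tilde c\T \lambda' \le \tilde c\T \lambda$ with probability $1$.

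The identity $\cost^{(i)} = \tilde c\T \lambda$ is essentially the definition: $\cost^{(i)} = \sum_{v\in V^{(i)}}\sum_{\ellv\in {\cal L}^+_{v,\tilde\ell_v}} \tilde c_{\ellv}\cdot \lambda_{\ellv}$, and the domain $D$ of the algorithm is partitioned into one block per $v\in V^{(i)}$, each containing exactly the label vectors appearing in the decomposition of $(\chi_{u,\ell'})_{u,\ell'}$ for that $v$. For the other identity, I would use that after the rounding each $v\in V^{(i)}$ selects a unique label vector $\ellv^{\star}_v\in {\cal L}^+_{v,\tilde\ell_v}$ (i.e., the unique one with $\lambda'_{\ellv^{\star}_v}=1$), which in turn fixes $\tilde\ell_u$ for every $u\in \Lambda^+_v$, and the recursion uses $x^{(u,\tilde\ell_u)}$ as the feasible solution to $\LP^{(u,\tilde\ell_u)}$. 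Therefore
\begin{equation*}
\cost^{(i+1)} = \sum_{u\in V^{(i+1)}} c\T x^{(u,\tilde\ell_u)} = \sum_{v\in V^{(i)}} c\T \!\!\sum_{u\in \Lambda^+_v} x^{(u,\ell^{\star}_u)} = \sum_{v\in V^{(i)}} \tilde c_{\ellv^{\star}_v} = \tilde c\T \lambda' ,
\end{equation*}
where the first equality uses the same expansion that was carried out in the text to derive $\cost^{(0)} = c\T x$, now applied one level deeper (i.e., if $x^{(u,\tilde\ell_u)}\in \LP^{(u,\tilde\ell_u)}$ is decomposed into $\sum_{w,\ell'}\chi^{(u)}_{w,\ell'} x^{(w,\ell')}$, then the corresponding $\tilde c$-values times $\lambda$-weights at level $i+1$ sum to $c\T x^{(u,\tilde\ell_u)}$).

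Combining the two identities with Claim~\ref{cla:lambda} (applied with the specific cost vector $\tilde c$ used in the algorithm call) yields $\cost^{(i+1)} = \tilde c\T \lambda' \le \tilde c\T \lambda = \cost^{(i)}$ with probability~$1$. The only mildly non-trivial step is spelling out the telescoping identity for $\cost^{(i+1)}$; everything else is bookkeeping, and there is no probabilistic argument needed because the cost bound in Claim~\ref{cla:lambda} is deterministic.
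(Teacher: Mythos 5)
Your proposal is correct and follows essentially the same route as the paper: the paper's proof is just the one-line observation that the bound follows from Claim~\ref{cla:lambda}, with the identities $\cost^{(i)} = \tilde c\T\lambda$ and $\cost^{(i+1)} = \tilde c\T\lambda'$ being exactly the telescoping computation the paper already carries out when showing $\cost^{(0)} = c\T x$. Your write-up merely makes that bookkeeping explicit, which is fine.
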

    \begin{proof}
    	This follows easily from Claim~\ref{cla:lambda}.
    \end{proof}
    Applying the above lemma on all layers, we obtain that the cost of the final integral solution is at most the cost of the initial fractional solution.
    
 \subsection{Analysis of the packing constraints} 

 For the packing constraints, we will proceed similarly as for the cost, with the difference that we will use Lemma~\ref{lemma:semi-random-rounding} in place of Claim~\ref{cla:lambda}. Focus on some level $i \in \{0, 1, \dotsc, 1/\epsilon - 1\}$. For each row $a$ of the matrix $A$, and $\ellv \in {\cal L}^+_{v, \tilde \ell_v}$ for some $v \in V^{(i)}$,  we define 
 \begin{equation*}
     c(a)_{\ellv} = a  \cdot \sum_{u \in \Lambda^+_v} x^{(u, \ell_u)}\ .
 \end{equation*}
 We assume for each $\ellv$, we know the $v \in V^{(i)}$ with $\ellv \in {\cal L}^+_{v, \tilde \ell_v}$.
 
 Note that 
 \begin{equation*}
     c(a)_{\ellv} = \sum_{u \in \Lambda^+_v}a \cdot x^{(u, \ellv_u)} \le |\Lambda^+_v|=n^\epsilon\ ,
 \end{equation*}
 so we will apply Lemma~\ref{lemma:semi-random-rounding} with $c'=c(a)_{\ellv}/n^\epsilon$ to bound the violation of the packing constraints\footnote{This is the key moment where the factor $n^\epsilon$ will appear in our approximation guarantee.}. Moreover, we define
 \begin{equation*}
     \pack_a^{(i)}:=\sum_{v\in V^{(i)}}\sum_{\ellv\in {\cal L}^+_{v,\tilde \ell_v}} c(a)_{\ellv}\cdot \lambda_{\ellv}\ .
 \end{equation*}
 Note that we can write, as in the case of the cost vector,
\begin{align*}
        \pack_a^{(0)} &= \sum_{v\in V^{(0)}}\sum_{\ellv\in {\cal L}^+_{v,\tilde \ell_v}} c(a)_{\ellv}\cdot \lambda_{\ellv} \\
        &= \sum_{v\in V^{(0)}}\sum_{\ellv\in {\cal L}^+_{v,\tilde \ell_v}} \lambda_{\ellv}\cdot  a  \sum_{u \in \Lambda^+_v} x^{(u, \ell_u)}\\
        &= \sum_{v\in V^{(0)}} a  \sum_{u \in \Lambda^+_v, \ell'} \chi_{u,\ell'}\cdot x^{(u, \ell')}\\
        &= a\cdot  x\ .
    \end{align*}
Hence, we start with $\pack_a^{(0)}\le 1$, and the final value of that constraint is equal to $\pack_a^{(1/\epsilon)}$. In the following, we will apply our semi-random rounding. 
\begin{lemma}
    \label{lem:rounding}
    By selecting $K=O(\log n)$ in Algorithm~\ref{alg:semi-random-matching}, for any row $a$ of matrix $A$, and any $i\in \{0,1,\dotsc,1/\epsilon - 1\}$, it holds that 
    \begin{align*}
        \Pr\left[\frac{\pack_a^{(i+1)}}{n^\epsilon} \leq (1+\epsilon) \max\left\{\frac{\pack_a^{(i)}}{n^\epsilon} + 1, \frac{100}{\epsilon^2}\ln m\right\}\right] \geq 1 - 1 / m^2.
    \end{align*}
\end{lemma}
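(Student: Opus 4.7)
The plan is to reduce the statement to a single invocation of Lemma~\ref{lemma:semi-random-rounding} applied to the execution of Algorithm~\ref{alg:semi-random-matching} at level $i$, with the auxiliary cost vector $c'_{\ellv} := c(a)_{\ellv}/n^\epsilon$ for $\ellv \in D$. The first and most important step is verifying $c' \in [0,1]^D$. For every $\ellv \in {\cal L}^+_{v,\tilde\ell_v}$ the subproblem vector $x^{(u,\ell_u)}$ lies in $\LP^{(u,\ell_u)}$, which enforces $Ax^{(u,\ell_u)} \le \mathbf{1}$ and hence $a\cdot x^{(u,\ell_u)} \le 1$. Summing over the $|\Lambda^+_v| = n^\epsilon$ children of $v$ then gives $c(a)_{\ellv} \le n^\epsilon$. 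This bound, which crucially uses that the packing constraint is enforced \emph{recursively} inside each subproblem LP, is precisely where the $n^\epsilon$ factor in the final approximation arises.

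Next I would identify ${c'}\T\lambda$ and ${c'}\T\lambda'$ with the two normalized $\pack$-quantities. By definition, ${c'}\T\lambda = \pack_a^{(i)}/n^\epsilon$. For the rounded vector, note that the partition $\cal P$ groups the elements of $D$ by their vertex $v \in V^{(i)}$, and Claim~\ref{cla:lambda} gives $\lambda'(S) = \lambda(S) = 1$ for every $S \in \cal P$; hence $\lambda'$ selects exactly one $\ellv^{(v)} \in {\cal L}^+_{v,\tilde\ell_v}$ per $v$, which determines both the labels $\tilde\ell_u$ for $u \in V^{(i+1)}$ and the recursive fractional solutions $x^{(u,\tilde\ell_u)}$ that will be used at the next level. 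Repeating the computation that was used for $\pack_a^{(0)}$, but now applied at level $i+1$, gives
\begin{equation*}
\pack_a^{(i+1)} \;=\; \sum_{u \in V^{(i+1)}} a \cdot x^{(u,\tilde\ell_u)} \;=\; \sum_{v \in V^{(i)}} c(a)_{\ellv^{(v)}} \;=\; n^\epsilon \cdot {c'}\T\lambda'.
\end{equation*}

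With these identifications I would invoke Lemma~\ref{lemma:semi-random-rounding} with $\delta := \epsilon/7$ (so that $1+7\delta = 1+\epsilon$) and $\beta := 1/(4m^2)$ (so that the failure probability is $4\beta = 1/m^2$). The second term in the max is $\frac{3}{\delta^2}\ln\frac{1}{\beta} = O(\log m / \epsilon^2)$, which fits the $\frac{100}{\epsilon^2}\ln m$ form after a constant adjustment. For the first term, since every $c'_{\ellv} \le 1$ we have $\|c'\|_1 \le |D|$, and Lemma~\ref{lem:poly_size_decomp} ensures that the decomposition at each of the at most $n$ vertices in $V^{(i)}$ has polynomial support, so $|D| \le \poly(n,|L|)$. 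Choosing $K = \Theta(\log n)$ large enough absorbs this polynomial and forces $\|c'\|_1 / 2^K \le 1$, producing exactly the $\pack_a^{(i)}/n^\epsilon + 1$ expression in the lemma statement.

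The only nontrivial part is the bound $c(a)_{\ellv} \le n^\epsilon$ and the accompanying rescaling; everything else is a mechanical instantiation of Lemma~\ref{lemma:semi-random-rounding} with parameters chosen to match the target $(1+\epsilon)$ factor and $1/m^2$ failure probability, together with routine bookkeeping of $|D|$ to fix $K$.
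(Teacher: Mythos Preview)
Your proposal is correct and follows exactly the approach the paper takes: the paper's own proof is essentially the one-line remark that $\|c(a)\|_1 \le \poly(n)$ and that one applies Lemma~\ref{lemma:semi-random-rounding} with $c' = c(a)/n^\epsilon$, $\beta = 1/(4m^2)$, $\delta = \epsilon/7$, and $K = \log(\|c(a)\|_1)$; you have simply written out the identifications ${c'}^\rmT\lambda = \pack_a^{(i)}/n^\epsilon$ and ${c'}^\rmT\lambda' = \pack_a^{(i+1)}/n^\epsilon$ and the parameter bookkeeping in full. One small remark: the bound $|D| \le \poly(n,|L|)$ you state is the honest one, so strictly speaking $K$ should be $O(\log(n|L|))$ rather than $O(\log n)$; the paper makes the same elision, and it is harmless for the final running time.
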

\begin{proof}
    Notice that $\|c(a)\|_1\le \poly(n)$ since there are only a polynomial number of non-zero $\lambda_{\ellv}$s. With this in mind, it is easy to apply Lemma~\ref{lemma:semi-random-rounding} with $c'=c(a)$, $\beta=1/(4m^2)$, $\delta = \epsilon/7$, $K = \log(\|c(a)\|_1)$.
\end{proof}
Using Lemma~\ref{lem:rounding}, we can conclude easily as follows. We repeat the semi-random rounding $O(1+\log(1/\epsilon)/\log(m))$ times to boost the probability of success of Lemma~\ref{lem:rounding} from $1-1/m^2$ to $1-\epsilon/m^2$. Now with probability at least $1/2$, the bound in Lemma~\ref{lem:rounding} holds for all rows of $A$, and for every iteration of the rounding. In that case, it is easy to prove by induction on $i$ that 
\begin{equation*}
   \frac{\pack_a^{(i)}}{n^\epsilon}\le (1+\epsilon)^i \cdot \frac{200 \ln m}{\epsilon^2}\ ,
\end{equation*}
for all $i\le 1/\epsilon$. We can repeat the whole procedure enough times to increase the probability to $1-1/\poly(n)$. This concludes the proof of Theorem~\ref{thm:tree_labeling_costs} and therefore
for Theorem~\ref{thm:main} for the general case with a cost vector.

\section{Applications with packing constraints}
\label{sec:app-packing}
Our framework applies most naturally, although not exclusively to problems involving packing constraints.  Throughout the paper, we shall use $\oplus$ to denote the sumset operator: for two sets $A$ and $B$ of vectors, we have $A \oplus B := \{a + b: a \in A, b \in B\}$.

\subsection{Robust Shortest Path}
For completeness we repeat the dynamic program for modeling paths that was given also in 
the introduction.  Without loss of generality, we assume the input graph $G = (V, E)$ is a DAG.   We have one subproblem $I(v)$ for each vertex $v\in V$, with $I^\circ := I(s)$ being the root problem. The solutions are vectors in $\{0, 1\}^E$, with $S(I(v)):=\{\mathbb{1}_P: P\text{ is a path from $v$ to $t$}\}$ for each $v \in V$.  The relation $\prec$ is defined by the topological order of the vertices in $G$.  The recursion for the solutions are defined as follows: For the subproblem $I(t)$, we have $S(I(t)) = \{\bf0\}$. For every vertex $v \neq t$, we have 
 \begin{align*}
     S(I(v)) = \bigcup_{(v, u) \in \delta^+(v)} S(I(u)) \oplus \{\mathbb{1}_{(v, u)}\}.
 \end{align*}

Suppose that vector $c\in \R_{\geq 0}^E$ describes the cost of each edge, then finding the shortest $s-t$ path is equivalent to finding $x\in S(I(s))$ minimizing $c\T x$. On top of the problem, we define packing constraints $(a^j)^\rmT x \leq 1$ for every $j \in [k]$, capturing the length requirements. 

Using Theorem~\ref{thm:main} we find a solution to $I(s)$ minimizing $\sum_{e\in E} c_e x_e$.  Our solution will be a $s$-$t$ path with cost at most $\opt$, that violates the length requirements by at most $O(\frac{n^\epsilon}{\epsilon^2}\log k)$.

\subsection{Generalized Flow}
    Consider a dynamic program of the form Additive-DP with subproblems $I(s', F', w)$ which intuitively correspond to Generalized Flow instances with source $s'\in V$ with excess $F'$ and a bound of $w$ on the $\ell_1$-norm of the flow.
    Both $F'$ and $w$ are bounded by the total flow that we have from the input.
    
    Formally, the solutions $S(I(s', F',w))$ are all $x \in \ZZ^E$ satisfying
    \begin{align*}
       \sum_{e\in \delta^+(s')} x_e - \sum_{e\in \delta^-(s')} g(e) x_e &= F' & \\
       \sum_{e\in \delta^+(v)} x_e - \sum_{e\in \delta^-(v)} g(e) x_e &= 0 & \forall v\in V\setminus \{s'\} \\
       \sum_{e\in E} x_e &\le w \ ,
    \end{align*}
    and where in the graph consisting of edges with non-zero flow, i.e., $x_e > 0$,
    every edge is reachable from $s'$. Note that we can always assume, in
    the optimal solution to Generalized Flow all edges carrying flow are reachable,
    since otherwise we can reduce the flow on these to zero, which maintains the requirements on the flow and only reduces its cost.
    Then we have
    \begin{equation*}
        S(I(s', 0, 0)) = \{ {\bf0} \} \text{ and }
        S(I(s', F', 0)) = \emptyset \text{ for } F' > 0 \ .
    \end{equation*}
    For $w\ge 1$, the solutions satisfy the recurrence
    \begin{align*}
        S(I(s',F', w)) &= S(I(s',F',w-1)) \cup \bigcup_{w'\in [w-1], e=(s',v)\in \delta^+(s')} \hspace{-3em} \{\mathbb{1}_e\} \oplus S(I(s', F'-1, w')) \oplus S(I(v, g(e),w - w' - 1)).
    \end{align*}
    
Above, $\mathbb{1}_e$ is the indicator vector for the edge $e$. We implement the capacities by the additional packing constraints $x_e \leq \capa_e$ for every $e \in E$.  Theorem~\ref{thm:gen-flow} follows by applying Theorem~\ref{thm:main} to the subproblems above with the given packing constraints and minimizing the total cost of the generalized flow.
    
\subsection{Longest Common Subsequence}
    We encode a common subsequence using vectors $x\in \{0,1\}^{m\cdot n}$. Thus, we have a dimension for every pair of character $(a[i],b[j])$. 
    Then we consider a dynamic program of the form Additive-DP with subproblems $I(i,j, k)$ which corresponds to the goal of finding a common subsequence of length $k$ between the two partial strings $a[1]a[2]\ldots a[i]$ and $b[1]b[2]\ldots b[j]$, for every $i \in [0, n], j \in [0, m]$ and $k \in [0, \opt]$. The base cases are when $i=0$ or $j=0$, in which case we have $S(I(i,j, 0))=\{\bf0\}$, and $S(I(i, j, k)) = \emptyset$ when $k \geq 1$. Otherwise, we obtain the following recurrence:
    \begin{equation*}
        S(I(i,j, k)) = 
        \begin{cases}
            \big( S(I(i-1, j-1, k-1)) \oplus \{\mathbb{1}_{ij}\}\big)
            \cup  S(I(i-1,j, k))\cup S(I(i,j-1, k)) &\text{ if } a[i]=b[j]\text{ and } k \geq 1\\
             S(I(i-1,j, k))\cup S(I(i,j-1, k)) &\text{ otherwise.}
        \end{cases}
    \end{equation*}
    To model the bounded number of repetitions for each character, we add packing constraints
    \begin{equation*}
        \sum_{i,j : a[i] = b[j] = z} \frac{1}{C} \cdot x_{i,j} \le 1 \qquad \forall z\in\Sigma \ .
    \end{equation*}
    There are no costs.  It is now clear that Theorem \ref{thm:main} applies.

\section{Applications with covering constraints}
\label{sec:app-covering}
Despite our framework being based on packing constraints, we can
also apply it on some problems involving covering constraints.
In this section, we demonstrate this on the Directed Steiner Tree Problem
and Colorful Orienteering.


\subsection{Robust Steiner Tree Cover}

Recall that we are given an $n$-vertex directed graph $G=(V,E)$, a root $r \in V$,  edge costs $c \in \R_{\geq 0}^E$, a bound $B \in \R_{\geq 0}$ on the cost, a set of $k$ terminals $K\subseteq V$, and vectors $a^1, a^2, \cdots, a^k \in [0, 1]^E$. Our goal is to find an out-arborescence $T$ rooted at $r$ with cost at most $B$, and $\sum_{e \in T}c_e \leq B$, $\sum_{e \in T}a^j_e \leq 1$ for every $j \in [k]$, that contains the maximum number of terminals in $K$. 

We introduce a subproblem $I(v, o, w)$ for each $v\in V$, $o\in \{0,1,\dotsc,\opt\}$ (we can assume $\opt$ is given by guessing) and $w \in \{0,1,\dotsc,n\}$. Intuitively, we want the solutions $S(I(v, o, w))$ to represent out-arborescences rooted at $v$ with $o$ terminals and at most $w$ edges. However, by
limitations of Additive-DP, we cannot avoid multi-edges or possibly multiple incoming edges to a single vertex from occuring.
To capture this, we associate with each $x\in \ZZ^E$ a weighted graph $G(x)$ on the vertices $V$,
which contains edge $e$ with weight $x_e$ for every $e\in E$ where $x_e > 0$.
Formally, the solution $x \in S(I(v, o, w))$ are all $x\in \ZZ^E$ where $G(x)$
has total weight at most $w$, all edges in $G(x)$ are reachable from $v$ (through paths in $G(x)$) and
where the total incoming weight to terminal vertices is $o$.
The recurrence comes from the following idea: for $x \in S(I(v, o, w))$ consider any edge $e = (v,u)$ in $G(x)$, i.e., an outgoing edge from $v$.
Every edge $e'$ in $G(x)$ is reachable from $u$ or from $v$ through a path that does not contain $e$ (except possibly as the last edge if $e' = e$).
We let $x'_{e''} = x_{e''}$ for every edge $e'' \neq e$ reachable from $u$ and $x'' = x - x' - \mathbb{1}_e$.
Then $x'\in S(I(u, o', w'))$ and $x''\in S(I(v, o-o',w-w'-1))$ for some $o'\le o$ and $w'\le w-1$.

In the formalism of Additive-DP, we write the solutions recursively as follows. For any edge $e = (v, u)$, we have $S(I(v, 1, 1)) = \{{\mathbb{1}_{e}}\}$ if $u\in K$ and $S(I(v, 0, 1)) = \{{\mathbb{1}_{e}}\}$ if $u\notin K$. Any other subproblem
with $w\le 1$ is infeasible. 
For any $I(v, o, w)$ with $w \geq 2$, we have
\begin{align*}
    S(I(v, o, w)) &:= S(I(v, o, w - 1)) \\
    &\quad \cup \bigcup_{(v, u) \in \delta^+(v), o' \in [0, o], w' \in [w-1]} S(I(v, o', w'))  \oplus S(I(u, o - o', w - w' - 1)) \oplus \{\mathbb{1}_{(v, u)}\}.
\end{align*}
The root problem is $I^\circ := I(r, \opt, n)$. 

Finally, we add packing constraints that enforces that each terminal should not have incoming weight more than $1$; i.e., $\sum_{e \in \delta^-(t)}x_e \leq 1$ for every $t \in K$.  We also add the constraints $(a^j)^\rmT x\leq 1$ for every $j \in [k]$.  Let $\alpha = O\left(\frac{n^\epsilon}{\epsilon^2} \cdot \log (k + |K|)\right)$ as in  Theorem~\ref{thm:main}.
Using the theorem, we find a solution $x$ with cost at most $B$, where each terminal has at most $\alpha$ incoming edges. In particular, at least $\opt / \alpha$ terminals are reachable from $r$.  For every $j \in [k]$, we have $(a^j)^\rmT x \leq \alpha$. This finishes the proof of Theorem~\ref{thm:DST}.

\subsection{Colorful Orienteering}
Recall in Colorful Orienteering that we are given a directed graph $G = (V, E)$ with edge costs $c \in \R_{\geq 0}^E$ and colors $\kappa \in [C]^E$,  a start $s\in V$, destination $t\in V$, and budget $B\in \R_{\geq 0}$. The goal is to find a not-necessarily-simple path $p$ from $s$ to $t$ of cost at most $B$, that covers the maximum number of colors. 
One can prove that an optimum path has at most $O(n^2)$ edges (counting multiplicities). This holds as we must visit a new color in every $n$ edges; otherwise, the path can be shortcut. Therefore, by creating an $O(n^2)$-level graph, we can assume the graph $G$ is a DAG, and any directed path is simple. The number of vertices becomes $O(n^3)$ and this will not affect our claimed running time and approximation ratio. 

It may be necessary that the optimum path visits the same color many times. For the Steiner Tree version, it was important
that the optimal solution only covers each terminal once. Hence, this approach needs
to be refined here, allowing colors to be not used even though they are visited by
a subsolution.

We have a problem $I(v, j)$ for every $v \in V$ and integer $j \in [0, \opt]$, where $\opt$ is the guessed optimum number of colors.  A solution to the problem $I(v, j)$ is vector $(x \in \{0, 1\}^E, y \in \Z_{\geq 0}^{C})$, where $x$ indicates a directed path from $v$ to $t$, $y_b$ for each $b \in [C]$ is \emph{at most} the number of edges in the path with color $b$, and $\sum_{b \in [C]}y_b = j$. 

We have $S(I(t, 0)) = \{{\bf0}, {\bf0}\}$. For any $v \neq t$ and $j \geq 1$, we have 
\begin{align*}
    S(I(v, j)) := \bigcup_{(v, u) \in \delta^+(v)} \Big[\big(S(I(u, j)) \oplus \{(\mathbb{1}_{(v, u)}, {\bf0})\}\big) \cup \big(S(I(u, j - 1)) \oplus \{(\mathbb{1}_{(v, u)}, \mathbb{1}_{\kappa(v, u)})\}\big)\Big].
\end{align*}

The root problem is $I(s, \opt)$. We have $I(v, j) \preceq I(v', j')$ if there is a path from $v'$ to $v$ in $G$, and $j \leq j'$.  If there is a walk of cost at most $B$ that covers $\opt$ colors, then there is a solution of cost at most $B$ that satisfies the packing constraints
\begin{equation*}
    y_b \le 1 \qquad \forall b \in [C]\ .
\end{equation*}

Using Theorem~\ref{thm:main} we can therefore find a solution to $I(s, \opt)$
satisfying $y_b \le \alpha = O\left(\frac{n^\epsilon}{\epsilon^2} \log C\right)$ with cost
at most $B$. This solution corresponds to a walk of cost at most $B$ that covers at least
$\opt / \alpha$ different colors. This finishes the proof of Theorem~\ref{thm:colorful-orienteering}. 

\section{Applications via augmentation}
\label{sec:app-aug}
While the direct applications for our framework, as in the previous two sections, require the solutions of the problem to have a recurrence
that can be embedded in a dynamic program like structure, it
is in fact applicable also to some problems that cannot directly
be written using such a recurrence.
\subsection{Robust Perfect Matching}

In this part we prove our result on the robust version of the perfect matching problem in bipartite graphs. We consider the graph $G=(U\cup V,E)$, where
we assume that $|U| = |V|$. 

First, we phrase the perfect matching problem as a max-flow problem in the standard way. For every vertex in the bipartite graph, we add a vertex in our flow network. We also add a source $s$ and a sink $t$. We connect $s$ to every vertex in $U$, we connect every vertex of $V$ to $t$. For every edge $(u,v)\in E$, we add an edge $(u,v)$ in the flow network. Every edge has capacity $1$. Clearly, there is a one to one mapping between flows of value $k$ in the network, and matchings of cardinality $k$ in the graph $G$.

We will augment the matching (starting from the empty matching) by a reduction to the Robust $s$-$t$ Path Problem. First, we need a few additional definitions. Let $f_t$ be the flow corresponding to the current matching at step $t$, and $f^*$ the maximum flow corresponding to the perfect matching that does not violate any of packing constraints. We denote by $G_t$ the residual network corresponding to matching $f_t$. This network is defined in the standard way, edge $e$ in the flow network has capacity $\capa_e-f_t(e)$ (i.e. $0$ if the edge is used, $1$ otherwise). For every edge $e=(u,v)$, we also add a reverse edge $e'=(v,u)$ of capacity equal to $f_t(e)$. The value of a feasible flow $f$ is defined as the total flow on the edges leaving the source $s$, which we denote by $v(f)$.

Assume that $v(f_t)< |U|$, then we can define a feasible flow $\Delta f_t$ of value $F := |U|-v(f_t)$ in the residual network $G_t$. For every edge $e=(u,v)$ used by both or neither of the flows $f_t$ and $f^*$, we set $\Delta f_t(e)=0$. For every other edge $e=(u,v)$ used by $f^*$ but not $f_t$, set $\Delta f_t(e)=1$. For every edge $e=(u,v)$ used by $f_t$ but not $f$, we set $\Delta f_t((v,u))=1$. Finally, we set $\Delta f_t(e)=0$ for all remaining edges in $G_t$. It is easy to verify that this is a feasible flow in $G_t$ of value $F$.

A standard result we will use is the fact that any feasible flow in such a flow network can be decomposed into a union of $\Delta$ edge-disjoint $s$-$t$ paths (see for instance Chapter 10.5 in \cite{erickson2023algorithms}). With this in mind, we can explain the augmentation procedure.

\paragraph{The augmentation procedure.} We find a set of $F$ many $s$-$t$ paths which will be approximately disjoint and do not violate the packing constraints by too much as follows. We create $F$ copies the graph $G_t$, with the sink $t_i$ of copy $i$ connecting to the source $s_{i+1}$ of the next copy.
Then we consider an instance $\mathcal P$ of Robust $s$-$t$ Path 
from $s_1$ to $t_F$ under the packing constraints below.
Recall that the solution vectors in Robust $s$-$t$ Path have one entry
$x\in \{0,1\}$ for every edge $e$, describing whether the edge is used.

First, for each edge $e$ in $G_t$, we introduce a constraint
\begin{equation*}
    \sum_{e'\in C(e)} x_{e'} \le 1\ ,
\end{equation*}
where $C(e)$ is the set of edges which are copies of the same edge $e\in G_t$. Also, for any packing constraint $\sum_{e\in M}a^j_e\le 1$ coming from the instance of the Matching Problem, we add a packing constraint 
\begin{equation*}
    \sum_{e\in E}\sum_{e'\in C(e)}x_{e'}\cdot a^j_e\le 1\ ,
\end{equation*}
where, by a slight abuse of notation, $C(e)$ is the set of all copies of $e$ in $G_t$, which corresponds to the edge $e$ in the original graph $G$. Note that edges corresponding to reverse edges of $G$ do not appear in any of these constraints. 

\begin{claim}
In instance $\mathcal P$, there exists a directed path from $s_1$ to $t_\Delta$ which does not violate any packing constraints.
\end{claim}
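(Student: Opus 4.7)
The plan is to explicitly construct the desired path in $\mathcal{P}$ by leveraging the auxiliary flow $\Delta f_t$ and the standard path decomposition theorem.

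First, I would invoke the flow decomposition result cited from \cite{erickson2023algorithms}: since $\Delta f_t$ is an integral feasible flow of value $F$ in $G_t$ using only unit capacities, it decomposes into $F$ edge-disjoint $s$-$t$ paths $P_1, P_2, \dots, P_F$ in $G_t$. Next, I would lift these into the instance $\mathcal{P}$: place $P_i$ inside the $i$-th copy of $G_t$ (as a directed path from $s_i$ to $t_i$), and stitch them together using the connector edges $(t_i, s_{i+1})$. This yields a single directed path $P$ from $s_1$ to $t_F$. Feasibility as a walk is immediate from the construction, so all that remains is verifying the two families of packing constraints.

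For the edge-capacity constraints $\sum_{e' \in C(e)} x_{e'} \le 1$, edge-disjointness of $P_1, \dots, P_F$ in $G_t$ ensures that for any $e$ in $G_t$, at most one copy of $e$ across the $F$ layers is used by $P$, which is exactly what the constraint demands. For the matching packing constraints $\sum_{e \in E}\sum_{e' \in C(e)} x_{e'} \cdot a^j_e \le 1$, note that only copies of forward edges of $G$ (not reverse edges of the residual graph) carry a nonzero coefficient. By the construction of $\Delta f_t$, the forward edges appearing in $\Delta f_t$ are exactly those in $f^* \setminus f_t$. Since the paths $P_i$ are edge-disjoint and together carry the flow $\Delta f_t$, the multiset of forward edges used across $P_1, \dots, P_F$ equals $f^* \setminus f_t$ (with multiplicity one). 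Therefore
\begin{equation*}
    \sum_{e \in E}\sum_{e' \in C(e)} x_{e'}\cdot a^j_e \;=\; \sum_{e \in f^*\setminus f_t} a^j_e \;\le\; \sum_{e \in f^*} a^j_e \;\le\; 1,
\end{equation*}
where the last inequality uses the assumption that $f^*$ (equivalently, the perfect matching it encodes) satisfies all packing constraints.

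The main conceptual subtlety is the bookkeeping between matchings and flows: one must be careful that reverse (residual) edges traversed by the paths correspond to \emph{removing} an edge from the matching and hence do not contribute to $a^j$-load, which is precisely why the constraints are written only over forward edges. Once that distinction is made, the argument reduces to the standard flow decomposition together with the inequality $\sum_{e \in f^*} a^j_e \le 1$, and there is no further obstacle.
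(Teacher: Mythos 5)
Your proposal is correct and follows essentially the same route as the paper's own proof: decompose $\Delta f_t$ into $F$ edge-disjoint $s$-$t$ paths, use edge-disjointness for the copy constraints, and observe that the forward edges of $\Delta f_t$ are exactly those used by $f^*$ but not $f_t$, so the matching constraints are inherited from $f^*$. No gaps.
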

\begin{proof}
    We already established by standard flow arguments that there exists a set of $F$ disjoint $s$-$t$ paths in $G_t$ corresponding to the flow $\Delta f_t$. Therefore there clearly exists a path from $s_1$ to $t_F$ which does not violate any of the packing constraints of the first type. For the packing constraints coming from the matching instance itself, note that the only edges used by $\Delta f_t$ which appear in one of these constraints are edges used by $f^*$ but not $f_t$. Since $f^*$ does not violate any of the packing constraints, the flow $\Delta f_t$ also satisfies these constraints. Since an edge appears in the path decomposition of $\Delta f_t$ if and only if $\Delta f_t$ uses that edge, we can conclude the proof of the claim. 
\end{proof}

Using Theorem \ref{thm:main}, we can find in time $n^{O(1/\epsilon)}$ a path from $s_1$ to $t_F$ which does not violate any of the packing constraints by a factor more than $O\left(\frac{n^\epsilon}{\epsilon}\log (nk)\right)$. This path corresponds to a set of $F$ many $s$-$t$ path in $G_t$ which does not use any edge more than $O\left(\frac{n^\epsilon}{\epsilon}\log (nk)\right)$ times. Denoting by $P$ the support of this set of path, we know by scaling that there exists a feasible fractional flow of value $ \Delta/O\left(\frac{n^\epsilon}{\epsilon}\log (nk)\right)$ in $G_t$ using only edges of $P$. Therefore we compute a maximum flow in the graph $G_t$ restricted to edges that appear in $P$, and we augment $f_t$ using it. Because it uses only edges in the support of $P$, this ensures that no packing constraints of the Matching Problem increases by more than an additive $O\left(\frac{n^\epsilon}{\epsilon}\log (nk)\right))$ while doing this.

To measure how much the flow value is increased during each augmentation, we use the fact that any feasible fractional flow can be written as a convex combination of integral flows (see for instance Chapter 8 in \cite{schrijver2003course}). Therefore, in the residual graph restricted to $P$, there must exists an integral flow of value at least $F/O\left(\frac{n^\epsilon}{\epsilon}\log (nk)\right)$. Hence, by computing a maximum flow in that graph, we obtain a flow augmentation of value at least $F/O\left(\frac{n^\epsilon}{\epsilon}\log (nk)\right)$, so the missing flow is multiplied by a fraction at most 
\begin{equation*}
\left(1-1/\left(\frac{n^\epsilon}{\epsilon^2}  \cdot \log (nk)\right)\right)\ .
\end{equation*}
This proves that after $O\left(\frac{n^\epsilon}{\epsilon}\log^2 (nk)\right)$ augmentation steps, the missing flow value is $0$. At every step, the packing constraints are violated by an additive $O\left(\frac{n^\epsilon}{\epsilon}\log (nk)\right)$, therefore the total violation is at most $O\left(\frac{n^{2\epsilon}}{\epsilon^2}\cdot \log^3 (nk)\right)$. Rescaling $\epsilon$ and leveraging the one-to-one correspondence between maximum flows and maximum matchings concludes the proof of Theorem~\ref{thm:matching}.

\subsection{Santa Claus}
Let $\beta$ be the target approximation rate on the minimum value among players,
which we will specify later.
Using standard techniques of binary search, rounding and rescaling,
we can reduce the problem to the following variant.
\begin{itemize}
    \item Our goal is to give each player resources of total value at least $1/\beta$ or to determine that no solution exists that gives every player
a value of $1$,
    \item we assume that every value $v_{ij}$ is either $0, 1$ or $1/2^\ell$ with $1/\beta > 1/2^\ell > 1/2n$, and
    \item we restrict to solutions that assign to each player only resources of the same value.
\end{itemize}
Denote by $B$ the set of all possible values for resources. 
The reason why we do not need to consider resource values in $(1/\beta, 1)$
is that any such resource is sufficient for the player to obtain a value
of $1/\beta$; hence we can also set them to $1$. The reason we do not
need to consider value smaller than $1/2n$ is that we can round them
down to $0$, which will only remove a total value of at most $1/2$ from each player.

These reductions
come at a loss of a factor of $O(|B|) = O(\log n)$ in the approximation rate.
\paragraph*{General approach.}
We solve the problem by starting with the empty assignment
and then augmenting the solution in several iterations $k=1,2,3,\dotsc,\gamma$,
where $\gamma$ is the number of iterations that needs to be carefully bounded.
For each of these augmentations we will solve an instance of Integer Generalized Flow.
We say that a player is \emph{happy} in iteration $k$, if $i$ has a total value at least $1/(\alpha k) - 3k/\beta$.
Here, $\alpha$ is the approximation ratio for the Integer Generalized
Flow Problem.
Note that the threshold decreases over the iterations, which happens because already happy players
will lose some of their resources due to the later augmentations.
However, if by iteration $\gamma$, all players are happy
and we use $\beta = 6 \alpha\gamma^2$, then
we still retain a minimum value of
\begin{equation*}
	\frac{1}{\alpha\gamma} - \frac{3\gamma}{\beta} \ge \frac{1}{\alpha \gamma} - \frac{3\gamma}{6 \alpha\gamma^2} \ge \frac{1}{2 \alpha \gamma} \ge \frac{1}{\beta} \ .
\end{equation*}
In other words, if we can obtain a low number
of iterations $\gamma$,
then also our overall approximation ratio $\beta$ is low.
We will do the augmentation in such a way that each augmentation increases
the cost of the allocation by at most the cost of the optimal allocation
in expectation. Then, again, $\gamma$ gives a bound on the approximation
factor for the cost of the solution.

Throughout the algorithm we will maintain the invariants
that every player is only assigned resources of the same value
and an unhappy player is not assigned any resources.

\paragraph*{Augmentation.}
Assume we are given an assignment $\sigma: R \rightarrow P\cup\{\bot\}$, where $\bot$
indicates a resource is not assigned.
We assume $\sigma$ satisfies the previous invariants, but it does not make all players happy.
We define an instance of Integer Generalized Flow that represents augmentations.

The network contains the following vertices.
\begin{itemize}
    \item A vertex for each resource $j\in R$,
    \item a vertex for each player $i\in P$,
    \item a vertex $a(i, b)$ for every $i\in P$ and resource value $b\in B\setminus\{0,1\}$ that represents adding $1/b$ resources of value $b$ to $i$,
    \item a source vertex $s$, and
    \item a special vertex $\bot$ representing that a resource is not assigned.
\end{itemize}
Before defining the edges, we briefly summarize the intuition behind it:
each incoming unit of flow to a player $i$ means that in the augmentation to the optimal solution we 
move a total value of $1$ in resources to $i$. This may represent that $i$
is currently unhappy and through the augmentation $i$ becomes happy, or that the augmentation removes a resource $j$ of value $v_{ij} = 1$ from $i$ and we need to give $i$ other resources to compensate for the loss.
The total value of $1$ in resources that we give to $i$ can be by giving $i$ a new resource $j$ of value $v_{ij} = 1$, which we represent
by a unit flow from $i$ to $j$. It may also be by giving $i$ a total of $1/b$
many resources of value $b \in B\setminus\{0,1\}$.
We represent this by
sending a unit flow from $i$ to $a(i,b)$, where the edge has a gain of $1/b$.
The vertex $a(i, b)$ then has to propagate this flow to resources $j$ with $v_{ij} = b$.
For each unit of flow to a resource $j$ that is currently assigned to a player $i$,
we propagate the flow to $i$. With the intuition above this means we need 
to give $i$ new resources of value $1$ as a replacement. However, if $v_{ij} < 1$,
then this would not be accurate because in the augmentation to an optimal solution,
there might only be a value $v_{ij}$ moved to $i$.
Thus, for this case we set the gain of the edge to zero,
indicating that a small resource that is removed does not have to be replaced at all.
This is perhaps the most counter-intuitive aspect of this construction, but without
this, the structure of the flow would be too complicated to formally model it as an integer 
generalized flow. Fortunately, we will can cope with the inaccuracy resulting from this.

Formally, we introduce the following edges.
\begin{itemize}
    \item for each resource $j$ with $\sigma(j) \neq\bot$ and $v_{\sigma(j), j} = 1$: an edge $e = (j, \sigma(j))$ with $g(e) = 1$ and $w(e) = 0$,
    \item for each resource $j$ with $\sigma(j) = \bot$ or $v_{\sigma(j), j} < 1$, an edge $e = (j, \sigma(j))$ with $g(e) = 0$ and $w(e) = 0$,
    \item for each resource $j$ and player $i\neq \sigma(j)$ with $v_{ij} = 1$: an edge $e = (i, j)$
    with $g(e) = 1$ and $w(e) = c_{ij}$,
    \item for each resource $j$ and player $i\neq \sigma(j)$ with $v_{ij} = b \in B\setminus\{0,1\}$: an edge $e = (a(i, b), j)$ with $g(e) = 1$ and $w(e) = c_{ij}$,
    \item for each player $i$ and value $b \in B\setminus \{0,1\}$: an edge $e = (i, a(i, b))$ with $g(e) = 1/b$ and $w(e) = 0$, and
    \item for each $i\in P'$: an edge $(s, i)$ with $g(e) = 1$ and $w(e) = 0$.
\end{itemize}
The excess flow of $s$ is set to $|P'|$
and the capacity on each edge is $1$.
We will show in the analysis that
assuming there is an assignment that gives each player a value of $1$ of resources
of the same value, the instance of Integer Generalized Flow is feasible with a weight
no larger than the cost of that assignment, regardless
of $\sigma$.

We obtain a generalized flow $f$ with weight at most $B$ (the budget for the cost
in Santa Claus)
and where the flow on each edge is at most $\alpha$ using
Theorem~\ref{thm:gen-flow}. If this does not succeed we can safely return
that there is no optimal assignment of cost at most $B$ that gives each player a value of $1$ in resources of the same value.
For each $i\in P$ and $b \in B\setminus\{0,1\}$ let
\begin{equation*}
    y_{i,b} = \sum_{e\in \delta^-(a(i,b))} f(e) \ .
\end{equation*}
In order to use standard flow arguments, based on the fixed value of $y_{i,b}$ we construct
a classical flow network to replace the generalized flow.
Towards this, replace each vertex $a(i,b)$ by one source $a_s(i,b)$ and one sink $a_t(i,b)$,
where for incoming edges to $a(i,b)$ we replace $a(i,b)$ by $a_t(i,b)$ and for outgoing edges by $a_s(i,b)$.
Furthermore, we introduce a sink vertex $r(i,b)$ and each
edge $(j, i)$ with $i = \sigma(j) \neq \bot$ and $v_{ij} = b \in B\setminus\{0,1\}$ is replaced by
an edge $(j, r(i, b))$. Intuitively, flow to $r(i,b)$ corresponds to small resources \emph{removed}
from~$i$, whereas flow to $a_t(i,b)$ (and from $a_s(i,b)$) corresponds to \emph{adding} resources
of value $b$ to $i$.
By slight abuse of notation we consider $f$ to be a flow in the new
graph, where we apply to the flow the same modifications as to the edges.
Then $f$ is a solution to the following system, where $\delta^-(\cdot)$ and $\delta^+(\cdot)$
refer to the incoming and outgoing edges of a vertex in the modified network.
\begin{align*}
    \sum_{e\in \delta^-(j)} x_e &= \sum_{e\in \delta^+(j)} x_e &\forall j\in R \\
    \sum_{e\in \delta^-(i)} x_e &= \sum_{e\in \delta^+(i)} x_e &\forall i\in P \\
    \sum_{e\in\delta^+(s)} x_{e} &\ge |P'| & \\
    \sum_{e\in\delta^+(a_s(i,b))} x_{e} &\ge \frac{1}{b}\cdot y_{i,b} &\forall i\in P, b\in B\setminus\{0,1\} \\
    \sum_{e\in\delta^-(a_t(i,b))} x_{e} &\le y_{i,b} &\forall i\in P, b\in B\setminus\{0,1\} \\
    \sum_{e\in\delta^-(r(i,b))} x_{e} &\le \alpha \cdot |\delta^-(r(i,b))| &\forall i\in P, b\in B\setminus\{0,1\} \\
    x_{e} &\le \alpha &\forall e\in E \ .
\end{align*}
This mathematical system defines a classical flow problem with flow conservation on $R$ and $P$,
lower bounds on the sources $s, a_s(\cdot,\cdot)$
and upper bounds on the sinks $a_t(\cdot,\cdot), r(\cdot,\cdot)$. No restriction
on the flow to sink $\bot$ is given.

Decompose $f' + f'' = f$ where
$f'$ is a flow from $s$ to the sinks $a_t(\cdot,\cdot), r(\cdot,\cdot),\bot$ and $f''$ is a flow from the sources $a_s(\cdot,\cdot)$ to the sinks $a_t(\cdot,\cdot), r(\cdot,\cdot),\bot$. Then $f'/\alpha + f''/(k\alpha)$ is a fractional flow of weight still at most $B$ satisfying flow conservation on $P,R$ and the following bounds.
\begin{align*}
    \sum_{e\in\delta^+(s)} x_{e} &\ge \sum_{e\in\delta^+(s)}\frac{f'(e)}{\alpha} & \\
    \sum_{e\in\delta^+(a_s(i,b))} x_{e} &\ge \left\lfloor \frac 1 b \cdot \frac{1}{k\alpha} \cdot y_{i,b} \right\rfloor &\forall i\in P, b\in B\setminus\{0,1\} \\
    \sum_{e\in\delta^-(a_t(i,b))} x_{e} &\le y_{i,b}  &\forall i\in P, b\in B\setminus\{0,1\} \\
    \sum_{e\in\delta^-(r(i,b))} x_{e} &\le \left\lceil \frac{|\delta^-(r(i,b))|}{k} + \sum_{e\in \delta^-(r(i,b))} \frac{f'(e)}{\alpha} \right\rceil &\forall i\in P, b\in B\setminus\{0,1\} \\
    x_{e} &\le 1 &\forall e\in E \ .
\end{align*}
By standard flow arguments, we can in polynomial time randomly sample an integer flow $z$ satisfying
\begin{equation*}
    \E\left[\sum_{e\in\delta^+(s)} z_e\right] \ge \sum_{e\in\delta^+(s)}\frac{f'(e)}{\alpha}
    \quad \text{ and } \quad \E\left[\sum_{e\in E} w_e \cdot z_e \right] \le B \ ,
\end{equation*}
and all other integer bounds with probability $1$.
We augment $\sigma$ to $\sigma'$ using the flow $z$. More precisely,
for each resource $j$ with no outgoing and incoming flow, we set $\sigma'(j) = \sigma(j)$.
Consider now a resource $j$ 
with non-zero outgoing flow. Since $j$ has only one outgoing edge, the outgoing flow
and therefore also the incoming flow must be exactly $1$. We set $\sigma'(j) = i$
where $i$ is the player
such that the incoming flow either comes from vertex $i$ or from $a_s(i,b)$ for some $b\in B\setminus\{0,1\}$.
By the definition of the weights of the network, the increase in cost by the augmentation is at most the weight of $z$, which is at most $B$ in expectation.

We now repair the invariants. For every player that was happy in $\sigma$
and is no longer happy in $\sigma'$ we remove all resources.
If a player that became happy in $\sigma'$ and was unhappy in $\sigma$
receives resources of different values, we only keep resources of
a single value, choosing one arbitrarily.
\paragraph*{Feasibility of Generalized Flow instance.}
Suppose there exists an assignment $\sigma_{\OPT}$
that gives each player $i$ a total value
of at least $1$ from resources $j$ of the same value $v_{ij}$.
We will define a generalized flow $f$ for the instance above with weight
at most the cost of $\sigma_{\OPT}$.
For simplicity, we only guarantee
\begin{equation*}
    \sum_{e\in \delta^+(v)} f(e) - \sum_{e\in \delta^-(v)} g_e \cdot f(e) \ge \begin{cases}
        |P'| &\text{ if } v = s, \\
        0 &\text{ if } v \neq s \ .
    \end{cases}
\end{equation*}
In the definition of the Generalized Flow Problem equality is required.
However, one can easily derive another generalized flow that satisfies equality
by repeatedly reducing the flow on edges $e\in \delta^+(v)$ for vertices
$v$ where the inequality is strict. This maintains the previous inequalities,
only reduces weight, and must terminate, since we are only reducing flow.

Define $f$ as follows.
\begin{enumerate}
    \item For each player $i\in P'$ we set $f((s,i')) = 1$,\label{en:santa-unass}
    \item for each player $i\in P$ that in $\sigma_{\OPT}$ is assigned resources of value $b\in B\setminus\{0,1\}$ we set $f((i, a(i,b))) = 1$,\label{en:santa-small-out}
    \item for each resource $j\in R$ with $\sigma_{\OPT}(j) \neq \sigma(j)$ or $v_{\sigma(j),j} < 1$
    we set $f((j,\sigma(j)) = 1$,\label{en:santa-res-out}
    \item for each resource $j\in R$ where $\sigma_{\OPT}(j) \neq \sigma(j)$ and $v_{\sigma_{\OPT}(j), j} = 1$ 
    we set $f((\sigma_{\OPT}(j),j)) = 1$, \label{en:santa-big}
    \item for each resource $j\in R$ where $\sigma_{\OPT}(j) \neq \sigma(j)$ and $v_{\sigma_{\OPT}(j), j} = b\in B\setminus\{0,1\}$
    we set $f((a(\sigma_{\OPT}(j),b),j)) = 1$.\label{en:santa-small}
\end{enumerate}
All other edges have a flow of zero.
Clearly the source has excess of $|P'|$ as required.
Now consider a resource~$j$. If $j$ has incoming flow, then only one unit, because only one of the cases \ref{en:santa-big} or \ref{en:santa-small} can apply. In either case also \ref{en:santa-res-out} applies and therefore
the outgoing flow is also one.

Now consider a player $i \in P$.
We ignore incoming flow from resources $j$ with $v_{ij} < 1$, since the gain on those edges is zero.
If $i$ has incoming flow from other edges, then because of
\ref{en:santa-unass} or \ref{en:santa-res-out} and only one unit.
In either case, none
of the resources assigned to $i$ in $\sigma$ are assigned to $i$ in
$\sigma_{\OPT}$, which means that in $\sigma_{\OPT}$ either there is a resource $j\in R$ with $v_{ij} = 1$ and $i = \sigma_{\OPT}(j) \neq \sigma(j)$
and therefore by \ref{en:santa-big} an outgoing flow from $i$;
or in $\sigma_{\OPT}$ there are resources of some value $b\in B\setminus\{0,1\}$
assigned to $i$, in which case by \ref{en:santa-small-out} there is outgoing
flow from $i$.

Finally consider vertices $a(i,b)$ where $b\in B\setminus\{0,1\}$.
If $a(i,b)$ has incoming flow then because of \ref{en:santa-small-out}
and it must be exactly one unit from $i$, which has a gain of $1/b$.
This means there are resources of value $b$ assigned to $i$ in $\sigma_{\OPT}$
and since $\sigma_{\OPT}$ only assigns resources of the same value,
there must be $1/b$ many such resources assigned to $i$.
By \ref{en:santa-small} there is a total outgoing flow of $1/b$
from $a(i,b)$.

The only non-zero weight is on edges that incur in the cases~\ref{en:santa-big}
and~\ref{en:santa-small}. Here, the weight incurred in the flow is
the same as the cost incurred in $\sigma_{\OPT}$.

\paragraph*{Number of Iterations and approximation.}
By the augmentation some players become happy but also some players that were happy
may also become unhappy,
because we removed resources from them. The next two lemmas
allow us to lower bound the number of happy players after augmentation.
\begin{lemma}
    Suppose that player $i$ has incoming flow in $z$. Then $i$ is happy in $\sigma'$.
\end{lemma}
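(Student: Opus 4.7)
I will rely on flow conservation at vertex $i$. Since $i$ has positive incoming flow in $z$ and $z$ is integer with unit capacity on every edge, the incoming flow at $i$ is at least $1$, and hence the outgoing flow at $i$ is at least $1$ as well. Picking any unit-flow outgoing edge from $i$, by the structure of the network it is either (a) a direct edge $(i, j)$ to some value-$1$ resource $j$ with $\sigma(j) \ne i$, or (b) an edge $(i, a_t(i, b))$ for some $b \in B \setminus \{0, 1\}$.

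In case (a), the augmentation rule assigns $j$ to $i$ in $\sigma'$, giving $i$ a resource of value $1$, which comfortably exceeds the happiness threshold $1/(\alpha k) - 3 k/\beta$. The cleanup step can then keep $i$'s value-$1$ category, certifying that $i$ is happy. In case (b), the sink capacity $z_{(i, a_t(i, b))} \le y_{i, b}$ combined with $z_{(i, a_t(i, b))} = 1$ forces $y_{i, b} \ge 1$, and the source lower bound at $a_s(i, b)$ then guarantees that at least $\lfloor y_{i, b}/(b k \alpha) \rfloor$ value-$b$ resources are assigned to $i$ by the augmentation rule, since each outgoing unit of flow from $a_s(i, b)$ installs a value-$b$ resource into $i$'s bundle. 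After cleanup that keeps the value-$b$ category, $i$ retains value at least $b \cdot \lfloor y_{i, b}/(b k \alpha) \rfloor \ge y_{i, b}/(k \alpha) - b$.

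The main obstacle is to confirm that the quantity in case (b) beats the threshold $1/(\alpha k) - 3k/\beta$. This is where the calibration $\beta = 6 \alpha \gamma^2$ and the integrality of the generalized flow $f$ enter: because $y_{i, b}/b$ is a non-negative integer in $f$ (by conservation with gain $1/b$ at the vertex $a(i, b)$ in the integer flow), the floor loses at most $b$ and in particular $y_{i,b} \ge b$. The threshold's slack $3 k/\beta = k/(2 \alpha \gamma^2)$ is designed precisely to absorb this discretization loss, together with the per-iteration losses of previously happy players who drop slightly below their previous thresholds. I expect the tightest sub-case of the verification to be when $b$ is comparable to $3k/\beta$, where the bookkeeping must be done carefully to ensure that the value $i$ receives does not fall short.
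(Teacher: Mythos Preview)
Your two-case argument via flow conservation at $i$ matches the paper's proof, and your derivation in case~(b) of the bound $b \cdot \lfloor y_{i,b}/(b k \alpha) \rfloor \ge y_{i,b}/(k\alpha) - b \ge 1/(k\alpha) - b$ (using $y_{i,b}\ge 1$ from the sink constraint at $a_t(i,b)$) is correct.

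The gap is that you stop short of actually closing case~(b), and the machinery you invoke for the last step is unnecessary. You do not need the calibration $\beta = 6\alpha\gamma^2$, nor the observation that $y_{i,b}/b$ is an integer in the generalized flow $f$. All that is missing is a single line: by the preprocessing of values, every $b \in B\setminus\{0,1\}$ satisfies $b < 1/\beta$, and since $k\ge 1$ this gives $b \le 3k/\beta$. Hence
\[
\frac{1}{k\alpha} - b \;\ge\; \frac{1}{k\alpha} - \frac{3k}{\beta},
\]
which is exactly the happiness threshold for iteration $k$. This is how the paper finishes the proof in one inequality; there is no ``tight sub-case'' that requires additional bookkeeping.
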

\begin{proof}
    Since $i$ has incoming flow, it must also have outgoing flow. If the outgoing flow is to a job $j'$,
    then $v_{ij'} = 1$ and $\sigma'(j') = i$, therefore $i$ is happy.
    If the outgoing flow is to $a_t(i,b)$ for some $b\in B\setminus\{0,1\}$ then
    $y_{i,b} \ge 1$. Therefore, the outgoing flow from $a_s(i,b)$ must be at
    least $\lfloor 1/b \cdot 1/(k\alpha) \rfloor$. This is the same as the number
    of resources $j'$ that have $v_{ij'} = b$ and are assigned to $i$ in $\sigma'$.
    The total value of these resources for $i$ is
    \begin{equation*}
        b \cdot \left\lfloor \frac 1 b \cdot \frac{1}{k\alpha} \right\rfloor \ge \frac{1}{k\alpha} - b \ge \frac{1}{k\alpha} - \frac{3k}{\beta} \ .
    \end{equation*}
    Hence, $i$ is happy.
    Note that the postprocessing removes resources, if $i$ gets
    resources of different values. In that case make the analysis above
    with respect to the resource value that $i$ retains.
\end{proof}
Thus, players that are assigned a resource of value $1$ in $\sigma$ remain happy in
$\sigma'$: if they lose
their resource, they have incoming flow and by the previous lemma remain happy.
Furthermore, players in $P'$ with flow from $s$ become happy.

However, some players may also change from happy to unhappy.
Towards this, we will analyze the players that are assigned resources of some value $b\in B\setminus \{0,1\}$ in $\sigma$.
\begin{lemma}
    Suppose that player $i$ is happy in $\sigma$, where it is assigned resources of value $b\in B\setminus\{0,1\}$. If $\sum_{e\in \delta^-(r(i,b))} f'(e) / \alpha \le 2$ then $i$ remains happy.
\end{lemma}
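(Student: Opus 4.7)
The plan is to turn the happiness of $i$ in $\sigma$ into a lower bound on the number $N := |\delta^-(r(i,b))|$ of value-$b$ resources currently assigned to $i$, then upper-bound how many of these can be removed by the integer flow $z$, and finally verify that what remains still exceeds the happiness threshold for iteration $k$.

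First, since in $\sigma$ player $i$ is assigned only resources of value $b$ and is happy at the previous iteration threshold, we have
\begin{equation*}
    bN \;\ge\; \frac{1}{\alpha(k-1)} - \frac{3(k-1)}{\beta}.
\end{equation*}
Next, I would invoke the capacity bound on $r(i,b)$ that holds with probability $1$ for the sampled integer flow:
\begin{equation*}
    \sum_{e \in \delta^-(r(i,b))} z_e \;\le\; \left\lceil \frac{N}{k} + \sum_{e \in \delta^-(r(i,b))} \frac{f'(e)}{\alpha} \right\rceil \;\le\; \left\lceil \frac{N}{k} + 2 \right\rceil \;\le\; \frac{N}{k} + 3,
\end{equation*}
using the hypothesis of the lemma. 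Hence $i$ retains at least $N(k-1)/k - 3$ resources of value $b$ after the flow is applied.

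Combining these two estimates, the value $i$ still owns from value-$b$ resources alone is at least
\begin{equation*}
    b\Bigl(N \cdot \tfrac{k-1}{k} - 3\Bigr) \;\ge\; \frac{k-1}{k}\left(\frac{1}{\alpha(k-1)} - \frac{3(k-1)}{\beta}\right) - 3b \;\ge\; \frac{1}{\alpha k} - \frac{3(k-1)^2}{\beta k} - \frac{3}{\beta},
\end{equation*}
where the last step uses $b < 1/\beta$ (guaranteed by the preprocessing $1/2^\ell < 1/\beta$). A short manipulation shows $(k-1)^2 + k \le k^2$ for every $k \ge 1$, i.e.\ $\frac{3(k-1)^2}{\beta k} + \frac{3}{\beta} \le \frac{3k}{\beta}$, so the remaining value is at least $\frac{1}{\alpha k} - \frac{3k}{\beta}$, which is exactly the happiness threshold for iteration $k$.

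The one subtlety to address is the invariant-repair step: the augmentation may give $i$ additional resources of some value $b' \neq b$ via the outgoing edges $(i, a(i, b'))$, which would violate the single-value invariant. Since the bound above certifies $i$'s happiness using only the retained value-$b$ resources, we can simply discard the newly received resources of other values and keep the value-$b$ allocation. This preserves the invariant without weakening the estimate, and the postprocessing rule that arbitrarily selects a single value applies only to players that were unhappy in $\sigma$, so it does not interfere with $i$. The main obstacle in this proof is really just bookkeeping the two sources of loss (the $N/k$ term from the deterministic share and the extra $+3$ from the ceiling and the flow-from-$s$ contribution), and checking that both together are swallowed by the slack $3k/\beta - 3(k-1)/\beta = 3/\beta$ between consecutive thresholds.
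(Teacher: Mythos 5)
Your proof is correct and follows essentially the same argument as the paper: lower-bound the number of value-$b$ resources via happiness at the iteration-$(k-1)$ threshold, bound the removed resources through the capacity constraint at $r(i,b)$ using the hypothesis $\sum_{e\in\delta^-(r(i,b))} f'(e)/\alpha \le 2$, and check that the retained value, using $b < 1/\beta$, still clears the iteration-$k$ threshold. The only differences are cosmetic: your algebra keeps the term $\tfrac{3(k-1)^2}{\beta k}$ and verifies $(k-1)^2 + k \le k^2$ instead of the paper's slightly cruder bound, and your remark on the invariant-repair step matches the paper's handling.
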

\begin{proof}
    The number of resources that are assigned to $i$ in $\sigma$ is
    \begin{equation*}
        |\delta^-(r(i,b))| \ge \frac{1}{b} \left(\frac{1}{\alpha (k-1)} - \frac{3(k-1)}{\beta} \right) \ .
    \end{equation*}
    Out of these, $i$ retains in $\sigma'$ at least
    \begin{align*}
    |\delta^-(r(i,b))| - \sum_{e\in\delta^-(r(i,b))} z_e &\ge 
    |\delta^-(r(i,b))| - \left\lceil \frac{|\delta^-(r(i,b))|}{k} + \sum_{e\in \delta^-(r(i,b))} \frac{f'(e)}{\alpha} \right\rceil \\
    &\ge 
        |\delta^-(r(i,b))| \left(1 - \frac{1}{k}\right) - 3 \\
        &\ge \frac{1}{b} \left( \left(1 - \frac{1}{k}\right)\left(\frac{1}{\alpha (k-1)} - \frac{3(k-1)}{\beta}\right) - \frac{3}{\beta} \right) \\
        &\ge \frac{1}{b} \left( \left(1 - \frac{1}{k}\right)\left(\frac{1}{\alpha (k-1)}\right) - \frac{3k}{\beta}\right) \\
        &\ge \frac{1}{b} \left(\frac{1}{\alpha k} - \frac{3k}{\beta} \right) \ .
    \end{align*}
    Since each of these resources has value $b$ for $i$, $i$ remains happy.
\end{proof}
Thus, the number of happy players in $\sigma$ that become unhappy in $\sigma'$
is at most
\begin{equation*}
    \frac{1}{2} \sum_{i\in P} \sum_{b\in B\setminus\{0,1\}} \sum_{e\in \delta^-(r(i,b))}  \frac{f'(e)}{\alpha} \le \frac{1}{2} \sum_{e\in\delta^+(s)} \frac{f'(e)}{\alpha} \ .
\end{equation*}
Recall that the flow of $z$ leaving $s$ lower bounds
the number of players that were unhappy and become happy. Therefore, in expectation the number of happy players increases by at least
\begin{equation*}
\E\left[\sum_{e\in\delta^+(s)} z_e\right] - \frac{1}{2} \sum_{e\in\delta^+(s)} \frac{f'(e)}{\alpha} \ge \sum_{e\in\delta^+(s)} \frac{f'(e)}{\alpha} - \frac{1}{2} \sum_{e\in\delta^+(s)} \frac{f'(e)}{\alpha} \ge \frac{1}{2}\sum_{e\in\delta^+(s)} \frac{f'(e)}{\alpha} \ge \frac{1}{2 \alpha} |P'| \ .
\end{equation*}
Thus, the expected number of unhappy players in iteration $k$
is at most $(1 - 1/2\alpha)^k \cdot n$.
It follows that for some $\gamma = O(\alpha \log n)$
the expected number of unhappy players by iteration $\gamma$
is at most $1/100$. By Markov's inequality, the probability that
all players are happy by iteration $\gamma$ is at least $1 - 1/100$.
Setting $\alpha = O(n^\epsilon \cdot \log^3(n))$ and
using our previous bound of $\beta = O(\alpha\gamma^2)$ and the additional
loss of a factor of $O(|B|)$, we obtain the approximation rate
\begin{equation*}
    O(\beta |B|) \le O(\alpha\gamma^2 \log(n))\le O(\alpha^3 \log^2(n))\le O(n^{3\epsilon} \log^{11}(n)) \ .
\end{equation*}
By rescaling $\epsilon$ we achieve the tradeoff as in Theorem~\ref{thm:santa}.

\section*{Acknowledgments}
The authors would like to thank Bundit Laekhanukit, Nick Fischer, and Karthik C.\ S.\ for insightful discussions.

\bibliographystyle{alpha}
\bibliography{biblio}

\appendix
\section{Integrality Gap for Robust Matching}
\label{sec:integrality-gap}
One may hope that a simpler approach based on the naive LP relaxation
already gives a good approximation for Robust Bipartite Maximum
Matching. In this section, we will show that the integrality gap
of the naive relaxation can be very high.

For the integrality gap example, we focus on perfect matchings and consider the LP
\begin{align*}
    \sum_{e\in E} \ell^j(e) \cdot x_e &\le L_j &\forall j\in\{1,2,\dotsc,k\} \\
    \sum_{e\in \delta^-(v)} x_e &= 1 &\forall v\in A\cup B \\
    x_e &\ge 0 &\forall e\in E \ .
\end{align*}
Integral solutions for this linear program form perfect matchings 
that satisfy the additional length constraints.
Let $k\in\N$ and
consider a graph that contains $k$ many disjoint
paths $v^{(j)}_1,v^{(j)}_2,\dotsc,v^{(j)}_{2k+1}$, $j\in \{1,2,\dotsc,k\}$.
There are edges from $v^{(j)}_i$ to $v^{(j)}_{i+1}$ for each $j\in\{1,2,\dotsc,k\}$
and $i\in\{1,2,\dotsc,2k\}$.
Furthermore, there is one vertex $s$ and $k-1$ vertices $t_1,\dotsc,t_{k-1}$.
There is an edge from $s$ to $v^{(j)}_1$ and one from $v^{(j)}_{2k+1}$ to $t_i$
for each $j\in \{1,2,\dotsc,k\}$ and $i\in\{1,2,\dotsc,k-1\}$.
The graph is bipartite, with bipartition $A = \{s\} \cup \{v^{(j)}_{2i} : i\in \{1,2,\dotsc,k\}\}$ and $B = \{t_1,\dotsc,t_{k-1}\} \cup \{v^{(j)}_{2i+1} : i\in \{0,2,\dotsc,k\}\}$.

There are $k$ length functions and the $j$th length function $\ell^j$
has $\ell^j((v^{(j)}_{2i}, v^{(j)}_{2i+1})) = 1$ for each $i\in \{1,2,\dotsc,k\}$
and zero length on all other edges.
The budget of each length function is $L_j = 1$.
Each perfect matching must match $s$ to $v^{(j)}_1$ for some $j\in\{1,2,\dotsc,k\}$.
Thus, it must also match $v^{(j)}_{2i}$ to $v^{(j)}_{2i+1}$ for each $i\in\{1,2,\dotsc,k\}$, which incurs a length of $k$ in the $j$th length function.
In all other paths $j'\neq j$ we must match $v^{(j')}_{2i-1}$ to $v^{(j')}_{2i}$
for each $i\in\{1,2,\dotsc,k\}$ and $v^{(j')}_{2k+1}$ to some $t_{\ell}$ (it does not matter which one).
Thus, there is no perfect matching with a violation less than $k$.
At the same time the LP above is feasible:
For each $j\in\{1,2,\dotsc,k\}$, we set
\begin{itemize}
    \item $x_{(s,v^{(j)}_1)} = 1/k$,
    \item $x_{(v^{(j)}_{2i}, v^{(j)}_{2i+1})} = 1/k$ for each $i\in\{1,2,\dotsc,k\}$,
    \item $x_{(v^{(j)}_{2i-1}, v^{(j)}_{2i})} = 1 - 1/k$ for each $i\in\{1,2,\dotsc,k\}$,
and 
    \item $x_{(v^{(j)}_{2k+1}, t_{\ell})} = (1 - 1/k)/(k-1)$ for each $\ell\in\{1,2,\dotsc,k-1\}$.
\end{itemize}
It is easy to check that this solution is feasible.
Note that the number of vertices is $O(k^2)$. Thus, we obtain the following.
\begin{lemma}
    The integrality gap of the naive linear programming relaxation
    for Robust Bipartite Maximum Matching is at least $\Omega(\sqrt{n})$.
\end{lemma}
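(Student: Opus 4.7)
The statement follows almost directly from the construction that precedes it, so the plan is essentially to formalize the three facts already sketched: count the number of vertices in the gadget, verify that the displayed fractional $x$ is a feasible LP solution, and prove that every integral perfect matching violates one of the length budgets by a factor of at least $k$.

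First I would count vertices. The $k$ disjoint paths contribute $k(2k+1)$ vertices, and the source $s$ together with the sinks $t_1,\dots,t_{k-1}$ contribute another $k$. So the total number of vertices is $n = k(2k+1)+k = 2k^2+2k = \Theta(k^2)$, in particular $k = \Theta(\sqrt{n})$. This is the easy bookkeeping step.

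Second, I would verify LP feasibility. Degree constraints at $s$, each $t_\ell$, and each internal path vertex $v^{(j)}_i$ are a direct arithmetic check from the assignment given in the excerpt: each vertex's incident $x$-values sum to exactly $1$. For the length constraints, only edges of the form $(v^{(j)}_{2i}, v^{(j)}_{2i+1})$ contribute to $\ell^j$, and each carries $x$-value $1/k$; since there are exactly $k$ such edges on path $j$, the total $\ell^j$-length is $k \cdot 1/k = 1 = L_j$. So all $k$ length constraints are met with equality, and the LP has a feasible (indeed integral-degree-fractional) solution of value $1$ in each constraint.

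Third, I would argue the integral lower bound. In any perfect matching $M$, the vertex $s$ is matched to some $v^{(j^\star)}_1$. Then, walking along path $j^\star$, parity forces the remaining matched pairs on that path to be $(v^{(j^\star)}_{2i}, v^{(j^\star)}_{2i+1})$ for every $i \in \{1,\dots,k\}$, since the endpoints $v^{(j^\star)}_1$ and $v^{(j^\star)}_{2k+1}$ are already spoken for (the latter must go to some $t_\ell$ because all the other $k-1$ paths $j'\neq j^\star$ absorb their own terminal vertex into a $t$ as well, after their internal parity forces the matching $(v^{(j')}_{2i-1}, v^{(j')}_{2i})$). Each of these $k$ forced edges on path $j^\star$ contributes $1$ to $\ell^{j^\star}$, so $\sum_e \ell^{j^\star}(e)\cdot \mathbb{1}_{e\in M} = k$, violating the budget $L_{j^\star}=1$ by a factor of $k$.

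The main obstacle, which is really more of a subtlety than a difficulty, is making the parity argument airtight: one has to explain why the side paths $j'\neq j^\star$ cannot ``rescue'' the matching by some unexpected reshuffling. This is handled by observing that the only cross-path edges are incident to $s$ or to the $t_\ell$'s; once $s$ commits to path $j^\star$, the side paths have no further interaction with each other and each becomes a path of odd length with one endpoint that must be matched into $\{t_1,\dots,t_{k-1}\}$, forcing the unique internal matching pattern. Combining the three steps, the LP achieves objective $1$ in every length constraint while every integral solution achieves at least $k = \Theta(\sqrt{n})$ in some length constraint, yielding an integrality gap of $\Omega(\sqrt{n})$.
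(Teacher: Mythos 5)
Your construction, fractional solution, and overall plan are exactly those of the paper, and the vertex count and the LP-feasibility check are fine. However, the parenthetical you use to justify the forced matching on path $j^\star$ is wrong as stated, and it contradicts the very conclusion you draw from it. You claim that $v^{(j^\star)}_{2k+1}$ must be matched to some $t_\ell$; but the forced pairs you assert, $(v^{(j^\star)}_{2i},v^{(j^\star)}_{2i+1})$ for $i\in\{1,\dots,k\}$, include the pair $(v^{(j^\star)}_{2k},v^{(j^\star)}_{2k+1})$, so $v^{(j^\star)}_{2k+1}$ cannot simultaneously go to a $t_\ell$. Worse, if $v^{(j^\star)}_1$ is matched to $s$ and $v^{(j^\star)}_{2k+1}$ were matched to a sink, the remaining vertices $v^{(j^\star)}_2,\dots,v^{(j^\star)}_{2k}$ would be $2k-1$ in number with edges only among themselves, so by parity no perfect matching could exist at all; the scenario you describe is vacuous rather than the one that forces the expensive edges.

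The correct forcing argument is the opposite one. Each path $j'\neq j^\star$ loses access to $s$, so its internal parity forces the pairs $(v^{(j')}_{2i-1},v^{(j')}_{2i})$ and sends $v^{(j')}_{2k+1}$ to a sink; these $k-1$ paths therefore consume all $k-1$ sinks $t_1,\dots,t_{k-1}$, so $v^{(j^\star)}_{2k+1}$ must be matched inside its path to $v^{(j^\star)}_{2k}$ (alternatively, argue directly via the odd parity of $v^{(j^\star)}_2,\dots,v^{(j^\star)}_{2k}$ noted above). Parity then forces exactly the pairs $(v^{(j^\star)}_{2i},v^{(j^\star)}_{2i+1})$ for all $i\in\{1,\dots,k\}$, giving $\ell^{j^\star}$-length $k$ against budget $1$. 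With this one-line repair your argument coincides with the paper's and the $\Omega(\sqrt{n})$ gap follows as you state.
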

\end{document}